\documentclass[twocolumn]{autart}    

\usepackage{amsfonts}
\usepackage{amsmath}
\usepackage{cases}
\usepackage{makecell}
\usepackage{booktabs}
\usepackage{caption}
\usepackage{url}
\usepackage{bm}
\usepackage{amsfonts,mathrsfs,bbm,amsmath,stmaryrd,amssymb,pifont}%
\usepackage{amscd,graphicx,array,dsfont,texdraw,tikz}%
\usepackage{bbm,mathrsfs}%
\usepackage{epstopdf}
\usepackage{algorithm}
\usepackage{algorithmicx}
\usepackage{algpseudocode}
\usepackage{afterpage}

\allowdisplaybreaks[4]
\newtheorem{theorem}{Theorem}
\newtheorem{lemma}{Lemma}
\newtheorem{proposition}{Proposition}
\newtheorem{remark}{Remark}
\newtheorem{assumption}{Assumption}

\newtheorem{corollary}{Corollary}

\DeclareMathOperator*{\bigtimes}{\text{\large $\times$}}

\definecolor{blue}{rgb}{0,0,0}

\usepackage[round]{natbib}

\begin{document}

\begin{frontmatter}
\title{Distributed Zeroth-Order Policy Gradient for Networked Multi-agent Reinforcement Learning from Human Feedback} 

\thanks[footnoteinfo]{This work was supported in part by the National Key Research and Development Program of China under Grant 2022ZD0120002 and in part by the National Natural Science Foundation of China under Grant 62233004. }
\thanks[Corresponding author]{Corresponding author.}

\author[Singapore]{Pengcheng Dai}\ead{Jldaipc@163.com},
\author[China1,Corresponding author]{He Wang}\ead{wanghe91@seu.edu.cn},
\author[USA]{Dongming Wang}\ead{wdong025@ucr.edu},
\author[China2]{Jian Qin}\ead{qinjian@scu.edu.cn},
\author[China1,China3,Corresponding author]{Wenwu Yu}\ead{wwyu@seu.edu.cn},  

\address[Singapore]{Singapore University of Technology and Design, Singapore 487372, Singapore}  
\address[China1]{Southeast University, Nanjing 211189, China}
\address[USA]{University of California, Riverside, CA 92521, USA}
\address[China2]{Sichuan University, Chengdu, 610065, China}
\address[China3]{Purple Mountain Laboratories, Nanjing 211102, China}

\begin{keyword}                           
{\color{blue}Networked multi-agent reinforcement learning}, distributed zeroth-order policy gradient algorithm, $\epsilon$-stationary convergence.              
\end{keyword}                             

\begin{abstract}                          
We study {\color{blue}a networked multi-agent reinforcement learning (NMARL) problem with human feedback in an infinite-horizon setting, where agents interact over an underlying network with localized state dependencies and} aim to collaboratively maximize the average discounted return.
{\color{blue}Existing approaches with preference
feedback are primarily developed for single-agent settings and rely on centralized training, which limits their scalability and applicability to large-scale networked multi-agent systems.}
To address this, we introduce a novel human feedback mechanism based on spatiotemporally truncated trajectories, defined as $H$-horizon trajectory pairs aggregated over each agent's $\kappa$-hop neighborhood.
Building on this, we develop a distributed zeroth-order policy gradient algorithm, where each agent estimates its local policy gradient using human preference feedback generated from both the current joint policy and a perturbed {\color{blue}joint policy} drawn from zero-mean Gaussian {\color{blue}distribution}.
Specifically, the algorithm is {\color{blue}fully} distributed, as the feedback received by each agent depends solely on the state-action information within its $\kappa$-hop neighborhood and does not require explicit reward signals or centralized control.
We further rigorously establish that the proposed algorithm converges to an $\epsilon$-stationary point with polynomial sample complexity.
Finally, simulation results in a stochastic GridWorld environment and {\color{blue}a predator-prey environment} further demonstrate that the effectiveness {\color{blue}and scalability} of the proposed algorithm in achieving collaborative optimization based solely on human preference feedback.
\end{abstract}

\end{frontmatter}

\section{Introduction}
Reinforcement learning (RL) has achieved remarkable success in a wide range of sequential decision-making problems, including smart grids~\citep{DaiTII2020,LiTCYB2019,DaiTCYB2022}, intelligent transportation~\citep{ChuTITS2020,WangTCYB2021,DaiTII2024}, wireless communications~\citep{TanTWC2021,MengTWC2020}, and recommendation systems~\citep{AfsarSurvey2022,LinSurvey2022}, etc.
Traditional RL methods~\citep{Sutton1998,Minhnature2015,Sha2022} rely on a well-defined reward signal at each time step, which is typically assumed to be known and consistent with human intentions.
\par
\begin{table*}[!ht]
\centering
\caption{{\color{blue}Comparison with existing algorithms under networked information structures.}}\label{tab:comparison}
\scriptsize
{\color{blue}\begin{tabular}{cccc}
\toprule
References & Our Algorithm~\ref{Distributedzeroth-orderpolicygradientalgorithm} &
\makecell{Scalable algorithms \\ \citep{QuCLDC2020,QuNIPS2020}}
&
\makecell{Decentralized/distributed algorithms \\ \citep{Zhangkaiqing2018,DaiTAC2025}} \\
\midrule
State-action pairs&  $\kappa$-hop neighbors  & $\kappa$-hop neighbors & Global \\
True rewards & None & Local & Local \\
$Q$-values/parameters & None & $\kappa$-hop neighbors & 1-hop neighbors \\
Human preferences & $\surd$ & $\bigtimes$ & $\bigtimes$\\
\bottomrule
\end{tabular}}
\end{table*}
However, {\color{blue}in many real-world control and decision-making applications}, the design
of a high-quality reward function is highly non-trivial and often impractical.
This fundamental challenge has motivated the development of {\color{blue}preference-based reinforcement learning~(PbRL)~\citep{Christiano2017,Wirth2017}},
where the reward signals are implicitly derived from {\color{blue}human preferences expressed over trajectory comparisons, rather than being explicitly specified.}
{\color{blue}A common instantiation of this paradigm can be described as a three-stage training pipeline.}
(i) Policy pre-training: the agent pre-trains a policy using standard RL techniques or behavioral cloning;
(ii) Reward inference: the agent collects pairwise trajectory comparisons annotated by human evaluators and trains a reward model {\color{blue}via} maximum likelihood estimation under a preference model, such as the Bradley-Terry model~\citep{Bengs2021};
{\color{blue}and} (iii) Policy optimization: the agent optimizes its policy {\color{blue}using the learned reward model via RL algorithms} such as PPO~\citep{Schulman2017}.
{\color{blue}Despite the empirical success of this pipeline},
the reward inference stage {\color{blue}remains a bottleneck: it is sample-inefficient}, susceptible to model missspecification, {\color{blue}and difficult to evaluate without ground-truth reward signals.}
To address these {\color{blue}limitations,
some preference-based policy optimization methods}~\citep{Rafailov2023,Zhangarxiv2024} have been proposed {\color{blue}to bypass} the reward modeling stage {\color{blue}and directly optimize the policy from preference
feedback}.
{\color{blue}In particular, DPO algorithm~\citep{Rafailov2023} showed that,} under the Bradley-Terry preference model and a KL-regularized RL objective, the reward function can be reparameterized in terms of the policy and a reference policy, yielding a tractable loss function directly from preference data.
{\color{blue}However, its theoretical guarantees are established under a contextual bandit assumption and do not directly extend to sequential decision-making problems.}
{\color{blue}For general RL problems}, including stochastic state transitions or infinite state-action space, a zeroth-order policy gradient approximation {\color{blue}method with theoretical guarantee was proposed} in~\citep{Zhangarxiv2024}.
{\color{blue}This approach} enables policy optimization directly
from human preference feedback, without requiring explicit reward signals or reward model inference.
\par
{\color{blue}While the preference-based policy optimization methods have achieved significant success in single-agent settings, many real-world tasks inherently require coordination among multiple agents operating under local observations and communication constraints, which are naturally modeled as networked
multi-agent reinforcement learning (NMARL)~\citep{QuCLDC2020,QuNIPS2020}.}
{\color{blue}In multi-robot navigation~\citep{Huang2024}, for instance, robots perceive only local regions and communicate solely with neighbors, yet must jointly navigate to their respective goals while avoiding inter-agent collisions.
Although individual objectives such as goal reaching and collision avoidance admit straightforward definitions, the overall coordination quality that reflects trade-offs among travel efficiency, path smoothness, and inter-agent cooperation, is inherently subjective and difficult to capture through a single scalar reward.
Similarly, in multi-intersection traffic signal
control~\citep{Zhao2026},
each intersection controller observes only local traffic conditions and communicates with neighboring intersections, yet the system must jointly optimize traffic flow across the entire network.
The overall traffic coordination quality which involves balancing throughput, fairness across intersections, and passenger comfort, is once again difficult to quantify using a single scalar reward. As a result, human preference feedback serves as a more natural supervisory signal in such networked settings.}
{\color{blue}However, the underlying networked information structure introduces additional challenges that are absent in single-agent setting, including how to properly define human preference feedback in a networked setting and how to ensure cooperative learning among agents under communication constraints.}
\par
{\color{blue}Recent works on NMARL have developed scalable actor-critic algorithms~\citep{QuCLDC2020,QuNIPS2020} and decentralized/distributed RL algorithms~\citep{Zhangkaiqing2018,DaiTAC2025} that exploit the localized interaction structure of networked systems, these methods universally assume access to true local reward signals and rely on $Q$-value or $Q$-parameter sharings among neighboring agents.
As summarized in Tables~\ref{tab:comparison}, our objective fundamentally deviates from this paradigm: explicit reward signals are completely eliminated and substituted with human preference feedback, and the requirement for $Q$-value or $Q$-parameter sharing is concurrently removed.}
\par
{\color{blue}This distinction introduces new algorithmic and theoretical challenges
that are not addressed by existing scalable or distributed RL frameworks.
Specifically, three fundamental challenges must be overcome.}
(i)~\textbf{Human feedback mechanism design.} In the absence of explicit reward signals, {\color{blue}a principled feedback mechanism must be designed to enable} each agent to leverage localized preference information {\color{blue}toward} the global cooperative objective, while remaining compatible with distributed execution under communication constraints;
(ii)~\textbf{Policy update under uncertain and noisy
preferences.} Since human feedback constitutes the sole learning signal {\color{blue}and} is inherently stochastic, and limited to sampled trajectory comparisons, trajectory pair generation strategies must be carefully designed to elicit informative preferences, and such uncertain signals must be incorporated into the policy update process in a stable and efficient manner;
(iii)~\textbf{Theoretical analysis in distributed setting.} Existing convergence results for {\color{blue}preference-based policy optimization} are primarily {\color{blue}established} for single-agent, finite-horizon {\color{blue}settings}~\citep{Zhangarxiv2024}, and do not directly extend to NMARL, where agents access only limited neighborhood state-action information rather than global information.
{\color{blue}A new analytical framework is thus required to rigorously characterize algorithmic convergence in such distributed settings.}
These challenges motivate this paper to investigate a distributed
{\color{blue}preference-based policy optimization} framework for general NMARL over an infinite horizon, where agents exhibit localized state-action interactions and aim to collaboratively maximize the average discounted return based solely on human preference feedback, without relying on explicit reward signals {\color{blue}or centralized coordination}.
The central question addressed is:
\emph{How can a provably efficient distributed {\color{blue}preference-based method} be
designed for general NMARL without explicit reward signals?}
\par
To address the aforementioned challenges, we introduce a novel human feedback mechanism grounded in the concept of spatiotemporally truncated trajectories.
By leveraging this, we incorporate zeroth-order optimization techniques for policy updates and develop a distributed zeroth-order policy gradient algorithm.
Additionally, we rigorously establish the convergence of the proposed algorithm to an $\epsilon$-stationary point. The main contributions of this work are summarized as follows.
\begin{enumerate}
\item \textbf{Human feedback mechanism design.} We propose a novel preference feedback mechanism based on spatiotemporally truncated trajectories,
    defined as $H$-horizon trajectory pairs constructed from each agent's $\kappa$-hop neighborhood.
    Each agent collects such spatiotemporally trajectories and {\color{blue}queries human evaluators for preference signals}
    that implicitly reflect cooperative performance within its neighborhood, while remaining {\color{blue}fully compatible with} distributed execution.
%

\item \textbf{Policy update under uncertain and noisy preferences.} {\color{blue}In the concrete policy update process, agents independently sample Gaussian perturbations to construct a perturbed joint policy.}
    Each agent {\color{blue}then collects two spatiotemporally truncated trajectories under the current and perturbed policies, and elicits human preference feedback through pairwise comparisons.}
    Each agent then uses the received signals
    to approximate its local policy gradient via a zeroth-order estimator, {\color{blue}enabling fully distributed} policy update without reliance on explicit reward signals or centralized coordination.
  \item \textbf{Theoretical analysis in distributed setting.}
  {\color{blue}We propose a distributed zeroth-order policy gradient algorithm for general NMARL problems, and provide rigorous theoretical guarantees for its convergence behavior.}
  {\color{blue}Specifically, we establish a rigorous characterization of} the first-order gradient bias and variance induced by Gaussian perturbations, spatiotemporal truncation, and stochastic preference feedback.
  {\color{blue}Building on this,} we prove $\epsilon$-stationary convergence of the proposed algorithm, which does not require the reward signal {\color{blue}and is applicable to fully distributed multi-agent settings.}
\end{enumerate}

\par
The rest of this paper is organized as follows. Section~\ref{SectionIII} introduces the NMARL model and its related knowledge.
Section~\ref{SectionIV} introduces a novel human feedback mechanism and design a distributed zeroth-order policy gradient
algorithm.
Section~\ref{SectionofConvergenceanalysis} provides the convergence analysis of the proposed algorithm.
The performances of the proposed algorithm in a stochastic GridWorld environment and {\color{blue}a predator-prey environment} are represented in Section~\ref{SectionSimulations}.
Finally, Section~\ref{SectionVConclusions} discusses the
concluding remarks and potential directions for future work.

\par
\textbf{Notations}: $\mathbb{R}$ is the set of reals and
$\mathbb{R}^{N}$ denotes the $N$-dimensional real vector set.
For any $x,y\in\mathbb{R}^{N}$, $\|x\|_{1}$ and $\|x\|$ represent the standard $\mathcal{L}_{1}$-norm and $\mathcal{L}_{2}$-norm of $x$, respectively.
$\|x-y\|_{\mathrm{TV}}$ denotes the total variation distance between $x$ and $y$.
$x^{\top}$ is the transpose of $x$.
Let $\Delta \in \mathbb{R}$, for scalars $a\in\mathbb{R}$, we define $\mathrm{trim}[a|\Delta]$ as $\min\{\max\{a, \Delta\}, 1 - \Delta\}$.
Similarly, for a vector $\bm{v} \in \mathbb{R}^N$, $\mathrm{trim}[\bm{v}|\Delta]$ denotes the vector obtained by applying the trimming operation element-wise.
$\mathbb{E}[\cdot]$ denotes the expectation taken over all involved variables.

\section{NMARL problem}\label{SectionIII}
In this section, we introduce the model of the NMARL problem and its related knowledge.
\subsection{The model of NMARL}\label{ThemodelofNMARL}
The NMARL problem is formally defined as $\big(\mathcal{G}(\mathcal{N},\mathcal{E}),$ $\{\mathcal{S}_{i}\}_{i\in\mathcal{N}},\{\mathcal{A}_{i}\}_{i\in\mathcal{N}},\{\mathcal{P}_{i}\}_{i\in\mathcal{N}},\{r_{i}\}_{i\in\mathcal{N}},\{\pi_{i}\}_{i\in\mathcal{N}},\gamma\big)$, where the detailed descriptions of each element are summarized as follows.
\par
\textbf{{\color{blue}Underlying} network among agents}: $\mathcal{G}\big(\mathcal{N},\mathcal{E}\big)$ is {\color{blue}an underlying} network among agents, where $\mathcal{N}=\{1,\cdots,N\}$ denotes the set of agents and the edge set $\mathcal{E}$ specifies the links between them.
Define $\mathcal{N}_{i}=\{j|e_{ij}\in\mathcal{E}\}$ as the neighborhood of agent $i$.
For {\color{blue}any} integer $\kappa\geq1$, we let $\mathcal{N}^{\kappa}_{i}$ denote the $\kappa$-hop neighborhood of agent $i$, i.e., the agents whose graph distance to agent $i$ is less than or equal to $\kappa$, including agent $i$ itself.
Let $\mathcal{N}^{\kappa}_{i,-j} = \mathcal{N}^{\kappa}_{i} \setminus \{j\}$ represent the set of agent $i$'s $\kappa$-hop neighbors excluding agent $j$.
Moreover,  $\mathcal{N}^{\kappa}_{-i}=\mathcal{N}\setminus\mathcal{N}^{\kappa}_{i}$ represents the set of agents other than the set $\mathcal{N}^{\kappa}_{i}$.
\par
\textbf{State and action}: $\mathcal{S}_{i}$ and $\mathcal{A}_{i}$ represent a finite state space and a finite action space of agent $i$, respectively.
$s_{i} \in \mathcal{S}_{i}$ and $a_{i} \in \mathcal{A}_{i}$ {\color{blue}denote the local state and the local action of agent $i$, respectively.}
Let $\bm{s} = (s_{1}, \cdots, s_{N}) \in \bm{\mathcal{S}} = \prod_{i=1}^{N} \mathcal{S}_{i}$ represent the global state and $\bm{a} = (a_{1}, \cdots, a_{N}) \in \bm{\mathcal{A}} = \prod_{i=1}^{N} \mathcal{A}_{i}$ denote the global action.
Similarly,  $s_{\mathcal{N}^{\kappa}_{i}}\in\mathcal{S}_{\mathcal{N}^{\kappa}_{i}}=\prod_{j\in\mathcal{N}^{\kappa}_{i}}\mathcal{S}_{j}$ and $a_{\mathcal{N}^{\kappa}_{i}}\in\mathcal{A}_{\mathcal{N}^{\kappa}_{i}}=\prod_{j\in\mathcal{N}^{\kappa}_{i}}\mathcal{A}_{j}$ represent the states and actions of agent $i$'s $\kappa$-hop neighbors, respectively.
\par
\textbf{State transition probability function}:
$\mathcal{P}_{i}(s'_{i}|s_{\mathcal{N}_{i}},a_{i})$ represents the state transition probability function for agent $i$, specifying that the subsequent state of agent $i$ is determined by both the current states of its neighbors and its own local action.
{\color{blue}Specifically, the} global state transition probability function for agents is defined as
$\bm{\mathcal{P}}(\bm{s}'|\bm{s},\bm{a})=\prod_{i=1}^{N}\mathcal{P}_{i}(s'_{i}|s_{\mathcal{N}_{i}},a_{i})$.
\par
\textbf{Reward functions}:
$r_{i}(s_{i},a_{i})$ represents the local reward function {\color{blue}of} agent $i$, which is usually handcrafted by domain experts to ensure it aligns with human interests.
\par
\textbf{Parameterized policy}: Each agent $i\in\mathcal{N}$ is associated with a class of localized policies $\pi_{i}(a_{i}|s_{i},\theta_{i})$ with $\theta_{i}\in\mathbb{R}^{d_{i}}$ being the policy parameter.
Let $\bm{\pi}_{\bm{\theta}}(\bm{a}|\bm{s}) = \prod_{i=1}^{N} \pi_{i}(a_{i}|s_{i}, \theta_{i})$ denote the joint policy of the agents, where $\bm{\theta} = (\theta^{\top}_{1}, \cdots, \theta^{\top}_{N})^{\top} \in \mathbb{R}^{d_{\mathrm{tot}}}$ represents the vector of joint policy {\color{blue}parameter} and $d_{\mathrm{tot}} = \sum_{i=1}^{N} d_{i}$ denotes the total dimensionality of the parameter space.
\par
\textbf{Discount factor}: $\gamma\in(0,1)$ is the discount factor.
\par
In the NMARL problem, the objective of agents is to find a joint policy parameter $\bm{\theta}$ to maximize the discounted average cumulative rewards, i.e.,
\begin{align}\label{theobjectivefunction}
\max_{\bm{\theta}}J(\bm{\theta})\triangleq&\mathbb{E}_{\bm{s}\sim\bm{\rho}}\Big[\frac{1}{N}\sum_{t=0}^{\infty}\sum_{i=1}^{N}\gamma^{t}r_{i,t}\Big|\bm{s}_{0}=\bm{s},\notag\\
&\bm{a}_{t}\sim\bm{\pi_{\theta}}(\cdot|\bm{s}_{t})\Big],
\end{align}
where $\bm{s}_{t}=(s_{1,t},\cdots,s_{N,t})$ and $\bm{a}_{t}=(a_{1,t},\cdots,a_{N,t})$ represent the global state and global action at time $t$, respectively.
$r_{i,t}=r_{i}(s_{i,t},a_{i,t})$ is the local reward of agent $i$ at time $t$ and $\bm{\rho}$ is the distribution of the initial state $\bm{s}_{0}$.
For the reward signal in NMARL problem, we make the following assumption.
\begin{assumption}\label{theassumptionofreward}
There exists a constant $R>0$ such that the instantaneous reward $r_{i,t}$ of agent $i\in\mathcal{N}$ at time $t\geq0$ satisfies $|r_{i,t}|\leq R$.
\end{assumption}
\par
Assumption~\ref{theassumptionofreward} gives a bounded of the instantaneous rewards of {\color{blue}agents}, which is {\color{blue}common in RL literature~\citep{QuCLDC2020,QuNIPS2020,Zhangkaiqing2018,DaiTAC2025}} and beneficial for ensuring the boundedness of the objective function and characterizing convergence errors.
\subsection{Some knowledge related to NMARL}\label{PreliminaryknowledgerelatedtoNMARL}
In the NMARL problem (\ref{theobjectivefunction}), for any joint policy $\bm{\pi_{\theta}}$, let $Q^{\bm{\pi_{\theta}}}(\bm{s},\bm{a})$ and $Q^{\bm{\pi_{\theta}}}_{i}(\bm{s},\bm{a})$ denote the global $Q$-function and the local $Q$-function of agent $i$, respectively. Specifically, {\color{blue}they} can be formally expressed as
\begin{align}
Q^{\bm{\pi_{\theta}}}(\bm{s},\bm{a})=&\mathbb{E}_{\bm{\pi_{\theta}}}\Big[\frac{1}{N}\sum_{t=0}^{\infty}\sum_{i=1}^{N}\gamma^{t}r_{i,t}\Big|\bm{s}_{0}=\bm{s},\bm{a}_{0}=\bm{a}\Big]\label{thedefinitionofglobalQfunction}
\end{align}
and
\begin{align}
Q^{\bm{\pi_{\theta}}}_{i}(\bm{s},\bm{a})=&\mathbb{E}_{\bm{\pi_{\theta}}}\Big[\sum_{t=0}^{\infty}\gamma^{t}r_{i,t}\Big|\bm{s}_{0}=\bm{s},\bm{a}_{0}=\bm{a}\Big],\label{thedefinitionoflocalQfunction}
\end{align}
{\color{blue}where} the global $Q$-function {\color{blue}$Q^{\bm{\pi_{\theta}}}(\bm{s},\bm{a})$ denotes} the discounted cumulative rewards averaged across all agents.
In contrast, the local $Q$-function $Q^{\bm{\pi_{\theta}}}_{i}(\bm{s},\bm{a})$ accounts solely for the discounted cumulative rewards associated with agent $i$.
{\color{blue}According to the definitions {\color{blue}in} (\ref{thedefinitionofglobalQfunction})-(\ref{thedefinitionoflocalQfunction}),} we directly have
\begin{align}\label{thedecomposeofQfunction}
Q^{\bm{\pi_{\theta}}}(\bm{s},\bm{a})=\frac{1}{N}\sum_{i=1}^{N}Q^{\bm{\pi_{\theta}}}_{i}(\bm{s},\bm{a}).
\end{align}
Define $d^{\bm{\pi_{\theta}}}_{\bm{\rho}}(\bm{s})$ as the discounted state visitation distribution of $\bm{s}\in\bm{\mathcal{S}}$ generated by the joint policy $\bm{\pi_{\theta}}$, {\color{blue}which satisfies}
\begin{align}\label{Thediscountedstatevisitationdistribution}
d^{\bm{\pi_{\theta}}}_{\bm{\rho}}(\bm{s})=(1-\gamma)\sum_{t=0}^{\infty}\gamma^{t}\mathrm{Pr}^{\bm{\pi_{\theta}}}(\bm{s}_{t}=\bm{s}|\bm{s}_{0}\sim\bm{\rho}),
\end{align}
where $\mathrm{Pr}^{\bm{\pi_{\theta}}}(\bm{s}_{t}=\bm{s}|\bm{s}_{0}\sim\bm{\rho})$ is the probability of occurrence of $\bm{s}_{t}=\bm{s}$ at time $t$ under {\color{blue}the} joint policy $\bm{\pi_{\theta}}$ and {\color{blue}the} initial state distribution $\bm{\rho}$.
\par
Based on $Q^{\bm{\pi_{\theta}}}(\bm{s},\bm{a})$ in (\ref{thedefinitionofglobalQfunction}) and $d^{\bm{\pi_{\theta}}}_{\bm{\rho}}(\bm{s})$ in (\ref{Thediscountedstatevisitationdistribution}),
the policy gradient of $J(\bm{\theta})$ with respect to $\theta_{i}$ is described as follows.
\begin{lemma}\label{thelemmaofthepolicygradienttheoreminNMARLCP}
In {\color{blue}the} NMARL problem (\ref{theobjectivefunction}), for any joint policy $\bm{\pi_{\theta}}$, the {\color{blue}gradient} of $J(\bm{\theta})$ with respect to $\theta_{i}$ is represented as
\begin{align}\label{thepolicygradienttheorem}
\nabla_{\theta_{i}}J(\bm{\theta})=&\frac{1}{1-\gamma}\mathbb{E}_{\bm{s}\sim d^{\bm{\pi_{\theta}}}_{\bm{\rho}},\bm{a}\sim\bm{\pi_{\theta}}}\Big[Q^{\bm{\pi_{\theta}}}(\bm{s},\bm{a})\notag\\
&{\color{blue}\times}\nabla_{\theta_{i}}\log\pi_{i}(a_{i}|s_{i},\theta_{i})\Big].
\end{align}
\end{lemma}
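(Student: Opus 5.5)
The plan is to reduce the statement to the standard single-agent policy gradient theorem. The NMARL problem (\ref{theobjectivefunction}) is a single MDP with finite state space $\bm{\mathcal{S}}$, finite action space $\bm{\mathcal{A}}$, transition kernel $\bm{\mathcal{P}}$, and a global reward $\bar r(\bm{s},\bm{a})=\frac{1}{N}\sum_{i=1}^{N}r_i(s_i,a_i)$. This reward is bounded by $R$ under Assumption~\ref{theassumptionofreward}. Its action-value function is exactly $Q^{\bm{\pi_{\theta}}}$ in (\ref{thedefinitionofglobalQfunction}), and $J(\bm{\theta})=\mathbb{E}_{\bm{s}\sim\bm{\rho}}[V^{\bm{\pi_\theta}}(\bm{s})]$ with $V^{\bm{\pi_\theta}}(\bm{s})=\sum_{\bm{a}}\bm{\pi_\theta}(\bm{a}|\bm{s})Q^{\bm{\pi_\theta}}(\bm{s},\bm{a})$.

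\textbf{Step 1: the recursion.} Differentiate the Bellman identity $V^{\bm{\pi_\theta}}(\bm{s})=\sum_{\bm{a}}\bm{\pi_\theta}(\bm{a}|\bm{s})\big[\bar r(\bm{s},\bm{a})+\gamma\sum_{\bm{s}'}\bm{\mathcal{P}}(\bm{s}'|\bm{s},\bm{a})V^{\bm{\pi_\theta}}(\bm{s}')\big]$ with respect to $\theta_i$. Neither $\bar r$ nor $\bm{\mathcal{P}}$ depends on $\bm{\theta}$, so this gives
\begin{align*}
\nabla_{\theta_i}V^{\bm{\pi_\theta}}(\bm{s})=\sum_{\bm{a}}\nabla_{\theta_i}\bm{\pi_\theta}(\bm{a}|\bm{s})Q^{\bm{\pi_\theta}}(\bm{s},\bm{a})+\gamma\sum_{\bm{a},\bm{s}'}\bm{\pi_\theta}(\bm{a}|\bm{s})\bm{\mathcal{P}}(\bm{s}'|\bm{s},\bm{a})\nabla_{\theta_i}V^{\bm{\pi_\theta}}(\bm{s}').
\end{align*}

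\textbf{Step 2: unrolling.} Unroll this recursion $T$ times. The remainder term is bounded by $\gamma^{T}\sup_{\bm{s}}\|\nabla_{\theta_i}V^{\bm{\pi_\theta}}(\bm{s})\|$, so it vanishes as $T\to\infty$. Averaging over $\bm{s}_0\sim\bm{\rho}$ and using (\ref{Thediscountedstatevisitationdistribution}) then yields
\begin{align*}
\nabla_{\theta_i}J(\bm{\theta})=\frac{1}{1-\gamma}\sum_{\bm{s}}d^{\bm{\pi_\theta}}_{\bm{\rho}}(\bm{s})\sum_{\bm{a}}\nabla_{\theta_i}\bm{\pi_\theta}(\bm{a}|\bm{s})Q^{\bm{\pi_\theta}}(\bm{s},\bm{a}).
\end{align*}

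\textbf{Step 3: product structure.} Because $\bm{\pi_\theta}(\bm{a}|\bm{s})=\prod_j\pi_j(a_j|s_j,\theta_j)$ and only the $i$-th factor depends on $\theta_i$, we have $\nabla_{\theta_i}\bm{\pi_\theta}(\bm{a}|\bm{s})=\bm{\pi_\theta}(\bm{a}|\bm{s})\nabla_{\theta_i}\log\pi_i(a_i|s_i,\theta_i)$ wherever $\pi_i>0$. Terms with $\pi_i(a_i|s_i,\theta_i)=0$ contribute zero, provided we assume the policy is differentiable and positive, or simply restrict the sum to the support. Substituting gives (\ref{thepolicygradienttheorem}).

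\textbf{Main obstacle.} The delicate part is Step 2: justifying differentiability of $V^{\bm{\pi_\theta}}$ and the uniform boundedness of $\nabla_{\theta_i}V^{\bm{\pi_\theta}}$ needed to discard the remainder. I would handle this with the matrix representation $V^{\bm{\pi_\theta}}=(I-\gamma P_{\bm{\pi_\theta}})^{-1}\bar r_{\bm{\pi_\theta}}$ on the finite space $\bm{\mathcal{S}}$. The inverse is a smooth function of $\bm{\theta}$, since $\|\gamma P_{\bm{\pi_\theta}}\|_\infty<1$ and $\bm{\pi_\theta}$ is differentiable in $\bm{\theta}$. Differentiating this closed form directly gives $\nabla V=(I-\gamma P)^{-1}[\nabla\bar r_{\bm{\pi}}+\gamma(\nabla P)V]$. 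Identifying $(1-\gamma)\bm{\rho}^\top(I-\gamma P)^{-1}$ with $(d^{\bm{\pi_\theta}}_{\bm{\rho}})^\top$ reproduces the same formula with no limiting argument.
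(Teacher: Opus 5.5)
Your proposal is correct and follows the same route as the paper. The paper gives no separate proof; it presents the lemma as a direct extension of the policy gradient theorem of Sutton et al.~(2000), applied to the joint MDP with averaged reward $\frac{1}{N}\sum_{i}r_i$. Your Steps 1--3, together with the closed-form $(I-\gamma P_{\bm{\pi_{\theta}}})^{-1}$ argument for differentiability, carry out exactly that extension, and the product structure of $\bm{\pi_{\theta}}$ isolates the factor $\nabla_{\theta_i}\log\pi_i(a_i|s_i,\theta_i)$.
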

\par
Lemma~\ref{thelemmaofthepolicygradienttheoreminNMARLCP}
provides an expression of the policy gradient {\color{blue}$\nabla_{\theta_{i}}J(\bm{\theta})$ in the NMARL problem, which is a direct extension of~\citep{Sutton2000}}.
{\color{blue}Note that} the computation of this expression heavily depends on the global $Q$-function, {\color{blue}which itself requires an explicit reward model to derive.}
However, in various practical tasks including dialogue generation, visual aesthetic assessment, and ethical reasoning, explicit reward models are either unobtainable or difficult to define.
Instead, human {\color{blue}preference} feedback is typically provided as an alternative supervisory signal.
{\color{blue}These challenges fundamentally limit the applicability of traditional reward-based RL methods in these domains.}
Inspired {\color{blue}by} the approaches proposed {\color{blue}in}~\citep{Rafailov2023,Zhangarxiv2024}, which utilize human preference feedback directly to train agents' policies, we aim to develop a distributed algorithm that does not rely on explicit reward signals and is applicable to the NMARL problem.

\section{Distributed zeroth-order policy gradient
algorithm}\label{SectionIV}
In this section, we introduce a novel human feedback mechanism and design a distributed zeroth-order policy gradient
algorithm for {\color{blue}the} NMARL problem, which does not rely on reward signals.
\subsection{A novel human feedback mechanism}\label{HumanFeedback}
In the NMARL problem, each agent $i\in\mathcal{N}$ maintains a local trajectory of horizon length $H$, defined as $\tau_{i}=\{(s_{i,h},a_{i,h})\}_{h=0}^{H-1}$.
{\color{blue}To capture both the spatial and temporal locality of agent interactions, we introduce a notion of spatiotemporally truncated trajectories: for each agent $i$, the spatiotemporally truncated trajectory $\tau_{\mathcal{N}^{\kappa}_{i}}=\{\tau_{j}\}_{j\in\mathcal{N}^{\kappa}_{i}}$ aggregates the local trajectories of agents within its $\kappa$-hop neighborhood $\mathcal{N}^{\kappa}_{i}$ in the underlying network $\mathcal{G}$, thereby truncating the global information both spatially (limited to $\kappa$-hop neighbors) and temporally (limited to horizon $H$).}
Define the {\color{blue}cumulative} reward of each trajectory $\tau_{\mathcal{N}^{\kappa}_{i}}$ as
\begin{align}\label{therewardoftrajectory}
\hat{r}_{i}(\tau_{\mathcal{N}^{\kappa}_{i}})=\frac{1}{N}\sum_{h=0}^{H-1}\gamma^{h}\sum_{j\in\mathcal{N}^{\kappa}_{i}}r_{j}(s_{j,h},a_{j,h}),
\end{align}
which represents the aggregated assessment of the sum of rewards of agents within agent $i$'s $\kappa$-hop neighborhood on the horizon $H$.
{\color{blue}It is worth noting that} the division by $N$ in (\ref{therewardoftrajectory}) is introduced to keep consistency with the objective function (\ref{theobjectivefunction}).
\par
Assume that each agent $i\in\mathcal{N}$ has access to human feedback in the form of a one-bit preference signal $o_{i} \in \{0,1\}$, which is {\color{blue}elicited from a pairwise comparison between two spatiotemporally truncated trajectories} $\tau_{\mathcal{N}^{\kappa}_{i},0}$ and $\tau_{\mathcal{N}^{\kappa}_{i},1}$.
Specifically, the preference signal $o_{i}$ is governed by a known link function $\sigma:\mathbb{R} \rightarrow [0,1]$, which maps {\color{blue}the cumulative reward difference between} the two trajectories to the corresponding preference probability.
That is,
\begin{align}\label{linkfunction}
\mathbb{P}(\tau_{\mathcal{N}^{\kappa}_{i},1} \succ \tau_{\mathcal{N}^{\kappa}_{i},0}) = \sigma\big(\hat{r}_{i}(\tau_{\mathcal{N}^{\kappa}_{i},1}) - \hat{r}_{i}(\tau_{\mathcal{N}^{\kappa}_{i},0})\big),
\end{align}
where $\tau_{\mathcal{N}^{\kappa}_{i},1} \succ \tau_{\mathcal{N}^{\kappa}_{i},0}$ denotes the event that human feedback favors $\tau_{\mathcal{N}^{\kappa}_{i},1}$ over $\tau_{\mathcal{N}^{\kappa}_{i},0}$.
Based on (\ref{linkfunction}), the human feedback $o_{i}$ is a random sample drawn from a Bernoulli distribution with
\begin{align}\label{linkfunctionvalue}
\mathbb{P}(o_{i}=1) = \mathbb{P}(\tau_{\mathcal{N}^{\kappa}_{i},1} \succ \tau_{\mathcal{N}^{\kappa}_{i},0}).
\end{align}
\par
{\color{blue}Among the various preference models proposed in the literature,} the Bradley-Terry model~\citep{Bengs2021} {\color{blue}has emerged as the most widely adopted framework for preference-based learning. It characterizes the probability of one trajectory being preferred over another as a logistic function of the cumulative reward difference, formally expressed} as follows:
\begin{align}\notag
\mathbb{P}(\tau_{\mathcal{N}^{\kappa}_{i},1} \succ \tau_{\mathcal{N}^{\kappa}_{i},0}) =\frac{1}{1+\exp{\big(-\hat{r}_{i}(\tau_{\mathcal{N}^{\kappa}_{i},1})+\hat{r}_{i}(\tau_{\mathcal{N}^{\kappa}_{i},0})\big)}}.
\end{align}
{\color{blue}Beyond the Bradley-Terry model, a variety of alternative preference models have been studied in the literature, including} the linear model with a linear link function, the Weibull model, the Cauchy model, and the complementary log-log model, {\color{blue}each tailored to} specific application contexts~\citep{Train2009,Greene2010}.
{\color{blue}\begin{remark}
The spatiotemporally truncated trajectory mechanism differs fundamentally from single-agent PbRL~\citep{Zhangarxiv2024}, where preference
feedback is elicited from global trajectory comparisons involving the entire system.
In contrast, each agent $i$ in the proposed mechanism elicits
preference feedback from trajectory pairs
$(\tau_{\mathcal{N}^{\kappa}_{i},0}, \tau_{\mathcal{N}^{\kappa}_{i},1})$
collected from its $\kappa$-hop neighborhood $\mathcal{N}^{\kappa}_{i}$ over $H$-horizon.
In this framework, the human evaluator of agent $i$ observes only the state-action sequences within $\kappa$-hop neighborhood and expresses a preference without access to any numerical reward values.
This localized preference feedback implicitly captures the cooperative performance within the local neighborhood, thereby promoting coordinated optimization among agents in a fully distributed manner.
\end{remark}}
\subsection{Distributed algorithm designing}
To develop a distributed algorithm,
we define the joint policy parameter of agents in the $t$-th iteration as  $\bm{\theta}_{t}=(\theta^{\top}_{1,t},\cdots,\theta^{\top}_{N,t})^{\top}$ and introduce the following assumption governing information exchange among agents.
\begin{assumption}\label{theassumptionofcommunicationdistance}
In each $t$-th iteration of the learning process, each agent $i \in \mathcal{N}$ receives {\color{blue}a pair of spatiotemporally truncated sample trajectories}  $(\tau_{\mathcal{N}^{\kappa}_{i},0},\tau_{\mathcal{N}^{\kappa}_{i},1})$ {\color{blue}generated} from {\color{blue}two} distinct policies, along with the corresponding preference signal $o_{i} \in \{0, 1\}$.
\end{assumption}
\par
{\color{blue}Under Assumption~\ref{theassumptionofcommunicationdistance}, each agent collects trajectory pairs from its $\kappa$-hop neighbors via the underlying network $\mathcal{G}$ without requiring any reward signals,} based on which the policy parameters are updated through the following steps.
\par
\textbf{Step 1. Perturbed joint policy}:
In $t$-th iteration, each agent $i$ {\color{blue}independently} samples a perturbation vector $v_{i,t}\sim\mathcal{N}(\mathbf{0}_{d_{i}}, \mathbf{I}_{d_{i}})$,
{\color{blue}and we denote the concatenated global perturbation as}
$\bm{v}_{t} = (v^{\top}_{1,t}, \cdots, v^{\top}_{N,t})^{\top}\sim\mathcal{N}(\mathbf{0}_{d_{\mathrm{tot}}},\mathbf{I}_{d_{\mathrm{tot}}})$.
{\color{blue}The resulting perturbed joint policy is then given by} $\bm{\pi}_{\bm{\theta}_{t} + \mu \bm{v}_{t}}$, {\color{blue}where $\mu>0$ denotes the perturbation distance.}
\par
\textbf{Step 2. Collect samples and preferences}:
In the $t$-th iteration, agents execute {\color{blue}both} the joint policy $\bm{\pi}_{\bm{\theta}_{t}}$ and the perturbed joint policy $\bm{\pi}_{\bm{\theta}_{t}+\mu\bm{v}_{t}}$ {\color{blue}over $K$ independent trials.}
In each trial $k\in\{1,\cdots,K\}$, each agent $i$ collects {\color{blue}a pair of} spatiotemporally truncated trajectories $\tau^{t,k}_{\mathcal{N}^{\kappa}_{i},0}\sim\bm{\pi}_{\bm{\theta}_{t}}$ and $\tau^{t,k}_{\mathcal{N}^{\kappa}_{i},1}\sim\bm{\pi}_{\bm{\theta}_{t}+\mu\bm{v}_{t}}$, and submits the pair $(\tau^{t,k}_{\mathcal{N}^{\kappa}_{i},0}, \tau^{t,k}_{\mathcal{N}^{\kappa}_{i},1})$ to $M$ human evaluators, {\color{blue}obtaining} preference feedback $(o^{t,k}_{i,1}, \cdots, o^{t,k}_{i,M})$.
The preference probability $\mathbb{P}(\tau^{t,k}_{\mathcal{N}^{\kappa}_{i},1} \succ \tau^{t,k}_{\mathcal{N}^{\kappa}_{i},0})$ is {\color{blue}then} estimated by
\begin{align}\label{estimatepreferenceprobability}
\hat{p}^{t,k}_{i}=\mathrm{trim}\Big[\frac{1}{M}\sum_{m=1}^{M}o^{t,k}_{i,m}\Big|\Delta\Big],
\end{align}
where $\Delta=\min\{\sigma(-\frac{2R(1-\gamma^{H})}{1-\gamma}),1-\sigma(\frac{2R(1-\gamma^{H})}{1-\gamma})\}$ {\color{blue}and $\mathrm{trim}\big[\frac{1}{M}\sum_{m=1}^{M}o^{t,k}_{i,m}|\Delta\big]=\min\big\{\max\{\frac{1}{M}\sum_{m=1}^{M}o^{t,k}_{i,m}, \Delta\},$ $1 - \Delta\big\}$.
This trimming operation constrains the estimated preference probability to remain bounded away from 0 and 1, which stabilizes the policy update and enhances robustness against noisy or inconsistent human annotations.
}
{\color{blue}\begin{remark}
In contrast to existing studies on preference-based
learning~\citep{Zhangarxiv2024}, which impose a unit-norm constraint on policy perturbations in single-agent settings, our approach adopts Gaussian perturbations, where each agent $i$ independently draws its local perturbation $v_{i,t}$ from $\mathcal{N}(\mathbf{0}_{d_{i}}, \mathbf{I}_{d_{i}})$.
This is motivated by the fact that satisfying the unit-norm constraint in a distributed manner requires global coordination among all agents to normalize the joint perturbation vector, which is incompatible with the local communication constraints of the networked system.
By adopting Gaussian perturbations, each agent independently generates its local perturbation without any inter-agent coordination, making the procedure fully compatible with distributed execution.
\end{remark}}

\par
\textbf{Step 3. Policy parameter update}:
{\color{blue}Upon obtaining} the estimated preference probabilities
$\{\hat{p}^{t,k}_{i}\}_{k=1}^{K}$,
{\color{blue}each agent $i$ leverages} the link function $\sigma(\cdot)$ to {\color{blue}construct} a local estimate of the policy gradient as follows:
\begin{align}\label{policygradientestimation}
\hat{g}_{i,t}=\frac{1}{K\mu}\sum_{k=1}^{K}\sigma^{-1}(\hat{p}^{t,k}_{i})v_{i,t}.
\end{align}
Based on $\hat{g}_{i,t}$ in (\ref{policygradientestimation}), the policy parameter {\color{blue}$\theta_{i,t+1}$} can be updated by
\begin{align}\label{updateofpolicyparameter}
\theta_{i,t+1}=\theta_{i,t}+\alpha\hat{g}_{i,t},
\end{align}
where $\alpha>0$ is a learning rate {\color{blue}in} policy parameter {\color{blue}update}.
\par
Following the Steps 1 to 3 outlined above,
a distributed zeroth-order policy gradient algorithm
is proposed in Algorithm~\ref{Distributedzeroth-orderpolicygradientalgorithm}.
\begin{algorithm}[H]
\caption{Distributed Zeroth-Order Policy Gradient Algorithm}\label{Distributedzeroth-orderpolicygradientalgorithm}
\begin{algorithmic}[1]
\Require{Initial parameter $\bm{\theta}_{0}$, learning rate $\alpha>0$, perturbed distance $\mu>0$, horizon length $H$, human evaluators $M$, trim size $\Delta=\min\{\sigma(-\frac{2R(1-\gamma^{H})}{1-\gamma}),1-\sigma(\frac{2R(1-\gamma^{H})}{1-\gamma})\}$;}
\For{$t=0,1,\cdots,T-1$}
\State\parbox[t]{0.93\linewidth}{
Each agent $i\in\mathcal{N}$ {\color{blue}independently samples a local perturbation vector $v_{i,t}\sim \mathcal{N}(\mathbf{0}_{d_{i}}, \mathbf{I}_{d_{i}})$, forming the global perturbation $\bm{v}_{t} = (v^{\top}_{1,t}, \cdots,v^{\top}_{N,t})^{\top}$}\;}
    \For{$k=1,2,\cdots,K$}
        \State\parbox[t]{0.86\linewidth}{Agents execute the joint policy $\bm{\pi}_{\bm{\theta}_{t}}$ to generate the first joint trajectory.
        {\color{blue}Subsequently, they reset the environment and execute the perturbed joint policy $\bm{\pi}_{\bm{\theta}_{t}+\mu\bm{v}_{t}}$ to generate the second joint trajectory};}
        \State\parbox[t]{0.86\linewidth}{Each agent $i\in\mathcal{N}$ {\color{blue}collects spatiotemporally truncated trajectories}
        $\tau^{t,k}_{\mathcal{N}^{\kappa}_{i},0}$ and $\tau^{t,k}_{\mathcal{N}^{\kappa}_{i},1}$ {\color{blue}from first joint trajectory and second joint trajectory, respectively, via the underlying network $\mathcal{G}$};}
        \State\parbox[t]{0.86\linewidth}{Each agent $i\in\mathcal{N}$ queries $M$ human evaluators with $(\tau^{t,k}_{\mathcal{N}^{\kappa}_{i},0},\tau^{t,k}_{\mathcal{N}^{\kappa}_{i},1})$ and obtain feedback $(o^{t,k}_{i,1},\cdots,o^{t,k}_{i,M})$;}
        \State\parbox[t]{0.86\linewidth}{Each agent $i\in\mathcal{N}$ estimates preference probability as $\hat{p}^{t,k}_{i}$ in (\ref{estimatepreferenceprobability});}
    \EndFor
    \State\parbox[t]{0.93\linewidth}{Each agent $i\in\mathcal{N}$ estimates its local policy gradient through (\ref{policygradientestimation});}
    \State\parbox[t]{0.93\linewidth}{Each agent $i\in\mathcal{N}$ updates its local policy parameter by (\ref{updateofpolicyparameter});}
\EndFor
\State\textbf{Output:} Final policy parameter $\bm{\theta}_{T}$
\end{algorithmic}
\end{algorithm}

\par
Algorithm~\ref{Distributedzeroth-orderpolicygradientalgorithm} is a distributed human preference-based algorithm specifically designed for the NMARL problem.
{\color{blue}It is worth
noting that the Algorithm~\ref{Distributedzeroth-orderpolicygradientalgorithm} is communication-efficient by design: perturbation sampling (see Line~2) and policy parameter update (see Lines~9-10) are performed entirely independently by each agent without any inter-agent coordination, and inter-agent communication occurs only during the trajectory collection stage (see Line~5), where each agent receives the state-action information from its $\kappa$-hop neighbors via local message passing over the underlying network $\mathcal{G}$.}
This design significantly reduces the communication overhead for agents and enhances the scalability of the algorithm in large-scale systems.
{\color{blue}\begin{remark}
The online human preference feedback in the proposed Algorithm~\ref{Distributedzeroth-orderpolicygradientalgorithm} is a standard abstraction in the PbRL literature~\citep{Christiano2017,Wirth2017,Zhangarxiv2024}.
We emphasize that ``human preference feedback'' should
not be exclusively interpreted as requiring a real-time human annotation for every trajectory pair.
Rather, it encompasses general preference signals encoding human intentions, including feedback from a pre-trained preference model, thereby enabling broad applicability to networked multi-agent settings where direct reward specification is impractical.
\end{remark}}
\par
{\color{blue}To further contextualize the computational efficiency of the proposed algorithm, we compare Algorithm~\ref{Distributedzeroth-orderpolicygradientalgorithm} with the preference-based policy gradient method (i.e., PG-RLHF) in~\citep{Duyihan2024}.
Table~\ref{tab:samplecomplexitycomparison} compares the sample
complexity and number of preference queries between the proposed algorithm and PG-RLHF, where $T$ and $K$ denote the number of policy update iterations and pairwise trajectory pairs per iteration, respectively, and $M$ denotes the number of preference feedback queries per trajectory pair.
For PG-RLHF, $K_{1}$, $K_{2}$, and $T_{1}$ denote the number of samples for covariance matrix update stage, the number of samples in the stochastic gradient descent stage, and the number of iterations in the inner-loop of natural policy gradient update stage, respectively.
As shown in Table~\ref{tab:samplecomplexitycomparison}, the proposed Algorithm~\ref{Distributedzeroth-orderpolicygradientalgorithm} achieves a more compact sample complexity $\mathcal{O}(TK)$ than
$\mathcal{O}(TK_{1}+TK+TT_{1}K_{2})$ in PG-RLHF, as the latter requires multiple additional estimation stages.
Nevertheless, the higher query complexity $\mathcal{O}(TKM)$ is a natural and acceptable trade-off: by eliminating the multiple estimation stages required in PG-RLHF, the proposed algorithm exchanges additional preference queries for a simpler algorithmic structure.}
\begin{table}[!ht]
\centering
\caption{{\color{blue}Comparison of sample complexity and the number of queries between Algorithm~\ref{Distributedzeroth-orderpolicygradientalgorithm} and PG-RLHF in \citep{Duyihan2024}.}}\label{tab:samplecomplexitycomparison}
\scriptsize
{\color{blue}\begin{tabular}{ccc}
\toprule
References & Algorithm~\ref{Distributedzeroth-orderpolicygradientalgorithm} &
 PG-RLHF~\citep{Duyihan2024}\\
\midrule
Samples &  $\mathcal{O}(TK)$  & $\mathcal{O}(TK_{1}+TK+TT_{1}K_{2})$ \\
Queries & $\mathcal{O}(TKM)$ & $\mathcal{O}(TK)$ \\
\bottomrule
\end{tabular}}
\end{table}

{\color{blue}\begin{remark}
Note that the distributed zeroth-order policy gradient estimation is adopted for two main reasons.
First, as discussed above and illustrated in Table~\ref{tab:samplecomplexitycomparison},
it avoids reward model learning and leads to a more compact sample complexity than first-order alternatives.
Second, it is naturally compatible with trajectory-level human preference feedback. Since human evaluators compare sample trajectories rather than providing step-level rewards,
first-order methods that rely on per-step reward signals or $Q$-value estimates are not directly applicable.
In contrast, zeroth-order estimation approximates the policy gradient via trajectory-level perturbations, which aligns well with pairwise preference feedback.
\end{remark}}

\section{Convergence analysis of Algorithm~\ref{Distributedzeroth-orderpolicygradientalgorithm}}\label{SectionofConvergenceanalysis}
{\color{blue}While the proposed Algorithm~\ref{Distributedzeroth-orderpolicygradientalgorithm} can be viewed at a high level as extending single-agent PbRL~\citep{Zhangarxiv2024} to a networked multi-agent
setting~\citep{QuCLDC2020,QuNIPS2020}, the analysis reveals three non-trivial technical obstacles that distinguish it from a direct combination of existing results.
\textbf{(i) Feedback mechanism design}: unlike single-agent PbRL where references are elicited from global trajectory comparisons, the proposed spatiotemporally truncated trajectory mechanism must simultaneously be localized, informative for cooperative performance, and consistent with the global objective, which introduces a co-design challenge in the theoretical analysis that is absent in single-agent settings.
\textbf{(ii) Gradient estimation under Gaussian perturbations}: unit-norm perturbation used in single-agent PbRL~\citep{Zhangarxiv2024} requires global coordination incompatible with distributed execution.
The adoption of Gaussian perturbations introduces fundamentally new bias and variance sources from the interplay of spatiotemporal truncation and stochastic preference feedback, making the theoretical characterization of the gradient estimator significantly more involved than in the single-agent case.
\textbf{(iii) Convergence analysis}: unlike existing NMARL methods~\citep{QuCLDC2020,QuNIPS2020} that rely on local reward signals and $Q$-value sharing, establishing convergence here requires simultaneously bounding the zeroth-order estimation error, the spatiotemporal truncation bias, and the stochastic preference feedback noise within a unified analytical framework,
which constitutes the primary technical difficulty of this work.}
\par
{\color{blue}In the sequel, we develop a unified convergence analysis of Algorithm~\ref{Distributedzeroth-orderpolicygradientalgorithm} that simultaneously addresses all three aforementioned challenges.}
Before analyzing the convergence of Algorithm~\ref{Distributedzeroth-orderpolicygradientalgorithm}, we impose the following assumption on the link function, as adopted in~\citep{Zhangarxiv2024}.
\begin{assumption}\label{theassumptionofsigma}
The link function $\sigma(\cdot)$ in the preference model is defined over the interval $[-\frac{2R(1-\gamma^{H})}{1-\gamma},\frac{2R(1-\gamma^{H})}{1-\gamma}]$  bounded within $[0,1]$, and is strictly
monotonically increasing with $\sigma(0)=\frac{1}{2}$.
The inverse link function $\sigma^{-1}(\cdot)$ is $L_{\sigma}$-Lipschitz continuous on $[\Delta,1-\Delta]$.
\end{assumption}
\par
Assumption~\ref{theassumptionofsigma} serves as the basis for inferring the reward difference in agent's $\kappa$-hop neighbors from human preference probabilities, and it can be effectively fulfilled by the Bradley-Terry model.
Moreover, under Assumption~\ref{theassumptionofsigma}, the error incurred by preference estimation generated by link function $\sigma(\cdot)$ in Algorithm~\ref{Distributedzeroth-orderpolicygradientalgorithm} is characterized in the following lemma.
\begin{lemma}\label{thelemmaofpreferenceestimation}
Suppose Assumption~\ref{theassumptionofsigma} holds.
For any $t$-th iteration, $k$-th trial, and agent $i\in\mathcal{N}$.
The trajectory
pairs $(\tau^{t,k}_{\mathcal{N}^{\kappa}_{i},0},\tau^{t,k}_{\mathcal{N}^{\kappa}_{i},1})$ that is queried from $M\geq2$ human evaluators satisfy
\begin{align}\label{thelemmaofpreferenceestimation(i)}
&\mathbb{E}\big[\big|\sigma^{-1}(\hat{p}^{t,k}_{i})-\big(\hat{r}_{i}(\tau^{t,k}_{\mathcal{N}^{\kappa}_{i},1})-\hat{r}_{i}(\tau^{t,k}_{\mathcal{N}^{\kappa}_{i},0})\big)\big|\big]\notag\\
\leq&L_{\sigma}\sqrt{\frac{2\log{M}}{M}}+\frac{4R(1-\gamma^{H})}{(1-\gamma)M^{2}}
\end{align}
and
\begin{align}\label{thelemmaofpreferenceestimation(iii)}
&\mathbb{E}\big[\big|\sigma^{-1}(\hat{p}^{t,k}_{i})-\big(\hat{r}_{i}(\tau^{t,k}_{\mathcal{N}^{\kappa}_{i},1})-\hat{r}_{i}(\tau^{t,k}_{\mathcal{N}^{\kappa}_{i},0})\big)\big|^{4}\big]\notag\\
\leq&\frac{4L^{4}_{\sigma}(\log{M})^{2}}{M^{2}}+\frac{256R^{4}(1-\gamma^{H})^{4}}{(1-\gamma)^{4}M^{2}}.
\end{align}
\end{lemma}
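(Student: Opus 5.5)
Everything below is conditional on the trajectory pair $(\tau^{t,k}_{\mathcal{N}^{\kappa}_{i},0},\tau^{t,k}_{\mathcal{N}^{\kappa}_{i},1})$; the outer expectation is taken at the end. Write $d=\hat{r}_{i}(\tau^{t,k}_{\mathcal{N}^{\kappa}_{i},1})-\hat{r}_{i}(\tau^{t,k}_{\mathcal{N}^{\kappa}_{i},0})$ and $p=\sigma(d)$.

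\textbf{Range of $d$ and $p$.} By Assumption~\ref{theassumptionofreward} and (\ref{therewardoftrajectory}), $|\hat r_i|\le \frac{|\mathcal{N}^\kappa_i|}{N}\cdot\frac{R(1-\gamma^H)}{1-\gamma}\le \frac{R(1-\gamma^H)}{1-\gamma}$. Hence $|d|\le D:=\frac{2R(1-\gamma^H)}{1-\gamma}$. By the monotonicity in Assumption~\ref{theassumptionofsigma} and the definition of $\Delta$, $p\in[\Delta,1-\Delta]$, so $\sigma^{-1}(p)=d$.

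\textbf{Reduction to the empirical frequency.} Conditionally, $\bar o=\frac1M\sum_m o^{t,k}_{i,m}$ is an average of $M$ i.i.d.\ Bernoulli$(p)$ variables. Trimming to $[\Delta,1-\Delta]$ is the projection onto an interval containing $p$, so it is nonexpansive toward $p$: $|\hat p^{t,k}_{i}-p|\le|\bar o-p|$. Lipschitzness of $\sigma^{-1}$ on $[\Delta,1-\Delta]$ then gives
\[
|\sigma^{-1}(\hat p^{t,k}_{i})-d|\le L_\sigma|\bar o-p|.
\]
There is also the crude bound $|\sigma^{-1}(\hat p^{t,k}_{i})-d|\le 2D$, because $\sigma^{-1}$ maps $[\Delta,1-\Delta]$ into $[-D,D]$.

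\textbf{Splitting on a good event.} To obtain the stated form, which carries $\log M$ terms and a $1/M^2$ tail, split on the event $\mathcal{E}=\{|\bar o-p|\le\sqrt{\log M/M}\}$ and use Hoeffding's inequality:
\[
\mathbb P(\mathcal{E}^c)\le 2\exp(-2\log M)=\frac{2}{M^{2}}.
\]
For the first moment, on $\mathcal{E}$ the error is at most $L_\sigma\sqrt{\log M/M}\le L_\sigma\sqrt{2\log M/M}$. On $\mathcal{E}^c$ the error is at most $2D\cdot 2/M^2=\frac{8R(1-\gamma^H)}{(1-\gamma)M^2}$, which is a factor 2 larger than the claimed $\frac{4R(1-\gamma^H)}{(1-\gamma)M^2}$.

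\textbf{Main obstacle: matching the constants.} I would try two adjustments.
\begin{itemize}
\item Use the threshold $\sqrt{2\log M/M}$ instead. Hoeffding then gives $\mathbb P(\mathcal{E}^c)\le 2M^{-4}\le M^{-2}/2$ for $M\ge2$, so the tail contributes at most $D\cdot M^{-2}$. This yields exactly (\ref{thelemmaofpreferenceestimation(i)}).
\item Alternatively, bound the tail error by $D$ plus the deviation, using that $\hat p^{t,k}_{i}$ and $p$ lie on an interval of width at most $1-2\Delta$.
\end{itemize}

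\textbf{Fourth moment.} The same split applies. On $\mathcal{E}$ with threshold $\sqrt{2\log M/M}$, the error is at most $L_\sigma^4(2\log M/M)^2=4L_\sigma^4(\log M)^2/M^2$, which is the first claimed term. On $\mathcal{E}^c$ the error is at most $(2D)^4\,\mathbb P(\mathcal{E}^c)\le 16D^4M^{-2}=\frac{256R^4(1-\gamma^H)^4}{(1-\gamma)^4M^2}$, which is the second claimed term. Both bounds hold conditionally, so taking the outer expectation completes (\ref{thelemmaofpreferenceestimation(iii)}).
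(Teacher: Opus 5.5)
Your final argument is correct and is essentially the paper's proof. It uses a concentration event with threshold $\sqrt{2\log M/M}$ (the paper's $\delta=M^{-2}$), bounds the error by $L_\sigma|\hat p^{t,k}_i-p|$ on the good event, and uses the crude bound $\frac{4R(1-\gamma^H)}{1-\gamma}$ times the failure probability on the bad event, for both the first and fourth moments. The only differences are that you derive the concentration of the trimmed estimator yourself via nonexpansiveness of the clip onto an interval containing $p=\sigma(d)$, where the paper imports it from Zhang and Ying; your first-moment tail term comes out a factor two smaller than stated; and the initial attempt with threshold $\sqrt{\log M/M}$ is unnecessary.
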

\par
Lemma~\ref{thelemmaofpreferenceestimation} quantifies the error arising from preference estimation, and a detailed proof is provided in {\color{blue}Appendix~\ref{ProofofLemmathelemmaofpreferenceestimation}}.
It shows that the reward difference estimation from human preference in (\ref{estimatepreferenceprobability}) is accurate as long as the number of human evaluators $M$ is large.

\subsection{Auxiliary definitions and results}
In this subsection, we introduce several auxiliary definitions, along with the relevant assumptions and theoretical results, which serve as a foundation for the subsequent proofs.
\par
Similar to the objective function $J(\bm{\theta})$ defined in (\ref{theobjectivefunction}), we define $\widehat{J}_{i}(\bm{\theta})$ as the spatiotemporally truncated objective function associated with agent $i$, which can be expressed as
\begin{align}\label{thetruncatedobjectivefunction}
\widehat{J}_{i}(\bm{\theta})=&\mathbb{E}_{\bm{s}\sim\bm{\rho}}\Big[\frac{1}{N}\sum_{t=0}^{H-1}\sum_{j\in\mathcal{N}^{\kappa}_{i}}\gamma^{t}r_{j,t}\Big|\bm{s}_{0}=\bm{s},\notag\\
&\bm{a}_{t}\sim\bm{\pi_{\theta}}(\cdot|\bm{s}_{t})\Big],
\end{align}
It should be noted that $\widehat{J}_{i}(\bm{\theta})$ depends solely on the accumulated rewards obtained from agent $i$'s $\kappa$-hop neighbors over $H$-horizon.
\par
For any joint policy $\bm{\pi}_{\bm{\theta}}$, we define $\nabla_{\bm{\theta}}J(\bm{\theta})=\big(\nabla_{\theta_{1}}J(\bm{\theta})^{\top},$ $\cdots,\nabla_{\theta_{N}}J(\bm{\theta})^{\top}\big)^{\top}$ and  $\nabla_{\bm{\theta}}\widehat{J}_{i}(\bm{\theta})=\big(\nabla_{\theta_{1}}\widehat{J}_{i}(\bm{\theta})^{\top},\cdots,$
$\nabla_{\theta_{N}}\widehat{J}_{i}(\bm{\theta})^{\top}\big)^{\top}$, and make the following assumptions, which are a set of standard assumptions commonly adopted in~\citep{Zhangarxiv2024,QuCLDC2020,QuNIPS2020,LinNIPS2021,Zhou2023UAI,Zhangrunyu2022NIPS,Ying2023NIPS}
\begin{assumption}\label{theassumptionofdistributionofs}
For any joint policy parameter $\bm{\theta}$, $d^{\bm{\pi_{\theta}}}_{\bm{\rho}}(\bm{s})$ and $\xi^{\bm{\pi_{\theta}}}_{\bm{\rho}}(\bm{s},\bm{a})$ satisfy  $\inf_{\bm{\theta}}\min_{\bm{s}\in\bm{\mathcal{S}}}d^{\bm{\pi_{\theta}}}_{\bm{\rho}}(\bm{s})>0$ and $\inf_{\bm{\theta}}\min_{(\bm{s},\bm{a})\in\bm{\mathcal{S}}\times\bm{\mathcal{A}}}\xi^{\bm{\pi_{\theta}}}_{\bm{\rho}}(\bm{s},\bm{a})>0$, where $\xi^{\bm{\pi_{\theta}}}_{\bm{\rho}}(\bm{s},\bm{a})$ is the discounted state-action visitation distribution of $(\bm{s},\bm{a})\in\bm{\mathcal{S}}\times\bm{\mathcal{A}}$ and  satisfies
\begin{align}\label{thestationarydistributionofsa}
\xi^{\bm{\pi_{\theta}}}_{\bm{\rho}}(\bm{s},\bm{a})=d^{\bm{\pi_{\theta}}}_{\bm{\rho}}(\bm{s})\bm{\pi_{\theta}}(\bm{a}|\bm{s}).
\end{align}
\end{assumption}
\par
Assumption~\ref{theassumptionofdistributionofs} {\color{blue}is a standard condition adopted in~\citep{Zhou2023UAI,Zhangrunyu2022NIPS}, which} requires that the parameterized policy explores all the states and state-action pairs with some positive probability.
{\color{blue}This assumption is commonly satisfied in finite state-action settings under sufficient exploration, for instance when a softmax policy is used and the underlying Markov decision process is ergodic.}
\begin{assumption}\label{theassumptionofpolicy}
For any joint policy $\bm{\pi_{\theta}}$ and agent $i\in\mathcal{N}$, $\nabla_{\theta_{i}}\log\pi_{i}(a_{i}|s_{i},\theta_{i})$ exists and satisfies $\|\nabla_{\theta_{i}}\log\pi_{i}(a_{i}|s_{i},\theta_{i})\|\leq B$ with $B>0$ for any  $(s_{i},a_{i})\in\mathcal{S}_{i}\times\mathcal{A}_{i}$.
\end{assumption}
\par
Assumption~\ref{theassumptionofpolicy} is a {\color{blue}common} requirement for convergence analysis in the field of MARL~{\color{blue}\citep{QuCLDC2020,QuNIPS2020,LinNIPS2021,Zhangrunyu2022NIPS,Ying2023NIPS}}.
\begin{assumption}\label{theassumptionofgradientofobjectivefunction}
For any policy parameter $\bm{\theta}$, $\nabla_{\bm{\theta}}J(\bm{\theta})$ and $\nabla_{\bm{\theta}}\widehat{J}_{i}(\bm{\theta})$
are both $L$-Lipschitz continuous in $\bm{\theta}$.
\end{assumption}
\par
Assumption~\ref{theassumptionofgradientofobjectivefunction} is standard in the convergence analysis of NMARL problems.
{\color{blue}The $L$-Lipschitz continuity of $\nabla_{\bm{\theta}}J(\bm{\theta})$ has been widely adopted in prior work~\citep{QuCLDC2020,QuNIPS2020,Ying2023NIPS}, and its validity under the softmax policy parameterization is well-established in~\citep{Zhou2023UAI,Zhangrunyu2022NIPS}.}
\begin{remark}
{\color{blue}The $L$-Lipschitz continuity of $\nabla_{\bm{\theta}}\widehat{J}_{i}(\bm{\theta})$ is consistent with Assumption~3 in~\citep{Zhangarxiv2024} established for single-agent finite-horizon objective.
Building on this, we further verify that $\nabla_{\bm{\theta}}\widehat{J}_{i}(\bm{\theta})$ satisfies an analogous condition in the multi-agent setting, with a detailed justification provided in Appendix~\ref{DiscussionoftherationalityAssumptiontheassumptionofgradientofobjectivefunction}.}
\end{remark}
\par
Under Assumptions~\ref{theassumptionofreward},~\ref{theassumptionofdistributionofs}, and~\ref{theassumptionofpolicy}, the spatiotemporally truncated objective function $\widehat{J}_{i}(\bm{\theta})$ in (\ref{thetruncatedobjectivefunction}) satisfies the following property.
\begin{theorem}\label{lemmaoftruncatederror}
Suppose Assumptions~\ref{theassumptionofreward},~\ref{theassumptionofdistributionofs}, and~\ref{theassumptionofpolicy} hold.
For any agent $i\in\mathcal{N}$ and any policy parameters $\bm{\theta}\in\mathbb{R}^{d_{\mathrm{tot}}}$, we have
\begin{align}\label{theresultoftrunctederror}
\|\nabla_{\theta_{i}}\widehat{J}_{i}(\bm{\theta})-\nabla_{\theta_{i}}J(\bm{\theta})\|\leq\frac{BR\big((H+1)\gamma^{H}+2\gamma^{\kappa+1}\big)}{(1-\gamma)^{2}}.
\end{align}
\end{theorem}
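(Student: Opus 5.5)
The plan is to split the error into a \emph{temporal truncation} term and a \emph{spatial truncation} term, and bound each with the likelihood-ratio (REINFORCE) form of the policy gradient together with the localized transition structure $\bm{\mathcal{P}}(\bm{s}'|\bm{s},\bm{a})=\prod_{j}\mathcal{P}_{j}(s'_{j}|s_{\mathcal{N}_{j}},a_{j})$.

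\textbf{Decomposition.} For each agent $j$, let $J_{j}(\bm{\theta})=\mathbb{E}\big[\sum_{t=0}^{\infty}\gamma^{t}r_{j,t}\big]$ and $J^{H}_{j}(\bm{\theta})=\mathbb{E}\big[\sum_{t=0}^{H-1}\gamma^{t}r_{j,t}\big]$, both with $\bm{s}_{0}\sim\bm{\rho}$. By (\ref{theobjectivefunction}) and (\ref{thetruncatedobjectivefunction}), $J=\frac{1}{N}\sum_{j\in\mathcal{N}}J_{j}$ and $\widehat{J}_{i}=\frac{1}{N}\sum_{j\in\mathcal{N}^{\kappa}_{i}}J^{H}_{j}$. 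Hence
\begin{align*}
\nabla_{\theta_{i}}\widehat{J}_{i}-\nabla_{\theta_{i}}J=\frac{1}{N}\sum_{j\in\mathcal{N}^{\kappa}_{i}}\big(\nabla_{\theta_{i}}J^{H}_{j}-\nabla_{\theta_{i}}J_{j}\big)-\frac{1}{N}\sum_{j\in\mathcal{N}^{\kappa}_{-i}}\nabla_{\theta_{i}}J_{j}.
\end{align*}
It then suffices to show two per-agent bounds:
\begin{itemize}
\item $\|\nabla_{\theta_{i}}J^{H}_{j}-\nabla_{\theta_{i}}J_{j}\|\le BR(H+1)\gamma^{H}/(1-\gamma)^{2}$ for every $j$;
\item $\|\nabla_{\theta_{i}}J_{j}\|\le 2BR\gamma^{\kappa+1}/(1-\gamma)^{2}$ for every $j\notin\mathcal{N}^{\kappa}_{i}$.
\end{itemize}
Since each sum has at most $N$ terms and carries the prefactor $1/N$, these two bounds give (\ref{theresultoftrunctederror}).

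\textbf{Temporal term.} Write the per-step gradient as $\nabla_{\theta_{i}}\mathbb{E}[r_{j,t}]=\mathbb{E}\big[r_{j,t}\sum_{t'=0}^{t}\nabla_{\theta_{i}}\log\pi_{i}(a_{i,t'}|s_{i,t'},\theta_{i})\big]$. This is causality: future scores have zero conditional mean, and the policy factorizes so only agent $i$'s factor depends on $\theta_{i}$. The difference $\nabla_{\theta_{i}}J^{H}_{j}-\nabla_{\theta_{i}}J_{j}$ is then exactly $-\sum_{t\ge H}\gamma^{t}\nabla_{\theta_{i}}\mathbb{E}[r_{j,t}]$. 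Assumptions~\ref{theassumptionofreward} and~\ref{theassumptionofpolicy} give $\|\nabla_{\theta_{i}}\mathbb{E}[r_{j,t}]\|\le(t+1)BR$. Summing, $BR\sum_{t\ge H}(t+1)\gamma^{t}=BR\frac{(H+1)\gamma^{H}-H\gamma^{H+1}}{(1-\gamma)^{2}}\le\frac{BR(H+1)\gamma^{H}}{(1-\gamma)^{2}}$. Exchanging gradient and infinite sum is justified by uniform convergence of this tail. Assumption~\ref{theassumptionofdistributionofs} is not really needed here; it only ensures the quantities are well defined.

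\textbf{Spatial term (main obstacle).} For $j\notin\mathcal{N}^{\kappa}_{i}$, the graph distance satisfies $d(i,j)\ge\kappa+1$. Since $s_{l,t+1}$ depends only on $s_{\mathcal{N}_{l},t}$ and $a_{l,t}$, an action $a_{i,t'}$ can influence $(s_{j,t},a_{j,t})$ only if $t-t'\ge d(i,j)\ge\kappa+1$. So I would first prove a locality lemma: the conditional law of $(s_{j,t},a_{j,t})$ given the history up to and including $a_{i,t'}$ does not depend on $a_{i,t'}$ when $t-t'\le\kappa$. Formally, this is an induction on $t-t'$ over the product transition kernel, showing the marginal of the ball $\mathcal{N}^{\kappa+1-(t-t')}_{i}$ is unaffected. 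This lemma makes the score terms with $t'>t-\kappa-1$ contribute zero.

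For the remaining terms with $t'\le t-\kappa-1$, I subtract the baseline $b_{t'}=\mathbb{E}[r_{j,t}\mid\text{history before }a_{i,t'}]$, which is allowed because the score has zero conditional mean. The centered reward satisfies $|r_{j,t}-b_{t'}|\le 2R$, so each such term is bounded by $2RB$; this is where the factor $2$ comes from. Hence $\|\nabla_{\theta_{i}}J_{j}\|\le 2RB\sum_{t\ge\kappa+1}(t-\kappa)\gamma^{t}=\frac{2BR\gamma^{\kappa+1}}{(1-\gamma)^{2}}$.

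The delicate point is stating the locality lemma rigorously so that both the vanishing of the near-time terms and the baseline trick are legitimate. I expect this spatial step to be the main obstacle.
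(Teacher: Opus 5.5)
Your proof is correct, and it follows a different route from the paper's on the spatial term. The decomposition is the same: your two sums are exactly $\nabla_{\theta_i}\widehat{J}_i-\nabla_{\theta_i}\widetilde{J}_i$ and $\nabla_{\theta_i}\widetilde{J}_i-\nabla_{\theta_i}J$, where $\widetilde{J}_i=\frac{1}{N}\sum_{j\in\mathcal{N}^{\kappa}_{i}}J_j$ is the paper's intermediate objective. Your temporal bound is the paper's reward-to-go computation with the double sum reordered, since $BR\gamma^{H}\frac{(H+1)-H\gamma}{(1-\gamma)^2}=\frac{HBR\gamma^H}{1-\gamma}+\frac{BR\gamma^H}{(1-\gamma)^2}$, which is precisely the paper's intermediate bound.

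The paper never bounds $\nabla_{\theta_i}J_j$ for $j\notin\mathcal{N}^{\kappa}_{i}$ directly. Instead it averages $Q^{\bm{\pi_{\theta}}}_{j}$ over the variables outside $\mathcal{N}^{\kappa}_{j}$, using conditional weights built from $\xi^{\bm{\pi_{\theta}}}_{\bm{\rho}}$. It then imports $\|g^{\bm{\pi_{\theta}}}_{tru,i}-\nabla_{\theta_i}J(\bm{\theta})\|\le 2BR\gamma^{\kappa+1}/(1-\gamma)^2$ from the exponential-decay lemmas of Qu et al. Finally it uses the tower property to identify $g^{\bm{\pi_{\theta}}}_{tru,i}$ with $\nabla_{\theta_i}\widetilde{J}_i$. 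Your causal-locality-plus-baseline argument replaces both the truncated-$Q$ construction and the external lemma, and reaches the same constant.

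The locality lemma you flag as the main obstacle is a routine induction on $m$. Given $(\bm{s}_{t'},\bm{a}_{t'})$, the joint law of $\{s_{l,t'+1+m}:d(i,l)>m\}$ does not involve $a_{i,t'}$. Each such $s_{l}$ is drawn from $\mathcal{P}_l$ using only states of agents at distance $>m-1$ from $i$, together with $a_{l}$, which is a function of $s_l$.

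What your route buys:
\begin{itemize}
\item It is self-contained and never uses Assumption~\ref{theassumptionofdistributionofs}. The paper needs that assumption to make its conditional weights well defined. You remark on its irrelevance only for the temporal term, but it is equally irrelevant to your spatial term.
\item It shows that the two errors come from disjoint sets of agents. So you actually obtain the sharper convex combination $\frac{|\mathcal{N}^{\kappa}_{i}|}{N}\cdot\frac{BR(H+1)\gamma^{H}}{(1-\gamma)^2}+\frac{N-|\mathcal{N}^{\kappa}_{i}|}{N}\cdot\frac{2BR\gamma^{\kappa+1}}{(1-\gamma)^2}$.
\item Your lemma is slightly conservative. $a_{i,t'}$ can first affect $(s_{j,t},a_{j,t})$ only at $t=t'+d(i,j)+1\ge t'+\kappa+2$. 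Hence the near-time terms vanish for all $t-t'\le\kappa+1$, and the spatial term improves to $2BR\gamma^{\kappa+2}/(1-\gamma)^2$.
\end{itemize}
The paper's route, in exchange, reuses an established result. It also keeps the analysis in the truncated-$Q$ language of the scalable actor-critic literature the algorithm is positioned against.
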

\par
Theorem~\ref{lemmaoftruncatederror} establishes an upper bound on the discrepancy between the exact policy gradient $\nabla_{\theta_{i}}J(\bm{\theta})$ as defined in~(\ref{thepolicygradienttheorem}) and the policy gradient $\nabla_{\theta_{i}}\widehat{J}_{i}(\bm{\theta})$ of the spatiotemporally truncated objective function $\widehat{J}_{i}(\bm{\theta})$ in (\ref{thetruncatedobjectivefunction}).
This bound decreases exponentially with respect to the spatiotemporal truncation parameters $H$ and $\kappa$, and the detailed proof is provided in {\color{blue}Appendix~\ref{ProofofLemmalemmaoftruncatederror}}.
\par
Recalling the objective function $\widehat{J}_{i}(\bm{\theta})$ in (\ref{thetruncatedobjectivefunction}), for {\color{blue}any} perturbed distance $\mu$, we define a perturbed objective function as
\begin{align}\label{perturbedvaluefunction}
\widehat{J}^{\mu}_{i}(\bm{\theta})=\mathbb{E}_{\bm{v}}[\widehat{J}_{i}(\bm{\theta}+\mu\bm{v})],
\end{align}
where $\bm{v}$ is sampled from Gaussian distribution $\mathcal{N}(\mathbf{0}_{d_{\mathrm{tot}}},\mathbf{I}_{d_{\mathrm{tot}}})$.
By the definition of $\widehat{J}^{\mu}_{i}(\bm{\theta})$ in (\ref{perturbedvaluefunction}), we can have the following {\color{blue}lemma}.
\begin{lemma}\label{thelemmaofperturbedobjectivefunction}
Suppose Assumption~\ref{theassumptionofgradientofobjectivefunction} holds.
For the perturbed objective function $\widehat{J}^{\mu}_{i}(\bm{\theta})$ defined in (\ref{perturbedvaluefunction}), we have the following results:
\par
(i) $\widehat{J}^{\mu}_{i}(\bm{\theta})$ is $L$-smooth and satisfies
\begin{align}\label{thegradientofperturbedobjectivefunction}
\nabla_{\bm{\theta}}\widehat{J}^{\mu}_{i}(\bm{\theta})=\mathbb{E}_{\bm{v}}\Big[\frac{1}{\mu}\big(\widehat{J}_{i}(\bm{\theta}+\mu\bm{v})-\widehat{J}_{i}(\bm{\theta})\big)\bm{v}\Big];
\end{align}
\par
(ii) For any $\bm{\theta}\in\mathbb{R}^{d_{\mathrm{tot}}}$, the difference in function values satisfies $|\widehat{J}^{\mu}_{i}(\bm{\theta})-\widehat{J}_{i}(\bm{\theta})|\leq\frac{L\mu^{2}d_{\mathrm{tot}}}{2}$;
\par
(iii) For any $\bm{\theta}\in\mathbb{R}^{d_{\mathrm{tot}}}$, the difference in policy gradient satisfies $\|\nabla_{\bm{\theta}}\widehat{J}^{\mu}_{i}(\bm{\theta})-\nabla_{\bm{\theta}}\widehat{J}_{i}(\bm{\theta})\|\leq L\mu\sqrt{d_{\mathrm{tot}}}$;
\par
(iv) For any $\bm{\theta}\in\mathbb{R}^{d_{\mathrm{tot}}}$, the policy gradient noise satisfies
\begin{align}
&\mathbb{E}_{\bm{v}}\Big[\Big\|\frac{1}{\mu}\big(\widehat{J}_{i}(\bm{\theta}+\mu\bm{v})-\widehat{J}_{i}(\bm{\theta})\big)\bm{v}\Big\|^{2}\Big]\notag\\
\leq&2d_{\mathrm{tot}}(d_{\mathrm{tot}}\!+\!2)\|\nabla_{\bm{\theta}}\widehat{J}_{i}(\bm{\theta})\|^{2}\!+\!\frac{\mu^{2}L^{2}d_{\mathrm{tot}}(d_{\mathrm{tot}}\!+\!2)(d_{\mathrm{tot}}\!+\!4)}{2}.\label{thegradientofperturbedobjectivefunction(iv)}
\end{align}
\end{lemma}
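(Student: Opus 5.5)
This is the standard Gaussian-smoothing lemma of Nesterov and Spokoiny. I would apply it to $f=\widehat{J}_{i}$, which has an $L$-Lipschitz gradient by Assumption~\ref{theassumptionofgradientofobjectivefunction}. I would reproduce the four parts in order, writing $d=d_{\mathrm{tot}}$ and $\bm{v}\sim\mathcal{N}(\mathbf{0}_{d},\mathbf{I}_{d})$.

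\textbf{Part (i).} Smoothness follows directly from the definition. We have $\nabla\widehat{J}^{\mu}_{i}(\bm{\theta})=\mathbb{E}_{\bm{v}}[\nabla\widehat{J}_{i}(\bm{\theta}+\mu\bm{v})]$, where differentiation under the integral is justified because $\nabla\widehat{J}_{i}$ is Lipschitz and hence of linear growth. This gives
\[
\|\nabla\widehat{J}^{\mu}_{i}(\bm{\theta})-\nabla\widehat{J}^{\mu}_{i}(\bm{\theta}')\|\le\mathbb{E}\|\nabla\widehat{J}_{i}(\bm{\theta}+\mu\bm{v})-\nabla\widehat{J}_{i}(\bm{\theta}'+\mu\bm{v})\|\le L\|\bm{\theta}-\bm{\theta}'\|.
\]
For the gradient formula, I would write $\widehat{J}^{\mu}_{i}(\bm{\theta})=(2\pi)^{-d/2}\int\widehat{J}_{i}(\bm{u})e^{-\|\bm{u}-\bm{\theta}\|^{2}/(2\mu^{2})}\mu^{-d}\,d\bm{u}$. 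Differentiating the Gaussian kernel in $\bm{\theta}$ gives $\nabla\widehat{J}^{\mu}_{i}(\bm{\theta})=\frac{1}{\mu}\mathbb{E}[\widehat{J}_{i}(\bm{\theta}+\mu\bm{v})\bm{v}]$. Subtracting $\frac{1}{\mu}\widehat{J}_{i}(\bm{\theta})\mathbb{E}[\bm{v}]=0$ then yields \eqref{thegradientofperturbedobjectivefunction}.

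\textbf{Part (ii).} Since $\mathbb{E}\langle\nabla\widehat{J}_{i}(\bm{\theta}),\mu\bm{v}\rangle=0$,
\[
\widehat{J}^{\mu}_{i}(\bm{\theta})-\widehat{J}_{i}(\bm{\theta})=\mathbb{E}\big[\widehat{J}_{i}(\bm{\theta}+\mu\bm{v})-\widehat{J}_{i}(\bm{\theta})-\mu\langle\nabla\widehat{J}_{i}(\bm{\theta}),\bm{v}\rangle\big].
\]
The $L$-smoothness quadratic bound $|f(y)-f(x)-\langle\nabla f(x),y-x\rangle|\le\frac{L}{2}\|y-x\|^{2}$, together with $\mathbb{E}\|\bm{v}\|^{2}=d$, gives $\frac{L\mu^{2}d}{2}$.

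\textbf{Part (iii).} Using part (i) and $\mathbb{E}[\bm{v}\bm{v}^{\top}]=\mathbf{I}$, I would write
\[
\nabla\widehat{J}_{i}(\bm{\theta})=\mathbb{E}[\langle\nabla\widehat{J}_{i}(\bm{\theta}),\bm{v}\rangle\bm{v}],
\qquad
\nabla\widehat{J}^{\mu}_{i}(\bm{\theta})-\nabla\widehat{J}_{i}(\bm{\theta})=\frac{1}{\mu}\mathbb{E}\big[\big(\widehat{J}_{i}(\bm{\theta}+\mu\bm{v})-\widehat{J}_{i}(\bm{\theta})-\mu\langle\nabla\widehat{J}_{i}(\bm{\theta}),\bm{v}\rangle\big)\bm{v}\big].
\]
Bounding the scalar factor by $\frac{L\mu^{2}}{2}\|\bm{v}\|^{2}$ gives the bound $\frac{L\mu}{2}\mathbb{E}\|\bm{v}\|^{3}\le\frac{L\mu}{2}(d+3)^{3/2}$. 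That is weaker than the claimed $L\mu\sqrt{d}$, so I would instead use the representation $\nabla\widehat{J}^{\mu}_{i}=\mathbb{E}[\nabla\widehat{J}_{i}(\bm{\theta}+\mu\bm{v})]$ from part (i). This gives directly
\[
\|\nabla\widehat{J}^{\mu}_{i}(\bm{\theta})-\nabla\widehat{J}_{i}(\bm{\theta})\|\le\mathbb{E}\big[L\mu\|\bm{v}\|\big]\le L\mu\sqrt{\mathbb{E}\|\bm{v}\|^{2}}=L\mu\sqrt{d},
\]
by Jensen's inequality. This route is cleaner and matches the statement exactly.

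\textbf{Part (iv).} I would write
\[
\widehat{J}_{i}(\bm{\theta}+\mu\bm{v})-\widehat{J}_{i}(\bm{\theta})=\mu\langle\nabla\widehat{J}_{i}(\bm{\theta}),\bm{v}\rangle+e(\bm{v}),\qquad |e(\bm{v})|\le\frac{L\mu^{2}}{2}\|\bm{v}\|^{2},
\]
and apply $(a+b)^{2}\le2a^{2}+2b^{2}$. This gives
\[
\mathbb{E}\Big[\frac{1}{\mu^{2}}(\cdot)^{2}\|\bm{v}\|^{2}\Big]\le2\,\mathbb{E}\big[\langle\nabla\widehat{J}_{i}(\bm{\theta}),\bm{v}\rangle^{2}\|\bm{v}\|^{2}\big]+\frac{L^{2}\mu^{2}}{2}\,\mathbb{E}\|\bm{v}\|^{6}.
\]
The Gaussian moment identities give $\mathbb{E}[\langle g,\bm{v}\rangle^{2}\|\bm{v}\|^{2}]=(d+2)\|g\|^{2}$ and $\mathbb{E}\|\bm{v}\|^{6}=d(d+2)(d+4)$. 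The first identity follows from rotational invariance, taking $g$ along $e_{1}$, which gives $\mathbb{E}[v_{1}^{4}]+(d-1)\mathbb{E}[v_{1}^{2}]=d+2$. The second comes from the chi-square moments. The resulting bound is $2(d+2)\|\nabla\widehat{J}_{i}\|^{2}+\frac{\mu^{2}L^{2}d(d+2)(d+4)}{2}$, which is at most the stated bound, whose first term has the extra factor $d\ge1$.

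The only delicate points are justifying differentiation under the integral in part (i) and computing the Gaussian moments in part (iv). Neither is a real obstacle: the smoothing is well defined because a Lipschitz gradient gives $\widehat{J}_{i}$ at most quadratic growth. In fact, $\widehat{J}_{i}$ is bounded by Assumption~\ref{theassumptionofreward}, which makes every expectation finite.
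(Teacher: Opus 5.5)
Your proposal is correct and follows the paper's proof essentially step for step. In (i) you derive smoothness from $\nabla_{\bm{\theta}}\widehat{J}^{\mu}_{i}(\bm{\theta})=\mathbb{E}_{\bm{v}}[\nabla_{\bm{\theta}}\widehat{J}_{i}(\bm{\theta}+\mu\bm{v})]$ and get the gradient formula by differentiating the Gaussian kernel, which is the paper's Stein-identity step. In (ii) you use the quadratic Taylor bound with $\mathbb{E}\|\bm{v}\|^{2}=d_{\mathrm{tot}}$, and in (iii) you use that same representation with $\mathbb{E}\|\bm{v}\|\le\sqrt{d_{\mathrm{tot}}}$.

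The only deviation is in (iv). There you keep the inner product with $g=\nabla_{\bm{\theta}}\widehat{J}_{i}(\bm{\theta})$ and use $\mathbb{E}[\langle g,\bm{v}\rangle^{2}\|\bm{v}\|^{2}]=(d_{\mathrm{tot}}+2)\|g\|^{2}$. The paper instead applies Cauchy--Schwarz, $|\langle g,\bm{v}\rangle|\le\|g\|\|\bm{v}\|$, together with $\mathbb{E}\|\bm{v}\|^{4}=d_{\mathrm{tot}}(d_{\mathrm{tot}}+2)$. Your version is sharper by a factor $d_{\mathrm{tot}}$ and implies the stated bound.
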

\par
Lemma~\ref{thelemmaofperturbedobjectivefunction} presents several important properties of the perturbed objective function $\widehat{J}^{\mu}_{i}(\bm{\theta})$, with its proof detailed in {\color{blue}Appendix~\ref{ProofofLemmathelemmaofperturbedobjectivefunction}}.
These properties will lay the groundwork for the analysis of the $\epsilon$-stationary convergence of Algorithm~\ref{Distributedzeroth-orderpolicygradientalgorithm}.

\subsection{First-order gradient bias and gradient
noise}
In this subsection, we examine the first-order bias and gradient noise introduced by the estimated policy gradients $\{\hat{g}_{i,t}\}_{i\in\mathcal{N}}$ in~(\ref{policygradientestimation}).
\par
Let $\bm{\hat{g}}_{t}=\big(\hat{g}^{\top}_{1,t},\cdots,\hat{g}^{\top}_{N,t}\big)^{\top}$ and $\mathcal{F}_{t}$ denote the $\sigma$-algebra representing the filtration up to iteration $t$.
The first-order gradient bias and gradient noise caused by $\bm{\hat{g}}_{t}$ in the $t$-th iteration are respectively defined as
\begin{align}\label{theequationofthefirst-ordergradientbias}
\mathrm{Bias}_{t}=\big\langle\nabla_{\bm{\theta}}J(\bm{\theta}_{t}),\mathbb{E}\big[\nabla_{\bm{\theta}}J(\bm{\theta}_{t})-\bm{\hat{g}}_{t}\big|\mathcal{F}_{t}\big]\big\rangle
\end{align}
and
\begin{align}\label{theequationofthegradientnoise}
\mathrm{Var}_{t}=\mathbb{E}[\|\bm{\hat{g}}_{t}\|^{2}|\mathcal{F}_{t}].
\end{align}
Here, $\mathrm{Bias}_{t}$ characterizes the directional bias of the estimator relative to the true gradient $\nabla_{\bm{\theta}}J(\bm{\theta}_{t})$, {\color{blue}measuring the projection of} the expected estimation error onto the true gradient direction.
$\mathrm{Var}_{t}$ denotes the expected squared norm of $\bm{\hat{g}}_{t}$ conditioned on $\mathcal{F}_t$, quantifying the estimator's intrinsic variance.
Based on the definitions of $\mathrm{Bias}_{t}$ in (\ref{theequationofthefirst-ordergradientbias}) and $\mathrm{Var}_{t}$ in (\ref{theequationofthegradientnoise}), we are able to derive the following theorem.
\begin{theorem}\label{thetheoremofpolicygradient}
Suppose Assumptions~\ref{theassumptionofreward}-\ref{theassumptionofgradientofobjectivefunction} hold.
For joint policy $\bm{\pi}_{\bm{\theta}_{t}}$ in the $t$-th iteration of Algorithm~\ref{Distributedzeroth-orderpolicygradientalgorithm}, by selecting $M\geq\max\{e,2(\frac{R^{2}(1-\gamma^{H})^{2}}{(1-\gamma)^{2}L^{2}_{\sigma}})^{\frac{1}{3}},\exp\big(\frac{8R^{2}(1-\gamma^{H})^{2}}{(1-\gamma)^{2}L^{2}_{\sigma}}\big)\}$,
we obtain:
\par
(i)
The first-order gradient bias $\mathrm{Bias}_{t}$ satisfies
\begin{align}\label{theequationofcorollary}
\mathrm{Bias}_{t}
\leq&\big\|\nabla_{\bm{\theta}}J(\bm{\theta}_{t})\big\|\Bigg(\underbrace{\frac{BR\sqrt{N}\big((H+1)\gamma^{H}+2\gamma^{\kappa+1}\big)}{(1-\gamma)^{2}}}_{\mathrm{Bias}^{\mathrm{(i)}}_{t}}\notag\\
&+\!\underbrace{L\mu\sqrt{d_{\mathrm{tot}}N}}_{\mathrm{Bias}^{\mathrm{(ii)}}_{t}}\!+\!\underbrace{\frac{2L_{\sigma}\sqrt{d_{\mathrm{tot}}N}}{\mu}\sqrt{\frac{2\log{M}}{M}}}_{\mathrm{Bias}^{\mathrm{(iii)}}_{t}}\Bigg).
\end{align}
\par
(ii)
The gradient noise $\mathrm{Var}_{t}$ satisfies
\begin{align}
\mathrm{Var}_{t}
\leq&\underbrace{12Nd_{\mathrm{tot}}(d_{\mathrm{tot}}+2)\|\nabla_{\bm{\theta}}J(\bm{\theta}_{t})\|^{2}}_{\mathrm{Var}^{\mathrm{(i)}}_{t}}\notag\\
&+\underbrace{\frac{12B^{2}NR^{2}\big((H\!+\!1)\gamma^{H}\!+\!2\gamma^{\kappa\!+\!1}\big)^{2}d_{\mathrm{tot}}(d_{\mathrm{tot}}\!+\!2)}{(1\!-\!\gamma)^{4}}}_{\mathrm{Var}^{\mathrm{(ii)}}_{t}}\notag\\
&+\underbrace{\frac{3\mu^{2}L^{2}Nd_{\mathrm{tot}}(d_{\mathrm{tot}}+2)(d_{\mathrm{tot}}+4)}{2}}_{\mathrm{Var}^{\mathrm{(iii)}}_{t}}\notag\\
&+\underbrace{\frac{6\sum_{i=1}^{N}\sqrt{2d_{i}(d_{i}+2)}L^{2}_{\sigma}\log{M}}{\mu^{2}M}}_{\mathrm{Var}^{\mathrm{(iv)}}_{t}}\notag\\
&+\underbrace{\frac{132\sum_{i=1}^{N}\sqrt{d_{i}(d_{i}+2)}R^{2}}{\mu^{2}(1-\gamma)^{2}K}}_{\mathrm{Var}^{\mathrm{(v)}}_{t}}.\label{theequationofcorollarygradientnoise}
\end{align}
\end{theorem}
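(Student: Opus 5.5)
The plan is to decompose each local estimator $\hat g_{i,t}$ around the zeroth-order gradient of the truncated objective of agent $i$, and control the decomposition with Theorem~\ref{lemmaoftruncatederror}, Lemma~\ref{thelemmaofperturbedobjectivefunction} and Lemma~\ref{thelemmaofpreferenceestimation}. Conditioned on $\mathcal{F}_t$ and $\bm{v}_t$, let $\delta^{t,k}_i=\hat r_i(\tau^{t,k}_{\mathcal{N}^\kappa_i,1})-\hat r_i(\tau^{t,k}_{\mathcal{N}^\kappa_i,0})$. Its conditional mean is $\widehat J_i(\bm\theta_t+\mu\bm v_t)-\widehat J_i(\bm\theta_t)$, because the $H$-horizon $\kappa$-hop trajectory under the joint policy has exactly the law used in (\ref{thetruncatedobjectivefunction}). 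We then write
\begin{align*}
\hat g_{i,t}=&\underbrace{\tfrac{1}{\mu}\big(\widehat J_i(\bm\theta_t+\mu\bm v_t)-\widehat J_i(\bm\theta_t)\big)v_{i,t}}_{A_i}+\underbrace{\tfrac{1}{K\mu}\textstyle\sum_k\big(\delta^{t,k}_i-\mathbb{E}[\delta^{t,k}_i|\bm v_t]\big)v_{i,t}}_{C_i}\\
&+\underbrace{\tfrac{1}{K\mu}\textstyle\sum_k\big(\sigma^{-1}(\hat p^{t,k}_i)-\delta^{t,k}_i\big)v_{i,t}}_{D_i}.
\end{align*}
Here $A_i$ is the $i$-th block of the Gaussian smoothing estimator of $\widehat J_i$, $C_i$ is zero-mean sampling noise, and $D_i$ is the preference-estimation error.

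\textbf{Bias.} We have $\mathbb{E}[C_i|\mathcal{F}_t]=0$. By Lemma~\ref{thelemmaofperturbedobjectivefunction}(i), $\mathbb{E}[A_i|\mathcal{F}_t]=\nabla_{\theta_i}\widehat J^\mu_i(\bm\theta_t)$. We then split
\begin{align*}
\nabla_{\theta_i}J-\mathbb{E}\hat g_{i,t}=&(\nabla_{\theta_i}J-\nabla_{\theta_i}\widehat J_i)+(\nabla_{\theta_i}\widehat J_i-\nabla_{\theta_i}\widehat J^\mu_i)\\
&-\mathbb{E}[D_i|\mathcal{F}_t].
\end{align*}
\begin{itemize}
\item The first term is bounded by Theorem~\ref{lemmaoftruncatederror}.
\item The second term is at most $L\mu\sqrt{d_{\mathrm{tot}}}$ by Lemma~\ref{thelemmaofperturbedobjectivefunction}(iii).
\item For the third, Cauchy--Schwarz gives $\|\mathbb{E}D_i\|\le\frac1\mu\sqrt{\mathbb{E}\|v_{i,t}\|^2}\,\big(\mathbb{E}|\sigma^{-1}(\hat p)-\delta|^2\big)^{1/2}$. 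The second moment of the error comes from the fourth-moment bound (\ref{thelemmaofpreferenceestimation(iii)}) via Jensen. This gives roughly $\sqrt{d_i}\cdot\frac{2L_\sigma}{\mu}\sqrt{\log M/M}$, once the $R$-term $\frac{16R^2(1-\gamma^H)^2}{(1-\gamma)^2M}$ is absorbed into $L_\sigma^2\log M/M$. That absorption is where $M\ge\exp(8R^2(1-\gamma^H)^2/((1-\gamma)^2L_\sigma^2))$ is used; the condition gives $\log M\ge 8R^2(\cdot)^2/L_\sigma^2$.
\end{itemize}
Stacking over $i$ and using $\|\cdot\|$ of the block vector $\le\sqrt N\max_i$, with $d_i\le d_{\mathrm{tot}}$, yields the three terms after Cauchy--Schwarz with $\nabla_{\bm\theta}J(\bm\theta_t)$.

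\textbf{Variance.} We use $\|\hat g_{i,t}\|^2\le3(\|A_i\|^2+\|C_i\|^2+\|D_i\|^2)$.
\begin{itemize}
\item For $A_i$, note $\|A_i\|^2\le\|\frac1\mu(\widehat J_i(\bm\theta+\mu\bm v)-\widehat J_i(\bm\theta))\bm v\|^2$. Lemma~\ref{thelemmaofperturbedobjectivefunction}(iv) bounds it by $2d_{\mathrm{tot}}(d_{\mathrm{tot}}+2)\|\nabla_{\bm\theta}\widehat J_i\|^2+\frac{\mu^2L^2d(d+2)(d+4)}{2}$. Next we replace $\|\nabla_{\bm\theta}\widehat J_i\|^2$ by $2\|\nabla J\|^2+2\|\nabla\widehat J_i-\nabla J\|^2$. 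The second piece needs the truncation error on all blocks, not only block $i$, so I would re-run the argument of Theorem~\ref{lemmaoftruncatederror} for $\nabla_{\theta_j}$. Summing over $i$ then produces $\mathrm{Var}^{(i)}$ (factor $3\cdot2\cdot2=12$), $\mathrm{Var}^{(ii)}$ and $\mathrm{Var}^{(iii)}$.
\item For $C_i$, the $K$ trials are independent given $\bm v_t$ and $|\delta|\le 2R(1-\gamma^H)/(N(1-\gamma))\le2R/(1-\gamma)$. Hence $\mathbb{E}\|C_i\|^2\le\frac{1}{K\mu^2}\mathbb{E}[\mathrm{Var}(\delta)\|v_{i,t}\|^2]$. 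If the $\hat p$-dependence is folded in, a fourth-moment Cauchy--Schwarz $\sqrt{\mathbb{E}\|v_i\|^4}=\sqrt{d_i(d_i+2)}$ gives the $\mathrm{Var}^{(v)}$ shape; the constant $132$ comes from collecting these crude bounds.
\item For $D_i$, Jensen over $k$ and Cauchy--Schwarz give $\frac1{\mu^2}\sqrt{\mathbb{E}\|v_i\|^4}\sqrt{\mathbb{E}|\cdot|^4}$. Lemma~\ref{thelemmaofpreferenceestimation} combined with the $M$ conditions (including $M\ge2(\cdot)^{1/3}$ and $M\ge e$ so that $\log M\ge1$) yields $\sqrt{2d_i(d_i+2)}L_\sigma^2\log M/M$.
\end{itemize}

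The main obstacle I expect is the dependence structure. The preference errors $\sigma^{-1}(\hat p)-\delta$ depend on the trajectories, which in turn depend on $\bm v_t$, so they are not independent of $v_{i,t}$. This is why fourth moments and Cauchy--Schwarz must be used throughout rather than product expectations. A second difficulty is bounding the off-block truncation error $\|\nabla_{\bm\theta}\widehat J_i-\nabla_{\bm\theta}J\|$, which requires extending Theorem~\ref{lemmaoftruncatederror} beyond block $i$.
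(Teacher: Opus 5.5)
Apart from one step, your argument is the paper's own. The split $\hat g_{i,t}=A_i+C_i+D_i$ is its $\mathrm{Var}^{(3)}_{i,t},\mathrm{Var}^{(2)}_{i,t},\mathrm{Var}^{(1)}_{i,t}$ decomposition. The bias part is the same three-way split through Theorem~\ref{lemmaoftruncatederror} and Lemma~\ref{thelemmaofperturbedobjectivefunction}(iii), followed by the $\sqrt{N}$ stacking. Your local variants are fine:
\begin{itemize}
\item You bound $\mathbb{E}[D_i|\mathcal{F}_t]$ through second moments and the exponential condition on $M$, rather than the first-moment bound (\ref{thelemmaofpreferenceestimation(i)}) with the $(\cdot)^{1/3}$ condition. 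This even gives $\sqrt{d_i}$ instead of $\sqrt{d_{\mathrm{tot}}}$.
\item You compute $\mathbb{E}\|C_i\|^2$ from the conditional independence of the $K$ trials given $\bm{v}_t$. This gives $4d_iR^2/(K\mu^2(1-\gamma)^2)$, which implies $\mathrm{Var}^{(v)}_t$ without the fourth-moment/Rosenthal detour.
\end{itemize}
One slip: your $|\delta|\le 2R(1-\gamma^H)/(N(1-\gamma))$ misses a factor $|\mathcal{N}^{\kappa}_i|$, but the bound $2R/(1-\gamma)$ you actually use is correct.

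The step that fails is the one you flag last. To get $\mathrm{Var}^{(i)}_t$ and $\mathrm{Var}^{(ii)}_t$ from $\|\nabla_{\bm{\theta}}\widehat{J}_i\|^2\le 2\|\nabla_{\bm{\theta}}J\|^2+2\|\nabla_{\bm{\theta}}\widehat{J}_i-\nabla_{\bm{\theta}}J\|^2$, you need the \emph{full-vector} bound $\|\nabla_{\bm{\theta}}\widehat{J}_i-\nabla_{\bm{\theta}}J\|\le BR((H+1)\gamma^H+2\gamma^{\kappa+1})/(1-\gamma)^2$. Re-running Theorem~\ref{lemmaoftruncatederror} on a block $\theta_j$, $j\neq i$, cannot produce it.
\begin{itemize}
\item \textbf{Spatial half.} It works for block $i$ because every agent $l\notin\mathcal{N}^{\kappa}_i$ is at distance at least $\kappa+1$ from $i$, so $\theta_i$ affects $l$'s return only at order $\gamma^{\kappa+1}$. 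For block $j$, the dropped agents are only at distance at least $\kappa+1-\mathrm{dist}(i,j)$ from $j$, so the same argument gives $O(\gamma^{\kappa+1-\mathrm{dist}(i,j)})$. That is merely $O(\gamma)$ for $j$ on the boundary of $\mathcal{N}^{\kappa}_i$. For $j\notin\mathcal{N}^{\kappa}_i$ there is no decay at all, since agent $j$'s own return is among those dropped.
\item \textbf{Temporal half.} This part does transfer blockwise, but summing over $j$ gives only $N\big(BR(H+1)\gamma^H/(1-\gamma)^2\big)^2$, i.e.\ an extra factor $N$ in $\mathrm{Var}^{(ii)}_t$.
\end{itemize}

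The underlying issue is that $\nabla_{\bm{\theta}}\widehat{J}_i$ is the gradient of a different, local objective and need not vanish where $\nabla_{\bm{\theta}}J$ does. Meanwhile $\mathbb{E}\|A_i\|^2\to\mathbb{E}[(\bm{v}^{\top}\nabla_{\bm{\theta}}\widehat{J}_i)^2\|v_i\|^2]\ge d_i\|\nabla_{\bm{\theta}}\widehat{J}_i\|^2$ as $\mu\to0$. So this quantity has to be controlled on its own, not through a truncation comparison with $\nabla_{\bm{\theta}}J$.

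The paper's proof does not supply this either. In (\ref{thetheorem3-3-1}) it writes $\|\nabla_{\bm{\theta}}\widehat{J}_i-\nabla_{\theta_i}J+\nabla_{\theta_i}J\|^2$, mixing the $d_{\mathrm{tot}}$-dimensional gradient with block $i$, and applies Theorem~\ref{lemmaoftruncatederror} as if it bounded the full vector. A sound version of (ii) needs a different ingredient, for example:
\begin{itemize}
\item bound $\|\nabla_{\bm{\theta}}\widehat{J}_i\|$ directly, which yields an additive term not of the form $\mathrm{Var}^{(ii)}_t$; or
\item take $\kappa$ at least the graph diameter, so that $\widehat{J}_i$ no longer depends on $i$ and only the temporal error, with the extra factor $N$, remains.
\end{itemize}
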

\par
The proof of Theorem~\ref{thetheoremofpolicygradient} is provided in {\color{blue}Appendix~\ref{ProofofTheoremthetheoremofpolicygradient}}.
Part (i) of Theorem~\ref{thetheoremofpolicygradient} establishes an upper bound {\color{blue}on} the first-order gradient bias $\mathrm{Bias}_{t}$ {\color{blue}of the estimated policy gradient derived from human feedback}, where $\mathrm{Bias}^{\mathrm{(i)}}_{t}$ {\color{blue}captures the error due to} spatiotemporal truncation parameters $\kappa$ and $H$, $\mathrm{Bias}^{\mathrm{(ii)}}_{t}$ reflects the policy gradient {\color{blue}discrepancy} induced by the perturbed distance $\mu$, and $\mathrm{Bias}^{\mathrm{(iii)}}_{t}$ represents the bias in the estimated gradient $\hat{g}_{i,t}$ in~(\ref{policygradientestimation}), which depends on {\color{blue}both} $\mu$ and the $M$ preference signals collected from human evaluators.
\par
Part~(ii) characterizes the gradient noise $\mathrm{Var}_{t}$, {\color{blue}where $\mathrm{Var}^{\mathrm{(i)}}_{t}$ represents the multiplicative noise inherent in zeroth-order gradient estimation.
$\mathrm{Var}^{\mathrm{(ii)}}_{t}$ captures the error induced by the truncation parameters $\kappa$ and $H$, while $\mathrm{Var}^{\mathrm{(iii)}}_{t}$-$\mathrm{Var}^{\mathrm{(v)}}_{t}$ are all influenced by the perturbed distance $\mu$.
In particular, $\mathrm{Var}^{\mathrm{(iv)}}_{t}$ additionally depends on the number of human evaluators $M$, reflecting the preference estimation error, and $\mathrm{Var}^{\mathrm{(v)}}_{t}$ further depends on the number of trials $K$. }
\subsection{$\epsilon$-stationary convergence}
In this subsection, we establish the theoretical guarantees of Algorithm~\ref{Distributedzeroth-orderpolicygradientalgorithm} under Assumptions~\ref{theassumptionofreward}-\ref{theassumptionofgradientofobjectivefunction}. Specifically, we analyze its convergence rate and characterize the sample complexity required to attain an $\epsilon$-stationary policy.
\begin{theorem}\label{thetheoremofconvergence}
Suppose Assumptions~\ref{theassumptionofreward}-\ref{theassumptionofgradientofobjectivefunction} hold.
For joint policy $\{\bm{\pi}_{\bm{\theta}_{t}}\}_{t=0}^{T-1}$ generated by Algorithm~\ref{Distributedzeroth-orderpolicygradientalgorithm},
{\color{blue}suppose the number of human evaluators satisfies}
$M\geq\max\big\{e,2(\frac{R^{2}(1-\gamma^{H})^{2}}{(1-\gamma)^{2}L^{2}_{\sigma}})^{\frac{1}{3}},\exp\big(\frac{8R^{2}(1-\gamma^{H})^{2}}{(1-\gamma)^{2}L^{2}_{\sigma}}\big)\big\}$, {\color{blue}and the perturbation distance $\mu$ and the learning rate $\alpha$ are selected as 
$\mu^{2}=\max\big\{\frac{R}{(1-\gamma)L\sqrt{KNd_{\mathrm{tot}}}},$
$\frac{3L_{\sigma}}{L}\sqrt{\frac{\log{M}}{M}}\big\}$
and $\alpha=\frac{1}{12LNd_{\mathrm{tot}}(d_{\mathrm{tot}}+2)}$.} Then the convergence rate of Algorithm~\ref{Distributedzeroth-orderpolicygradientalgorithm} satisfies:
\begin{align}
&\frac{1}{T}\sum_{t=0}^{T-1}\mathbb{E}[\|\nabla_{\bm{\theta}}J(\bm{\theta}_{t})\|^{2}]\notag\\
=&\mathcal{O}\Big(\frac{LNRd_{\mathrm{tot}}(d_{\mathrm{tot}}+2)}{(1-\gamma)T}\notag\\
&+\frac{B^{2}R^{2}N\big((H+1)\gamma^{H}+2\gamma^{\kappa+1}\big)^{2}}{(1-\gamma)^{4}}\notag\\
&+\max\Big\{\frac{LR\sqrt{Nd_{\mathrm{tot}}}}{(1-\gamma)\sqrt{K}},LNd_{\mathrm{tot}}L_{\sigma}\sqrt{\frac{\log{M}}{M}}\Big\}\Big).\label{theresultequationoftheorem2}
\end{align}
\end{theorem}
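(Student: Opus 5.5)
The argument is the standard nonconvex stochastic-ascent descent lemma, with Theorem~\ref{thetheoremofpolicygradient} supplying the bias and variance bounds. Since $\nabla_{\bm{\theta}}J$ is $L$-Lipschitz by Assumption~\ref{theassumptionofgradientofobjectivefunction} and $\bm{\theta}_{t+1}=\bm{\theta}_{t}+\alpha\bm{\hat{g}}_{t}$ by (\ref{updateofpolicyparameter}), we have
\begin{align*}
J(\bm{\theta}_{t+1})\geq J(\bm{\theta}_{t})+\alpha\langle\nabla_{\bm{\theta}}J(\bm{\theta}_{t}),\bm{\hat{g}}_{t}\rangle-\frac{L\alpha^{2}}{2}\|\bm{\hat{g}}_{t}\|^{2}.
\end{align*}
Taking $\mathbb{E}[\cdot|\mathcal{F}_{t}]$ and writing $\langle\nabla J,\mathbb{E}[\bm{\hat{g}}_{t}|\mathcal{F}_{t}]\rangle=\|\nabla J(\bm{\theta}_{t})\|^{2}-\mathrm{Bias}_{t}$ gives
\begin{align*}
\mathbb{E}[J(\bm{\theta}_{t+1})|\mathcal{F}_{t}]\geq J(\bm{\theta}_{t})+\alpha\|\nabla J(\bm{\theta}_{t})\|^{2}-\alpha\,\mathrm{Bias}_{t}-\frac{L\alpha^{2}}{2}\mathrm{Var}_{t}.
\end{align*}
I then substitute parts (i) and (ii) of Theorem~\ref{thetheoremofpolicygradient}.

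\textbf{Controlling the bias term.} The bias has the form $\|\nabla J\|\,(b_{1}+b_{2}+b_{3})$, where $b_{1}$, $b_{2}$, $b_{3}$ are $\mathrm{Bias}^{\mathrm{(i)}}_{t}$, $\mathrm{Bias}^{\mathrm{(ii)}}_{t}$ and $\mathrm{Bias}^{\mathrm{(iii)}}_{t}$. Young's inequality gives $\|\nabla J\|(b_{1}+b_{2}+b_{3})\leq\frac14\|\nabla J\|^{2}+3(b_{1}^{2}+b_{2}^{2}+b_{3}^{2})$.

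\textbf{Controlling the variance term.} With $\alpha=\frac{1}{12LNd_{\mathrm{tot}}(d_{\mathrm{tot}}+2)}$, the multiplicative piece satisfies
\begin{align*}
\frac{L\alpha^{2}}{2}\mathrm{Var}^{\mathrm{(i)}}_{t}=\frac{\alpha}{2}\|\nabla J\|^{2}.
\end{align*}
After both absorptions, a positive coefficient $\alpha/4$ remains on $\|\nabla J(\bm{\theta}_{t})\|^{2}$. Dividing by $\alpha/4$, taking total expectation, summing over $t=0,\dots,T-1$, and telescoping then yields
\begin{align*}
\frac1T\sum_{t=0}^{T-1}\mathbb{E}\|\nabla J(\bm{\theta}_{t})\|^{2}\leq\frac{4(J^{*}-J(\bm{\theta}_{0}))}{\alpha T}+12(b_{1}^{2}+b_{2}^{2}+b_{3}^{2})+2L\alpha\sum_{j=2}^{5}\mathrm{Var}^{(j)}_{t}.
\end{align*}
Assumption~\ref{theassumptionofreward} gives $J^{*}-J(\bm{\theta}_{0})\leq 2R/(1-\gamma)$. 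The first term is therefore $\mathcal{O}\big(\frac{LNRd_{\mathrm{tot}}(d_{\mathrm{tot}}+2)}{(1-\gamma)T}\big)$.

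\textbf{Matching each remaining term to the rate.} The terms $b_{1}^{2}$ and $L\alpha\,\mathrm{Var}^{\mathrm{(ii)}}_{t}$ both give the truncation term $\frac{B^{2}R^{2}N((H+1)\gamma^{H}+2\gamma^{\kappa+1})^{2}}{(1-\gamma)^{4}}$; in the variance term the factor $d_{\mathrm{tot}}(d_{\mathrm{tot}}+2)$ cancels against $\alpha$. For the rest, $b_{2}^{2}$ and $L\alpha\mathrm{Var}^{\mathrm{(iii)}}_{t}$ are $\mathcal{O}(L^{2}\mu^{2}Nd_{\mathrm{tot}})$ (using $d_{\mathrm{tot}}+4\leq 3(d_{\mathrm{tot}}+2)$ suitably). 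The terms $b_{3}^{2}$, $L\alpha\mathrm{Var}^{\mathrm{(iv)}}_{t}$ and $L\alpha\mathrm{Var}^{\mathrm{(v)}}_{t}$ are $\mathcal{O}\big(\frac{L_{\sigma}^{2}d_{\mathrm{tot}}N\log M}{\mu^{2}M}+\frac{R^{2}}{\mu^{2}(1-\gamma)^{2}K}\cdot\frac{1}{Nd_{\mathrm{tot}}}\cdot(\ldots)\big)$, where I use $\sum_{i}\sqrt{d_{i}(d_{i}+2)}\leq d_{\mathrm{tot}}+2$.

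\textbf{Choice of $\mu$.} The stated $\mu^{2}$, a maximum of two candidates, balances $L^{2}\mu^{2}Nd_{\mathrm{tot}}$ against each $1/\mu^{2}$ term. When $\mu^{2}\geq\frac{R}{(1-\gamma)L\sqrt{KNd_{\mathrm{tot}}}}$, the $K$-dependent term is bounded by $\frac{LR\sqrt{Nd_{\mathrm{tot}}}}{(1-\gamma)\sqrt K}$. When $\mu^{2}\geq\frac{3L_{\sigma}}{L}\sqrt{\frac{\log M}{M}}$, the $M$-dependent terms are bounded by $LNd_{\mathrm{tot}}L_{\sigma}\sqrt{\log M/M}$. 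The positive term $L^{2}\mu^{2}Nd_{\mathrm{tot}}$ equals the larger of these two, up to constants. The result is the $\max\{\cdot,\cdot\}$ in (\ref{theresultequationoftheorem2}).

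\textbf{Expected obstacle.} The main difficulty is the bookkeeping in this last step. The powers of $N$ and $d_{\mathrm{tot}}$ attached to the $1/\mu^{2}$ terms must cancel, after multiplication by $\alpha$, into exactly the dimension dependence in the rate. I would handle the $K$-term and the $M$-term separately, and in each case check that the non-maximizing branch of $\mu$ only makes the term smaller. I also need to check that the lower bound on $M$ (already used in Theorem~\ref{thetheoremofpolicygradient}) is the only condition needed. The telescoping itself is routine.
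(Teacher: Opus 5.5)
Your proposal is correct and follows the paper's proof: the descent lemma, substitution of Theorem~\ref{thetheoremofpolicygradient}, absorption of $\mathrm{Var}^{\mathrm{(i)}}_{t}$ through the choice of $\alpha$, telescoping with $|J|\leq R/(1-\gamma)$, and using the max-choice of $\mu^{2}$ so that $L^{2}Nd_{\mathrm{tot}}\mu^{2}$ dominates every $1/\mu^{2}$ term; your Young's-inequality step is a tidier replacement for the paper's two-case split on whether $\|\nabla_{\bm{\theta}}J(\bm{\theta}_{t})\|\geq 4(\mathrm{Bias}^{\mathrm{(i)}}_{t}+\mathrm{Bias}^{\mathrm{(ii)}}_{t}+\mathrm{Bias}^{\mathrm{(iii)}}_{t})$. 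One harmless slip is that $\sum_{i}\sqrt{d_{i}(d_{i}+2)}\leq d_{\mathrm{tot}}+2$ fails in general (e.g.\ $N=3$, $d_{i}=1$), but $\sqrt{d_{i}(d_{i}+2)}\leq\sqrt{3}\,d_{i}$ gives the $\mathcal{O}(d_{\mathrm{tot}})$ bound you actually need.
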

\par
Theorem~\ref{thetheoremofconvergence} establishes the approximate stationary convergence of Algorithm~\ref{Distributedzeroth-orderpolicygradientalgorithm}, {\color{blue}with} the proof provided in {\color{blue}Appendix~\ref{ProofofTheoremthetheoremofconvergence}}.
To interpret the result, {\color{blue}the term} $\frac{LNRd_{\mathrm{tot}}(d_{\mathrm{tot}}+2)}{(1-\gamma)T}$ reflects the zeroth-order gradient ascent rate, $\frac{B^{2}R^{2}N\big((H+1)\gamma^{H}+2\gamma^{\kappa+1}\big)^{2}}{(1-\gamma)^{4}}$ {\color{blue}captures} the approximation error due to spatiotemporal truncation parameters $\kappa$ and $H$, $\frac{LR\sqrt{Nd_{\mathrm{tot}}}}{(1-\gamma)\sqrt{K}}$ accounts for the variance arising from approximating the truncated objective via empirical trajectory rewards over $K$ trials, and $LNd_{\mathrm{tot}}L_{\sigma}\sqrt{\frac{\log{M}}{M}}$ {\color{blue}quantifies the error} incurred in estimating the population-level preference probability from $M$ observed human feedback signals.


\par
Based on Theorem~\ref{thetheoremofconvergence}, we have the following corollary that characterizes the sample complexity {\color{blue}of Algorithm~\ref{Distributedzeroth-orderpolicygradientalgorithm}}.
\begin{corollary}\label{thecorollaryofconvergence}
Suppose Assumptions~\ref{theassumptionofreward}-\ref{theassumptionofgradientofobjectivefunction} hold.
For every $0<\epsilon<1$,
when the hyper-parameters in Algorithm~\ref{Distributedzeroth-orderpolicygradientalgorithm} satisfies $T=\mathcal{O}(\frac{1}{\epsilon})$, $\kappa=\Omega(\log{\frac{1}{\epsilon}})$, $H=\Omega(\log{\frac{1}{\epsilon}})$, $K=\mathcal{O}(\frac{1}{\epsilon^{2}})$, and $M=\mathcal{O}(\frac{\log{\frac{1}{\epsilon}}}{\epsilon^{2}})$,  Algorithm~\ref{Distributedzeroth-orderpolicygradientalgorithm} can learn an $\epsilon$-stationary policy.
\end{corollary}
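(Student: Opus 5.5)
The plan is to substitute the prescribed hyper-parameter orders into the bound (\ref{theresultequationoftheorem2}) of Theorem~\ref{thetheoremofconvergence}. We then check that each of its error terms is $\mathcal{O}(\epsilon)$. Here an $\epsilon$-stationary policy means $\frac{1}{T}\sum_{t=0}^{T-1}\mathbb{E}[\|\nabla_{\bm{\theta}}J(\bm{\theta}_{t})\|^{2}]\leq\epsilon$. Choosing an iterate uniformly at random from $\{\bm{\theta}_{t}\}_{t=0}^{T-1}$ then gives a policy whose expected squared gradient norm is at most $\epsilon$. All problem constants ($L,N,R,B,d_{\mathrm{tot}},L_{\sigma},\gamma$) are treated as fixed, and every order statement hides them.

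We handle the four terms of (\ref{theresultequationoftheorem2}) in turn.
\begin{enumerate}
\item Optimization term. $\frac{LNRd_{\mathrm{tot}}(d_{\mathrm{tot}}+2)}{(1-\gamma)T}\leq\epsilon/4$ as soon as $T\geq\frac{4LNRd_{\mathrm{tot}}(d_{\mathrm{tot}}+2)}{(1-\gamma)\epsilon}$, i.e.\ $T=\mathcal{O}(1/\epsilon)$.
\item Truncation term. Use $(a+b)^{2}\leq2a^{2}+2b^{2}$ to split it into $(H+1)^{2}\gamma^{2H}$ and $\gamma^{2(\kappa+1)}$. The second piece is at most $\epsilon/8$ (after scaling by the constants) once $\kappa\geq\frac{\log(C/\epsilon)}{2\log(1/\gamma)}$, which is $\Omega(\log\frac{1}{\epsilon})$. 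For the first piece, note that $(H+1)\gamma^{H}\leq C'\bar\gamma^{H}$ for any $\bar\gamma\in(\gamma,1)$, for instance $\bar\gamma=\sqrt{\gamma}$. Hence $H\geq c\log(C''/\epsilon)$ suffices, again $\Omega(\log\frac{1}{\epsilon})$.
\item Trajectory-sampling term. $\frac{LR\sqrt{Nd_{\mathrm{tot}}}}{(1-\gamma)\sqrt{K}}\leq\epsilon/4$ once $K\geq\frac{16L^{2}R^{2}Nd_{\mathrm{tot}}}{(1-\gamma)^{2}\epsilon^{2}}$, i.e.\ $K=\mathcal{O}(1/\epsilon^{2})$. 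Since the bound contains a maximum rather than a sum, it is enough that both entries of the maximum are small.
\item Preference-estimation term. We need $LNd_{\mathrm{tot}}L_{\sigma}\sqrt{\log M/M}\leq\epsilon/4$, i.e.\ $M/\log M\geq C/\epsilon^{2}$. We take $M=\frac{c\log(1/\epsilon)}{\epsilon^{2}}$ with $c$ large. Then $\log M=\log c+\log\log\frac{1}{\epsilon}+2\log\frac{1}{\epsilon}\leq3\log\frac{1}{\epsilon}$ for small $\epsilon$ (enlarging $c$ handles the remaining range of $\epsilon<1$). This gives $M/\log M\geq\frac{c}{3\epsilon^{2}}$, which is what we need. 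This choice is consistent with $M=\mathcal{O}(\log\frac{1}{\epsilon}/\epsilon^{2})$.
\end{enumerate}

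Two side conditions remain. First, Theorem~\ref{thetheoremofconvergence} requires a lower bound on $M$. That bound depends only on $R,\gamma,H,L_{\sigma}$, and the dependence on $H$ enters only through $(1-\gamma^{H})\leq1$, so it is at most a constant independent of $\epsilon$. The chosen $M$, which grows like $\epsilon^{-2}\log\frac{1}{\epsilon}$, exceeds it once $c$ is large or $\epsilon$ is small. Second, $\mu$ and $\alpha$ are set exactly as in Theorem~\ref{thetheoremofconvergence}, so no further conditions on them arise.

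Summing the four contributions gives an average squared gradient norm of at most $\epsilon$, which proves the corollary.

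The main obstacle is careful bookkeeping rather than new analysis. The delicate points are the $\log M$ self-reference in item 4 and the polynomial factor $(H+1)$ multiplying $\gamma^{H}$ in item 2. Both are handled by the elementary absorption arguments above.
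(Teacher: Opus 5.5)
Your proposal is correct and takes the same route as the paper, which omits the proof as a direct substitution of the hyper-parameter orders into the bound of Theorem~\ref{thetheoremofconvergence}; your term-by-term check, including the $\epsilon$-independent lower bound on $M$ and the absorption of the $(H+1)$ factor, is exactly that substitution made explicit. One small slip: $M=c\log(1/\epsilon)/\epsilon^{2}\to0$ as $\epsilon\to1$, so enlarging $c$ cannot cover all of $(0,1)$; taking $M=c\log(e/\epsilon)/\epsilon^{2}$ with $c$ at least the constant lower bound from Theorem~\ref{thetheoremofconvergence} fixes this, since then $M\geq c$ and $M/\log M\geq c/\big((\log c+3)\epsilon^{2}\big)$ for all $\epsilon\in(0,1)$.
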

\par
Corollary~\ref{thecorollaryofconvergence} reveals the sample complexity for Algorithm~\ref{Distributedzeroth-orderpolicygradientalgorithm} to converge to an $\epsilon$-stationary point.
It is a direct extension of Theorem~\ref{thetheoremofconvergence}, so its proof is omitted.

\section{Simulation}\label{SectionSimulations}
In this section, we study the empirical performance of the proposed Algorithm~\ref{Distributedzeroth-orderpolicygradientalgorithm} in a stochastic GridWorld environment~\citep{Zhangarxiv2024,Li2025ICLR,Kim2024arxiv}
{\color{blue}as well as a predator-prey environment, both of which corroborate the theoretical results.}
{\color{blue}The source code is publicly available at \texttt{https://github.com/Pengcheng-Dai/DZOPG}.}

\subsection{GridWorld environment}
We consider a cooperative task involving $N$ agents within a $5\times5$ GridWorld environment, where agents are randomly initialized {\color{blue}and collaboratively navigate toward a common goal.
The environment exhibits the following key characteristics.}
(i)~\textbf{Local observability}: each agent observes only its own position and makes decisions based solely on {\color{blue}local} information;
(ii)~\textbf{Stochastic dynamics}: agents select actions from a discrete set, {\color{blue}with movement subject to stochastic noise that diminishes as neighboring agents reach the goal};
(iii)~\textbf{Distance-aware cost}: {\color{blue}agents incur a per-step penalty proportional to their distance from the goal plus a fixed time cost, with zero reward upon and after reaching the target;}
(iv)~\textbf{Neighborhood communication}: a $\kappa$-hop communication {\color{blue}range enables} agents to exchange trajectory information for preference feedback.

\subsubsection{NMARL {\color{blue}setting}}\label{thesettingofNMARLinsimulation}
We now formally {\color{blue}formulate the NMARL model for this environment.}
\par
\textbf{State}. The local state $s_{i,t}$ of each agent $i\in\mathcal{N}=\{1,\cdots,N\}$ at time $t$ is defined as the 2-dimensional vector of its position in the grid.
The target is a specific location within the grid, defined as $s_{*}$.
\par
\textbf{Action}. At each time step, agent $i$ selects an action $a_{i,t} \in \mathcal{A}_i=\{\text{Up}, \text{Down}, \text{Left}, \text{Right}, \text{Stay}\}$, {\color{blue}corresponding to} the directional vectors $\{(0,1), (0,-1), (-1,0),(1,0),$
$ (0,0)\}$, respectively.
\par
\textbf{{\color{blue}Underlying network}}. Agents communicate via an {\color{blue}underlying} network $\mathcal{G}(\mathcal{N}, \mathcal{E})$.
Each agent $i$ has access to trajectory information from its $\kappa$-hop neighborhood $\mathcal{N}_i^\kappa$.
\par
\textbf{State transition function}. The state $s_{i,t+1}$ of agent $i$ evolves according to the following stochastic transition function:
\begin{align}
s_{i,t+1} =
\begin{cases}
s_{i,t}, & \text{if}~s_{i,t}=s_{*}\\
s_{i,t}+a_{i,t}+\varepsilon_{i,t}, & \text{otherwise},
\end{cases}
\end{align}
where $\varepsilon_{i,t}$ is a random perturbation drawn from $\{(0,0), (\pm1,0), (0,\pm1)\}$.
{\color{blue}The perturbation distribution governing agent
$i$'s movement depends on the number of its neighboring agents that have already reached the goal, formally defined as follows:}
\begin{align}
\mathbb{P}(\varepsilon_{i,t})=
\begin{cases}
1 - \chi_{i,t}, & \text{if}~\varepsilon_{i,t}=(0,0)\\
\frac{\chi_{i,t}}{4}, & \text{otherwise},
\end{cases}
\end{align}
where
$\chi_{i,t} = \chi_{\max} - (\chi_{\max} - \chi_{\min})\frac{|\{j|s_{j,t}=s_{*},\forall j\in\mathcal{N}_{i}\}|}{|\mathcal{N}_i|}$, {\color{blue}and $\chi_{\max}$ and $\chi_{\min}$ denote the maximum and minimum perturbation factors, respectively.}
\par
\textbf{Reward function}. The instantaneous reward for agent $i$ at time $t$ is defined as
\begin{align}
r_{i,t} =\left\{
\begin{array}{ll}
10, {\color{blue}\text{if}~s_{i,t}=s_{*}~\text{for the first time}} \\
0, {\color{blue}\text{if}~s_{i,t}=s_{*}~\text{after the first visit}} \\
-1 - \|s_{i,t}-s_{*}\|,\text{otherwise},
\end{array}
\right.
\end{align}
which penalizes both the time step cost $-1$ and the current distance of agent $i$ to the goal {\color{blue}before reaching it}.
{\color{blue}Specifically, the reward for an agent is $10$ when it reaches the goal for the first time, and becomes $0$ in all subsequent time steps while waiting for other agents to reach this goal.
This design encourages other agents to reach the goal as quickly as possible.}
\par
\textbf{Optimization objective}. {\color{blue}The local policy of agent $i$ is parameterized as a softmax policy, defined as}
\begin{align}
\pi_{i}(a_{i}|s_{i},\theta_{i})=&\frac{\exp{(\theta_{i,s_{i},a_{i}})}}{\sum_{a'_{i}}\exp{(\theta_{i,s_{i},a'_{i}})}},\label{policyexpress}
\end{align}
where $\theta_{i,s_{i},a_{i}}\in\mathbb{R}$ {\color{blue}denotes the policy parameter of agent $i$ associated with} the state-action pair $(s_{i},a_{i})$, and $\theta_{i}$ {\color{blue}is the concatenation of} $\theta_{i,s_{i},a_{i}}$ over all $(s_{i},a_{i})\in\mathcal{S}_{i}\times\mathcal{A}_{i}$.

The global objective {\color{blue}of agents} is to maximize {\color{blue}the objective function} $J(\bm{\theta})$ in (\ref{theobjectivefunction}).
\subsubsection{Performance of Algorithm~\ref{Distributedzeroth-orderpolicygradientalgorithm} {\color{blue}in GridWorld environment}}
In the simulation, we set $\mathcal{N}=\{1,2,3,4\}$ {\color{blue}with a chain communication network} $1 \leftrightarrow 2 \leftrightarrow 3 \leftrightarrow 4$.
The target position is $(4, 0)$, {\color{blue}and} the initial positions of agents are $(2, 1)$, $(3, 1)$, $(2, 2)$, and $(1, 0)$, respectively.
{\color{blue}The action perturbation bounds} are set to $\chi_{\text{max}} = 0.1$ and $\chi_{\text{min}} = 0.02$, and the discount factor is $\gamma = 0.9$.
{\color{blue}For Algorithm~\ref{Distributedzeroth-orderpolicygradientalgorithm}, we configure} $M = 1000$, $K = 500$, $H = 20$, $\kappa = 1$, $\alpha = 0.1$, $\mu = 0.1$, and $d_i = 125$ for all $i \in \mathcal{N}$, {\color{blue}with the local policy parameter of each agent initialized at} $\theta_{i,0} = \mathbf{0}_{125}$.
%
\begin{figure}[!ht]
    \centering
    \includegraphics[width=0.8\linewidth]{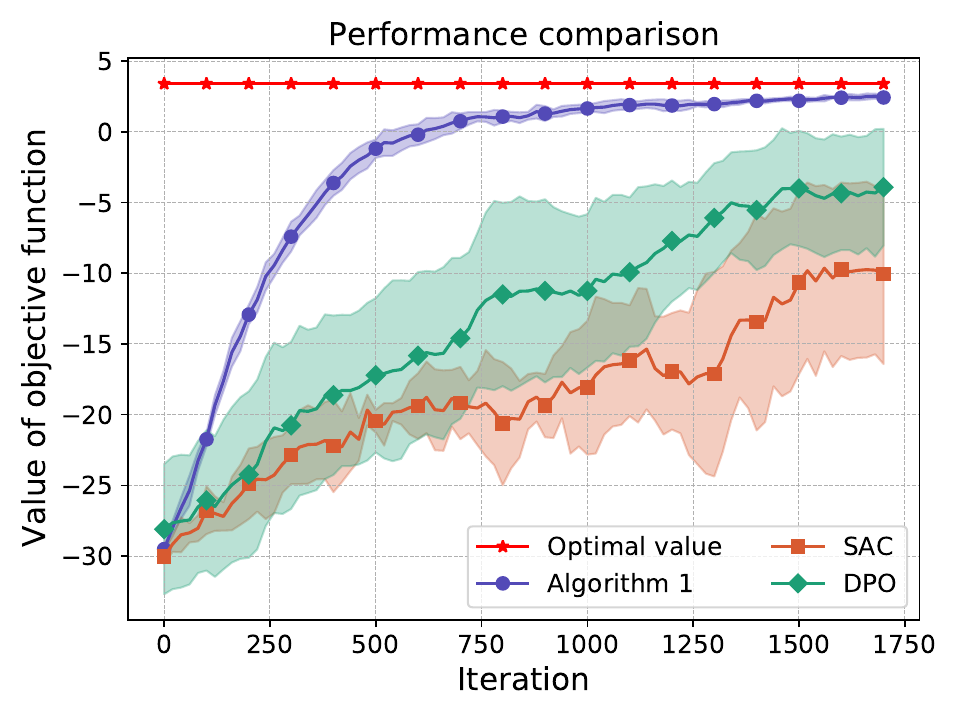}
    \caption{The evaluation of the performance of Algorithm~\ref{Distributedzeroth-orderpolicygradientalgorithm} with $M=1000$, $K=500$, $H=20$, {\color{blue}and $\kappa=1$}.}
    \label{fig:Optimal_performance}
\end{figure}
\par
{\color{blue}Since no existing distributed {\color{blue}preference-based policy optimization} algorithms are specifically tailored to the considered NMARL problem, we adopt the DPO algorithm~\citep{Rafailov2023} as a centralized preference-based baseline, and the scalable actor-critic (SAC) algorithm~\citep{QuCLDC2020} as a distributed reward-driven baseline.
In addition, to provide a reference for the best achievable performance, we include the optimal solution as a benchmark.
Fig.~\ref{fig:Optimal_performance} illustrates the performance of the proposed Algorithm~\ref{Distributedzeroth-orderpolicygradientalgorithm}, along with the DPO and SAC baselines, evaluated over 5 different random seeds.}
The solid {\color{blue}curve} represents the average performance, while the shaded area indicates the standard deviation.

{\color{blue}It can be observed that the proposed Algorithm~\ref{Distributedzeroth-orderpolicygradientalgorithm} closely approaches the optimal value, demonstrating} its effectiveness in achieving collaborative optimization solely based on human preference feedback.
{\color{blue}Moreover, compared with the DPO and SAC baselines, the proposed algorithm achieves superior performance and exhibits stable convergence behavior.}
{\color{blue}We note that a small performance gap between the proposed Algorithm~\ref{Distributedzeroth-orderpolicygradientalgorithm} and the optimal solution still exists.
This gap is primarily attributed to the distributed mechanism in the proposed algorithm, particularly the use of the truncated distance parameter $\kappa$ and the finite horizon length $H$ in the definition of the spatiotemporally truncated trajectory, which introduce an inherent approximation error.}

\subsubsection{{\color{blue}Ablation studies on sampling parameters}}
To investigate the influence of key algorithmic parameters on performance, we conduct comprehensive ablation studies on the length of trajectory $H$, {\color{blue}the} number of human preference queries $M$, and the number of trial $K$.
{\color{blue}All experiments are conducted with communication distance $\kappa=1$, and all results are averaged over 5~independent random seeds.}
\begin{figure}[!ht]
    \centering
    \includegraphics[width=0.8\linewidth]{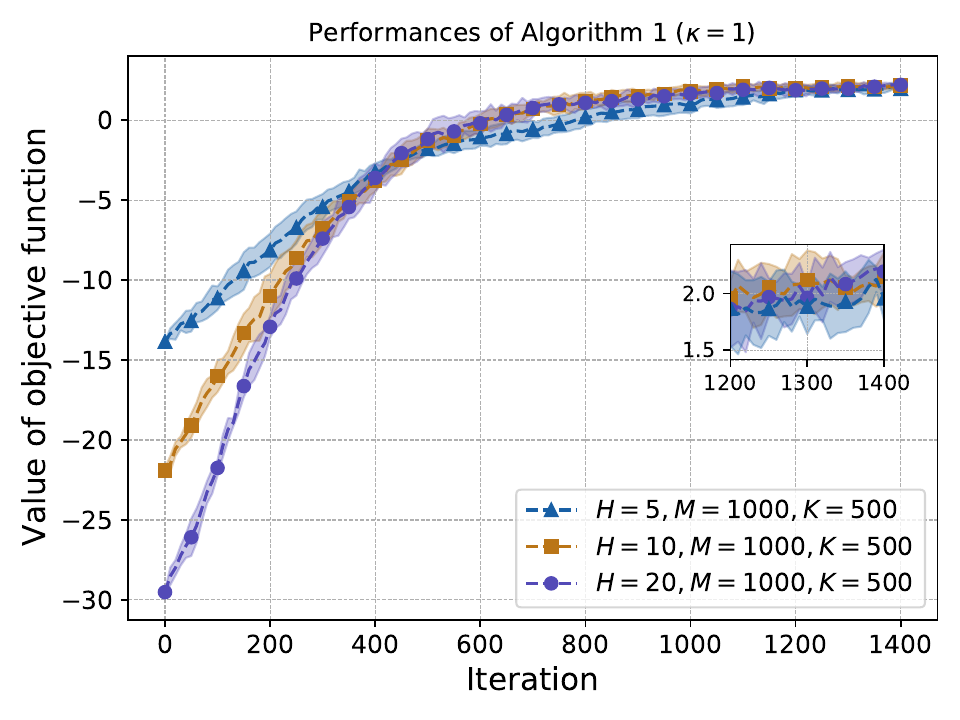}
    \caption{Ablation study on the number of trajectory length $H$ under fixed $M=1000$, and $K=500$.}
    \label{fig:ablation_H}
\end{figure}
\par
Fig.~\ref{fig:ablation_H} presents the performance of Algorithm~\ref{Distributedzeroth-orderpolicygradientalgorithm} with fixed $M=1000$, $K=500$, and varying horizon lengths $H\in\{5,10,20\}$.
{\color{blue}The results show that performance improves as $H$ increases, yet the differences are marginal.
This is primarily because agents are capable of completing the task within $H=4$ steps, meaning that even $H=5$ is sufficient for agents to collect enough informative trajectory data for effective policy optimization.}

%

\par
\begin{figure}[!ht]
    \centering
    \includegraphics[width=0.8\linewidth]{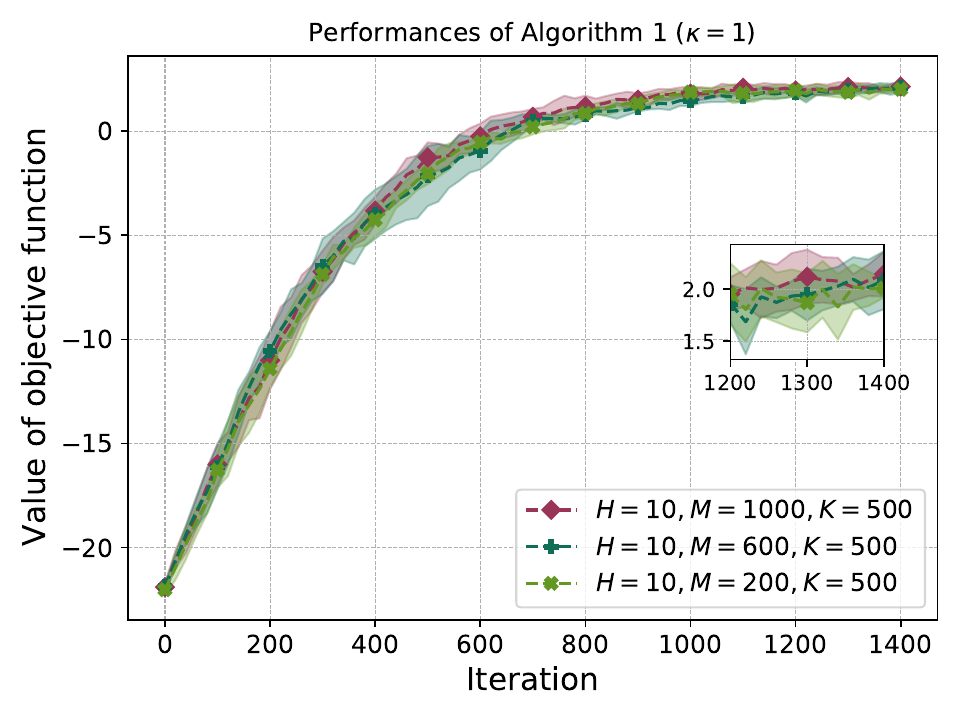}
    \caption{Ablation study on the number of preference queries $M$ under fixed $H=10$ and $K=500$.}
    \label{fig:ablation_M}
\end{figure}
\par
The performance of Algorithm~\ref{Distributedzeroth-orderpolicygradientalgorithm} with fixed $H=10$, $K=500$, and varying numbers of human evaluators $M\in\{200,600,1000\}$ is presented in Fig.~\ref{fig:ablation_M}.
{\color{blue}The results show that performance consistently improves with larger $M$, demonstrating the benefit of aggregating more human preference feedback in reducing estimation variance and stabilizing the learning process.}
\begin{figure}[!ht]
    \centering
    \includegraphics[width=0.8\linewidth]{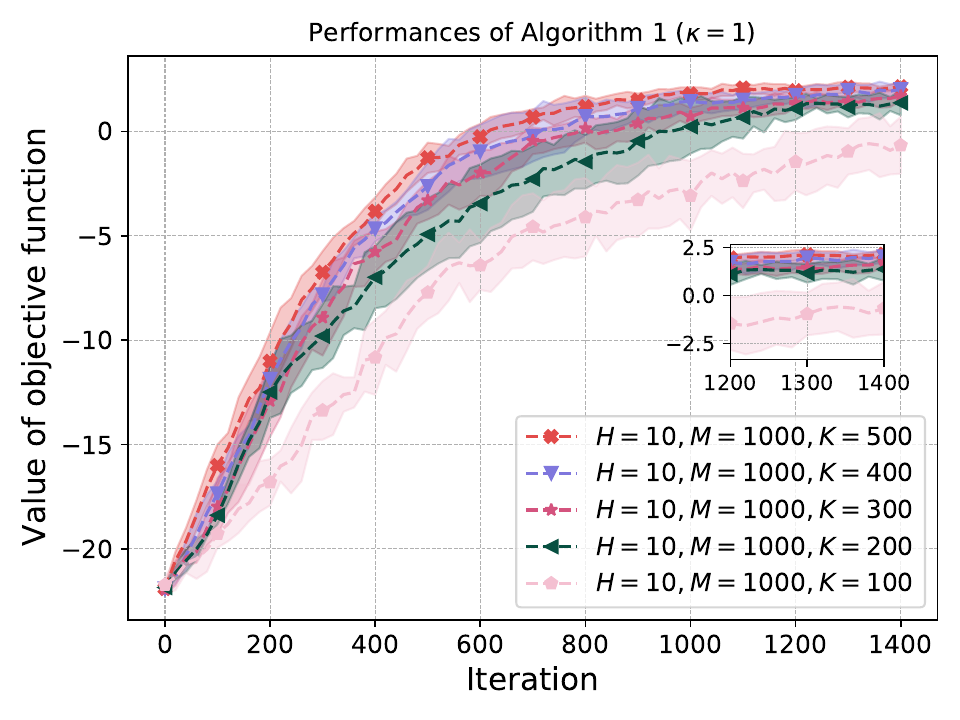}
    \caption{Ablation study on the number of trial $K$ under fixed $H=10$ and $M=1000$.}
    \label{fig:ablation_K}
\end{figure}
\par
Fig.~\ref{fig:ablation_K} presents the {\color{blue}performance} of Algorithm~\ref{Distributedzeroth-orderpolicygradientalgorithm} with fixed $H=10$ and $M=1000$ as well as varying values of $K\in\{100,200,300,400,500\}$.
As illustrated in Fig.~\ref{fig:ablation_K}, an increase in $K$ generally results in better performance, indicating that a larger number of trajectory samples contributes to the improved robustness of policy gradient estimation.
\par
Additionally, as illustrated in Figs.~\ref{fig:ablation_M}-\ref{fig:ablation_K}, Algorithm~\ref{Distributedzeroth-orderpolicygradientalgorithm} exhibits greater sensitivity to $K$ than to $M$.
{\color{blue}Specifically, with fixed $K$, varying $M$ yields consistent but modest performance gains, whereas with fixed $M$, increasing $K$ leads to more pronounced improvements.}
This empirical observation aligns with Theorem~\ref{thetheoremofconvergence}, which suggests that the error term $\frac{LR\sqrt{Nd_{\mathrm{tot}}}}{(1-\gamma)\sqrt{K}}$ associated with the number of trials $K$ {\color{blue}has a more dominant influence on convergence} than the preference estimation error $LNd_{\mathrm{tot}}L_{\sigma}\sqrt{\frac{\log{M}}{M}}$ {\color{blue}governed by} $M$.

%
%

\subsubsection{{\color{blue}Safety-aware NMARL setting}}
{\color{blue}To evaluate the effectiveness of the proposed Algorithm~\ref{Distributedzeroth-orderpolicygradientalgorithm} under safety considerations, we extend the GridWorld environment to include collision-avoidance penalties.
Specifically, whenever agents occupy the same grid location simultaneously, a negative reward is imposed, except at the target position, to discourage unsafe collisions.
This mechanism enforces a soft safety constraint that penalizes unsafe behaviors while preserving the tractability of the learning process, which is consistent with the principles of safety-aware multi-agent control.
The extended environment retains the original NMARL structure described in Section~\ref{thesettingofNMARLinsimulation}, with the sole modification in the reward function: for each agent $i$ at time step $t$, the reward is defined as
\begin{align}
r_{i,t} =\left\{
\begin{array}{ll}
10,~\text{if}~s_{i,t}=s_{*}~\text{for the first time} \\
0,~\text{if}~s_{i,t}=s_{*}~\text{after the first visit} \\
-1 - \|s_{i,t}-s_{*}\|,~\text{if}~s_{i,t}\neq s_{j,t},~\mathrm{for~all}~j\neq i\\
-1 - \|s_{i,t}-s_{*}\|-r_{\mathrm{collision}},~\text{otherwise},
\end{array}
\right.
\end{align}
where $r_{\mathrm{collision}}>0$ is a collision penalty coefficient.}
\par
\begin{figure}[!ht]
    \centering
    \includegraphics[width=0.8\linewidth]{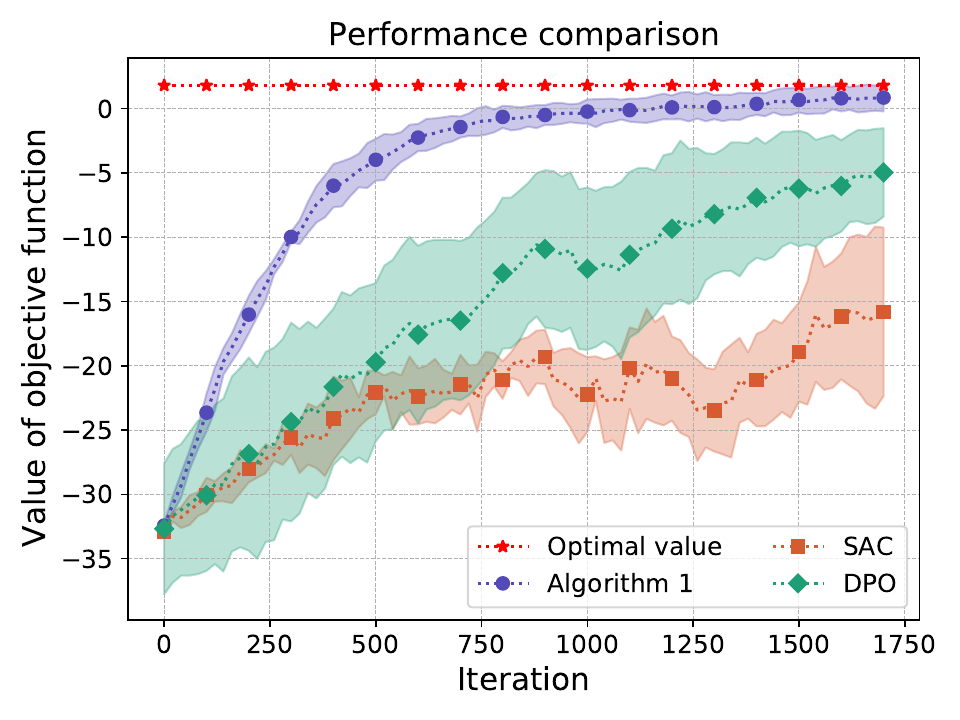}
    \caption{{\color{blue}The evaluation of the performance of Algorithm~\ref{Distributedzeroth-orderpolicygradientalgorithm} in the safety-aware GridWorld environment with $M=1000$, $K=500$, $H=20$, and $\kappa=1$.}}
    \label{fig:collision_performance}
\end{figure}
{\color{blue}We compare the proposed Algorithm~\ref{Distributedzeroth-orderpolicygradientalgorithm} against two representative baselines, DPO and SAC, with results averaged over $5$ independent random seeds presented in Fig.~\ref{fig:collision_performance}.
Algorithm~\ref{Distributedzeroth-orderpolicygradientalgorithm} consistently outperforms both baselines and converges to the corresponding optimal solution, which demonstrates that its effectiveness and performance advantages are retained even under safety considerations.}


\subsection{{\color{blue}Predator-prey environment}}
{\color{blue}In order to evaluate the scalability and effectiveness of the proposed Algorithm~\ref{Distributedzeroth-orderpolicygradientalgorithm}, we consider a more complex predator-prey environment.
The environment consists of a discrete $8\times 8$ grid populated by $N=20$ predators, $N_{p}=10$ prey, and $10$ static obstacles, where each predator is modeled as an agent.
\par
At each time step $t$, each agent $i$ observes a local state $s_{i,t}$ consisting solely of its own position and selects an action selects a movement action $a_{i,t}\in\{\text{Up}, \text{Down}, \text{Left}, \text{Right}, \text{Stay}\}$, subject to grid boundaries and obstacle constraints.
The reward $r_{i,t}$ of agent $i$ at time $t$ comprises three components: (i) a fixed time penalty to encourage efficiency, (ii) a shared capture reward distributed among agents that simultaneously occupy the location of a prey, and (iii) a distance-based shaping term that incentivizes progress toward the nearest prey.
Formally,
\begin{align}
r_{i,t} =& -r^{\mathrm{time}}+\sum_{p\in\mathcal{N}_{p,t}}\frac{r^{\mathrm{capture}}\mathds{1}\{s_{i,t}=s_{p}\}}{|\mathcal{C}_{p,t}|}\notag\\
&+\alpha_{p}(d_{i,t-1}-d_{i,t}),\label{therewarddefinitioninPredator-prey}
\end{align}
where $r^{\mathrm{time}}>0$ is the per-step time penalty, $\mathcal{N}_{p,t}$ is the set of remaining prey at time,
$s_p$ denotes the position of prey $p$,
$\mathcal{C}_{p,t}$ is the set of agents simultaneously capturing prey $p$ at time $t$,
$r^{\mathrm{capture}}>0$ is the capture reward,
$d_{i,t}$ denotes the distance from agent $i$ to its nearest prey,
and $\alpha_p>0$ is the scaling coefficient for the distance-based shaping term.
The objective of the agents is to cooperatively capture all prey as efficiently as possible, corresponding to maximizing the cumulative reward $J(\bm{\theta})$ defined in~(\ref{theobjectivefunction}).
}

\subsubsection{{\color{blue}Performance of Algorithm~\ref{Distributedzeroth-orderpolicygradientalgorithm} in predator-prey environment}}
{\color{blue}
In the predator-prey environment, we consider $N=20$ agents interconnected via a chain communication network $1\leftrightarrow 2\leftrightarrow\cdots\leftrightarrow 20$.
The initial positions of the agents are
$\{(4, 4), (5, 7), (4, 3), (1, 5), (4, 4), (5, 3),$
$(6, 4), (4, 0), (7, 3), (5, 2), (0, 5), (1, 0), (5, 2), (1, 4), (1, 0),$
$(2, 1),(5, 0), (4, 4), (3, 1), (6, 0)\}$, the prey are located at $\{(3, 2), (1, 3), (7, 0), (3, 0), (4, 5), (2, 0), (7, 5),(4, 2),$
$(3, 4), (3, 5)\}$, and the obstacles are placed at $\{(0, 1),$
$(5, 4), (6, 6), (7, 6), (6, 3), (7, 4), (3, 6), (2, 5), (2, 4), (6, 5)\}$.
The remaining environment parameters are set as follows: the per-step time penalty $r^{\mathrm{time}}=1$, the capture reward $r^{\mathrm{capture}}=1$, and the distance-based shaping coefficient $\alpha_p=0.5$.
}
\par
\begin{figure}[!ht]
    \centering
    \includegraphics[width=0.8\linewidth]{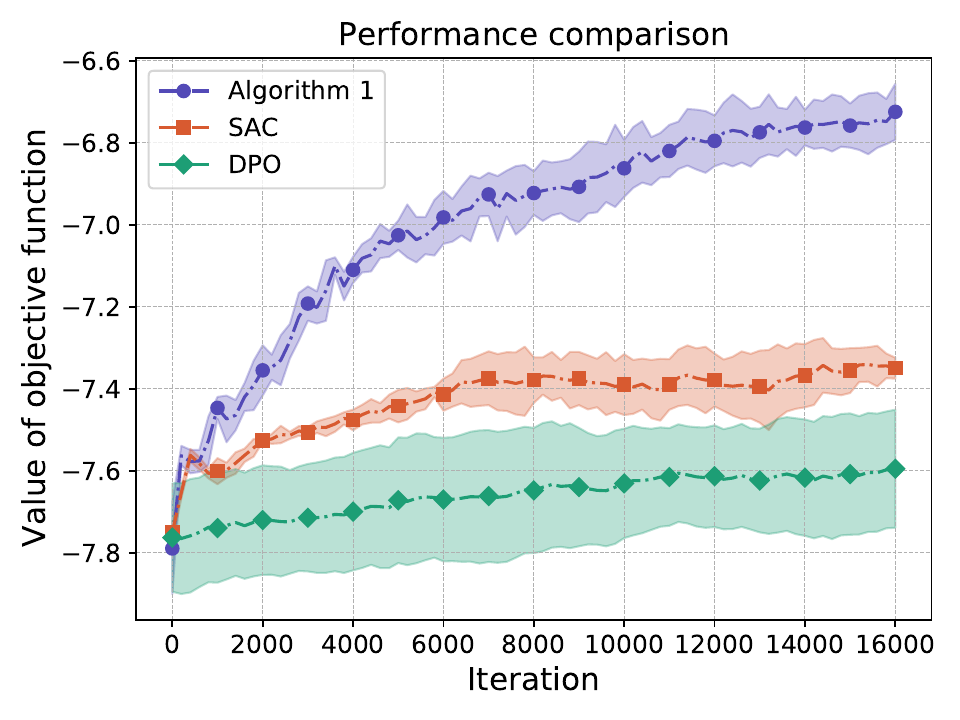}
    \caption{{\color{blue}The evaluation of the performance of Algorithm~\ref{Distributedzeroth-orderpolicygradientalgorithm} in the predator-prey environment with $M=200$, $K=100$, $H=50$, and $\kappa=1$.}}
    \label{fig:predator-prey_performance}
\end{figure}
{\color{blue}We evaluate Algorithm~\ref{Distributedzeroth-orderpolicygradientalgorithm} against DPO and SAC baselines, with results shown in Fig.~\ref{fig:predator-prey_performance}. Algorithm~\ref{Distributedzeroth-orderpolicygradientalgorithm} consistently achieves superior objective values, demonstrating its scalability and effectiveness in large-scale multi-agent scenarios.
Further implementation details are available at \texttt{https://github.com/Pengcheng-Dai/DZOPG}.
}


\subsubsection{{\color{blue}Ablation studies on sampling parameters}}
\begin{figure}[!ht]
    \centering
    \includegraphics[width=0.8\linewidth]{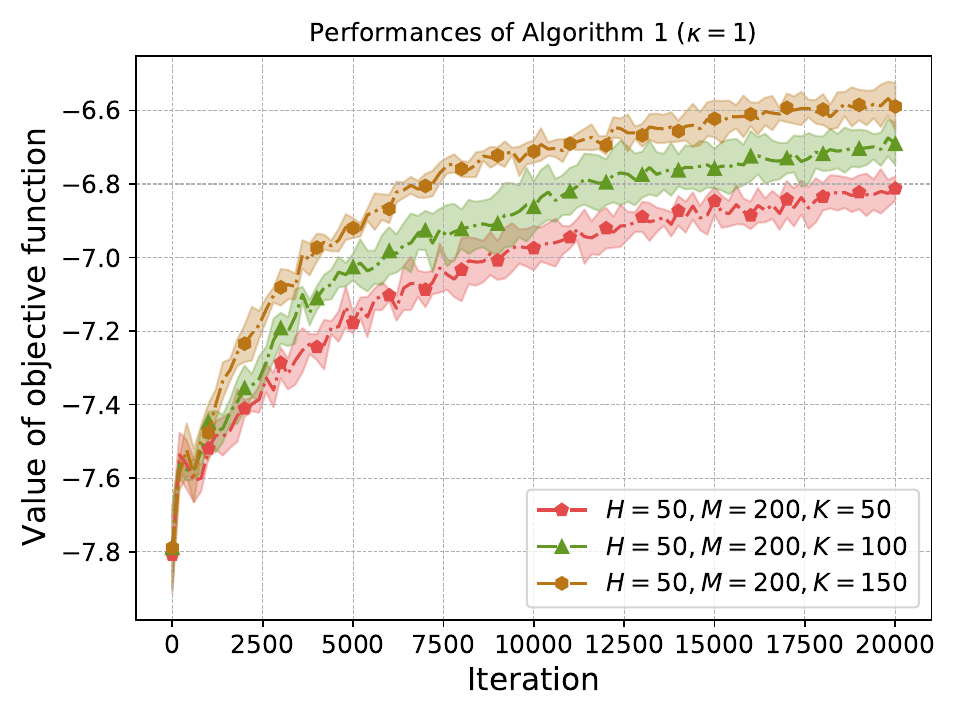}
    \caption{{\color{blue}Ablation study on the number of trial $K$ under fixed $H=50$ and $M=200$ in the predator-prey environment}.}
    \label{fig:predator-prey_ablation_K}
\end{figure}
{\color{blue}As established in Theorem~\ref{thetheoremofconvergence} and corroborated by the GridWorld results in Fig.~\ref{fig:ablation_K}, the sampling parameter $K$ plays a critical role in determining the performance of Algorithm~\ref{Distributedzeroth-orderpolicygradientalgorithm}.
To further investigate this sensitivity in a more complex setting, we evaluate the proposed algorithm under varying values of $K$ in the predator-prey environment.
As shown in Fig.~\ref{fig:predator-prey_ablation_K}, performance improves consistently with larger $K$, confirming that the number of trajectory samples for gradient estimation has a significant impact on the overall learning efficiency.}
\par
\begin{figure}[!ht]
    \centering
    \includegraphics[width=0.8\linewidth]{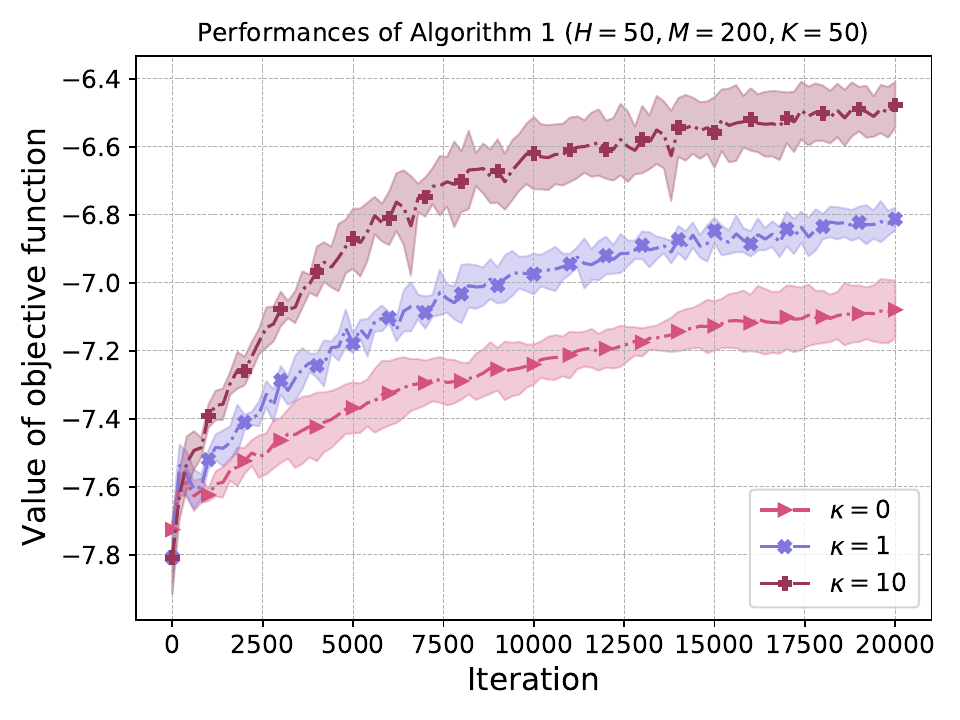}
    \caption{{\color{blue}Ablation study on the number of spatial truncated distance $\kappa$ in the predator-prey environment}.}
    \label{fig:predator-prey_ablation_Kappa}
\end{figure}
{\color{blue}An additional ablation study is conducted to investigate the influence of the parameter $\kappa$.
As illustrated in Fig.~\ref{fig:predator-prey_ablation_Kappa}, the algorithm's performance improves with increasing $\kappa$, which aligns with the theoretical result in Theorem~\ref{thetheoremofconvergence}.
This effect can be attributed to the fact that a larger $\kappa$ enables spatiotemporally truncated trajectories to incorporate information from a greater number of agents, thereby promoting coordinated optimization among the agents.}
{\color{blue}\begin{remark}
The parameters $H$ and $\kappa$ correspond to the temporal and spatial truncation lengths, respectively, in the proposed Algorithm~\ref{Distributedzeroth-orderpolicygradientalgorithm}. Intuitively, larger values of $H$ and $\kappa$ may improve algorithmic performance.
However, the appropriate choice of these parameters is closely related to the characteristics of the environment.
In general, $H$ should be set greater than the minimal horizon required for agents to reach the goal, so that the collected trajectories contain sufficiently informative samples.
The choice of $\kappa$ is determined by
the physical communication range of agents
and the degree of coupling among agents in the environment.
In strongly coupled multi-agent scenarios, a larger $\kappa$ is preferable, as it enables spatiotemporally truncated trajectories to capture agent interactions more effectively, thus facilitating coordinated behavior among agents.
\end{remark}}
%


\section{Conclusions}\label{SectionVConclusions}
This paper proposes a distributed zeroth-order policy gradient algorithm for the NMARL problem without relying on explicit reward signals.
In this approach, we introduce a novel human feedback mechanism based on spatiotemporally truncated trajectories and develop a Gaussian-perturbed gradient estimator to update the policy of agents.
The proposed algorithm demonstrates a provable polynomial sample complexity in achieving $\epsilon$-stationary convergence under mild assumptions, and exhibits strong empirical performance in stochastic environments.
Future work will explore several important directions.
{\color{blue}On the theoretical side, characterizing the robustness of the proposed framework to preference model misspecification and deriving misspecification-robust convergence guarantees remains an important open problem.
On the application side,} we will {\color{blue}investigate the deployment of the proposed algorithm} in various real-world scenarios, such as human-robot teaming in warehouse automation and collaborative industrial assembly.

\clearpage

\section{Appendix}

\subsection{Preliminary Lemma}\label{PreliminaryLemmas}
For a sequence of independent random variables $\{\zeta_k\}_{k=1}^{K}$, the following Rosenthal-type inequality for the fourth moment holds.
\begin{lemma}\label{lemmaofFourthmomentRosenthal-type}
Let $\{\zeta_k\}_{k=1}^{K}$ be a sequence of independent random variables and $\mathrm{Var}(\zeta_k)$ denote the variance of $\zeta_{k}$. If $\{\zeta_k\}_{k=1}^{K}$ satisfies:
(i) $\mathbb{E}[\zeta_k]=0$;
(ii) $|\zeta_k| \leq C_{\zeta}$ almost surely,
we can have the following result:
\begin{align}
\mathbb{E}\Big[\Big(\sum_{k=1}^{K} \zeta_k\Big)^4\Big]
\leq 3 \Big(\sum_{k=1}^{K} \mathrm{Var}(\zeta_k)\Big)^2 +
C_\zeta^2 \sum_{k=1}^{K}\mathrm{Var}(\zeta_k).\label{eq:rosenthal}
\end{align}
\end{lemma}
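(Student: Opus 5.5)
We will prove the inequality by expanding the fourth power directly and using independence together with the zero-mean condition to eliminate most cross terms. Writing $V_k=\mathrm{Var}(\zeta_k)=\mathbb{E}[\zeta_k^2]$, where the equality holds by condition (i), we expand
\begin{align*}
\Big(\sum_{k=1}^{K}\zeta_k\Big)^4=\sum_{k_1,k_2,k_3,k_4}\zeta_{k_1}\zeta_{k_2}\zeta_{k_3}\zeta_{k_4}
\end{align*}
and take expectations term by term. By independence, the expectation of each product factorizes over the distinct indices appearing in it. Any index that appears exactly once therefore contributes a factor $\mathbb{E}[\zeta_k]=0$, so the whole term vanishes. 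This kills every pattern of the form $(k,k,k,l)$, $(k,k,l,m)$ and $(k,l,m,n)$ with distinct letters.

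Only two index patterns survive. The first is all four indices equal, which gives $\sum_k\mathbb{E}[\zeta_k^4]$. The second is two distinct indices, each appearing twice. For a fixed unordered pair $\{k,l\}$ with $k\neq l$, there are $\binom{4}{2}=6$ ordered arrangements, and each contributes $\mathbb{E}[\zeta_k^2]\,\mathbb{E}[\zeta_l^2]=V_kV_l$. Summing over unordered pairs gives
\begin{align*}
6\sum_{k<l}V_kV_l=3\sum_{k\neq l}V_kV_l .
\end{align*}
Hence
\begin{align*}
\mathbb{E}\Big[\Big(\sum_{k=1}^{K}\zeta_k\Big)^4\Big]=\sum_{k=1}^{K}\mathbb{E}[\zeta_k^4]+3\sum_{k\neq l}V_kV_l .
\end{align*}

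It remains to bound the two pieces. For the diagonal piece, condition (ii) gives $\zeta_k^4\le C_\zeta^2\zeta_k^2$ almost surely, so $\mathbb{E}[\zeta_k^4]\le C_\zeta^2V_k$. Summing over $k$ yields the term $C_\zeta^2\sum_k\mathrm{Var}(\zeta_k)$. For the off-diagonal piece, since every $V_k\ge 0$, we have
\begin{align*}
3\sum_{k\neq l}V_kV_l\le 3\Big(\sum_{k=1}^{K}V_k\Big)^2 ,
\end{align*}
because the right-hand side simply adds back the nonnegative diagonal terms $3\sum_k V_k^2$. Combining the two bounds gives (\ref{eq:rosenthal}).

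No step here is a real obstacle. The only point that needs care is the combinatorial count: the coefficient of $\zeta_k^2\zeta_l^2$ is $6$ per unordered pair, which becomes $3$ when the sum is written over ordered pairs $k\neq l$. It also matters that full mutual independence, not just pairwise, is used to factorize products of up to four distinct variables.
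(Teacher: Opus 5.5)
Your proof is correct and follows essentially the same route as the paper: you expand the fourth power and use independence with zero mean to keep only $\sum_k \mathbb{E}[\zeta_k^4]$ and $6\sum_{k<l} V_k V_l$. You then bound $\mathbb{E}[\zeta_k^4]\le C_\zeta^2 V_k$ and drop the nonnegative diagonal terms, which is the same as the paper rewriting $\sum_{k<l}V_kV_l=\tfrac{1}{2}\big[(\sum_k V_k)^2-\sum_k V_k^2\big]$ and discarding the $-3\sum_k V_k^2$ term.
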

\begin{proof}
For the sequence of independent random variables $\{\zeta_k\}_{k=1}^{K}$, we have
\begin{align}
\mathbb{E}\Big[\Big(\sum_{k=1}^{K} \zeta_k\Big)^4\Big]
&= \sum_{k=1}^{K} \mathbb{E}[\zeta_k^4] + 6\!\!\sum_{1\leq i<j\leq K}\!\!\mathbb{E}[\zeta_i^2] \mathbb{E}[\zeta_j^2],\label{theFourthmomentRosenthal-type1}
\end{align}
where the equality can be obtained by the fact that the terms with odd powers vanish due to the zero-mean property, i.e., $\mathbb{E}[\zeta_k]=0$  for all $1\leq k\leq K$.
Next, we note that
\begin{align}
&\sum_{1\leq i<j\leq K} \mathbb{E}[\zeta_i^2]\mathbb{E}[\zeta_j^2]\notag\\
=&\frac{1}{2} \Big[ \Big( \sum_{k=1}^{K} \mathrm{Var}(\zeta_k)\Big)^2-\sum_{k=1}^{K}\mathrm{Var}(\zeta_k)^2\Big].\label{theFourthmomentRosenthal-type2}
\end{align}
\par
Substituting (\ref{theFourthmomentRosenthal-type2}) into (\ref{theFourthmomentRosenthal-type1}), we obtain
\begin{align}
&\mathbb{E}\Big[\Big(\sum_{k=1}^{K} \zeta_k\Big)^4\Big]\notag\\
=&\sum_{k=1}^{K} \mathbb{E}[\zeta_k^4] + 3\Big(\sum_{k=1}^{K}\mathrm{Var}(\zeta_k)\Big)^2-3\sum_{k=1}^{K}\mathrm{Var}(\zeta_k)^2\notag\\
\leq&3\Big( \sum_{k=1}^{K} \mathrm{Var}(\zeta_k) \Big)^2 + C_\zeta^2\sum_{k=1}^{K}\mathrm{Var}(\zeta_k),\label{theFourthmomentRosenthal-type3}
\end{align}
where (\ref{theFourthmomentRosenthal-type3}) uses the fact that $\mathbb{E}[\zeta_k^4] \le C_\zeta^2 \mathbb{E}[\zeta_k^2]=C_\zeta^2 \mathrm{Var}(\zeta_k)$.
\end{proof}
\par
This Rosenthal-type inequality for the fourth moment of independent, bounded, zero-mean random variables can be seen as a special case of the classical result in~\citep{Rosenthal1970}.


\subsection{Proof of Lemma~\ref{thelemmaofpreferenceestimation}}\label{ProofofLemmathelemmaofpreferenceestimation}
\begin{proof}
Similar to the analysis of Lemma~2 in~\citep{Zhangarxiv2024}, for any $\delta\in(0,1)$, we can easily obtain
\begin{align}\label{ConcentrationofPreferenceProbability}
\big|\hat{p}^{t,k}_{i}-\mathbb{P}\big(\tau^{t,k}_{\mathcal{N}^{\kappa}_{i},1}\succ\tau^{t,k}_{\mathcal{N}^{\kappa}_{i},0}\big)\big|
\leq\sqrt{\frac{\log{\frac{1}{\delta}}}{M}}
\end{align}
holds with probability at least $1-\delta$.
\par
Let $\mathcal{E}_{i,k}$ denote the event in (\ref{ConcentrationofPreferenceProbability}) holds and $\mathcal{E}^{\complement}_{i,k}$ represent the complementary event.
By Assumption~\ref{theassumptionofreward}, when the
concentration event $\mathcal{E}_{i,k}$ does not hold, we can have that:
\begin{align}\label{ConcentrationofPreferenceProbability1-1}
\big|\sigma^{-1}(\hat{p}^{t,k}_{i})-\big(\hat{r}_{i}(\tau^{t,k}_{\mathcal{N}^{\kappa}_{i},1})-\hat{r}_{i}(\tau^{t,k}_{\mathcal{N}^{\kappa}_{i},0})\big)\big|\leq\frac{4R(1-\gamma^{H})}{1-\gamma}.
\end{align}
When the concentration event $\mathcal{E}_{i,k}$ holds, we have
\begin{align}
&\big|\sigma^{-1}(\hat{p}^{t,k}_{i})-\big(\hat{r}_{i}(\tau^{t,k}_{\mathcal{N}^{\kappa}_{i},1})-\hat{r}_{i}(\tau^{t,k}_{\mathcal{N}^{\kappa}_{i},0})\big)\big|\notag\\
=&\big|\sigma^{-1}(\hat{p}^{t,k}_{i})-\sigma^{-1}\big(\mathbb{P}(\tau^{t,k}_{\mathcal{N}^{\kappa}_{i},1}\succ\tau^{t,k}_{\mathcal{N}^{\kappa}_{i},0})\big)\big|\label{ConcentrationofPreferenceProbability1-2}\\
\leq&L_{\sigma}\big|\hat{p}^{t,k}_{i}-\mathbb{P}(\tau^{t,k}_{\mathcal{N}^{\kappa}_{i},1}\succ\tau^{t,k}_{\mathcal{N}^{\kappa}_{i},0})\big|\label{ConcentrationofPreferenceProbability1-3}\\
\leq&L_{\sigma}\sqrt{\frac{\log{\frac{1}{\delta}}}{M}},\label{ConcentrationofPreferenceProbability1-4}
\end{align}
where (\ref{ConcentrationofPreferenceProbability1-2}) uses the fact that $\hat{r}_{i}(\tau^{t,k}_{\mathcal{N}^{\kappa}_{i},1})-\hat{r}_{i}(\tau^{t,k}_{\mathcal{N}^{\kappa}_{i},0})=\sigma^{-1}\big(\mathbb{P}(\tau^{t,k}_{\mathcal{N}^{\kappa}_{i},1}\succ\tau^{t,k}_{\mathcal{N}^{\kappa}_{i},0})\big)$, (\ref{ConcentrationofPreferenceProbability1-3}) can be achieved by {\color{blue}Assumption~\ref{theassumptionofsigma}}, and (\ref{ConcentrationofPreferenceProbability1-4}) uses the result in (\ref{ConcentrationofPreferenceProbability}).
Combining both cases and let $\delta=\frac{1}{M^{2}}$, we have
\begin{align}
&\mathbb{E}\big[\big|\sigma^{-1}(\hat{p}^{t,k}_{i})-\big(\hat{r}_{i}(\tau^{t,k}_{\mathcal{N}^{\kappa}_{i},1})-\hat{r}_{i}(\tau^{t,k}_{\mathcal{N}^{\kappa}_{i},0})\big)\big|\big]\notag\\
\leq&\mathbb{P}(\mathcal{E}_{i,k})L_{\sigma}\sqrt{\frac{\log{\frac{1}{\delta}}}{M}}+\mathbb{P}(\mathcal{E}^{\complement}_{i,k})\frac{4R(1-\gamma^{H})}{1-\gamma}\notag\\
\leq&L_{\sigma}\sqrt{\frac{2\log{M}}{M}}+\frac{4R(1-\gamma^{H})}{(1-\gamma)M^{2}},
\end{align}
which completes the proof of (\ref{thelemmaofpreferenceestimation(i)}) in Lemma~\ref{thelemmaofpreferenceestimation}.
\par
To prove (\ref{thelemmaofpreferenceestimation(iii)}), we apply a similar procedure.
When the event $\mathcal{E}^{\complement}_{i,k}$ holds, we have
\begin{align}\label{ConcentrationofPreferenceProbability2-1}
&\big|\sigma^{-1}(\hat{p}^{t,k}_{i})-\big(\hat{r}_{i}(\tau^{t,k}_{\mathcal{N}^{\kappa}_{i},1})-\hat{r}_{i}(\tau^{t,k}_{\mathcal{N}^{\kappa}_{i},0})\big)\big|^{4}\notag\\
\leq&\frac{256R^{4}(1-\gamma^{H})^{4}}{(1-\gamma)^{4}}.
\end{align}
When the event $\mathcal{E}_{i,k}$ holds, we can employ (\ref{ConcentrationofPreferenceProbability1-4}) and get
\begin{align}
\big|\sigma^{-1}(\hat{p}^{t,k}_{i})-\big(\hat{r}_{i}(\tau^{t,k}_{\mathcal{N}^{\kappa}_{i},1})-\hat{r}_{i}(\tau^{t,k}_{\mathcal{N}^{\kappa}_{i},0})\big)\big|^{4}\leq&L^{4}_{\sigma}\frac{(\log{\frac{1}{\delta}})^{2}}{M^{2}}.\label{ConcentrationofPreferenceProbability2-2}
\end{align}
Combining (\ref{ConcentrationofPreferenceProbability2-1}) and (\ref{ConcentrationofPreferenceProbability2-2}) and let $\delta=\frac{1}{M^{2}}$ again, we have
\begin{align}
&\mathbb{E}\big[\big|\sigma^{-1}(\hat{p}^{t,k}_{i})-\big(\hat{r}_{i}(\tau^{t,k}_{\mathcal{N}^{\kappa}_{i},1})-\hat{r}_{i}(\tau^{t,k}_{\mathcal{N}^{\kappa}_{i},0})\big)\big|^{4}\big]\notag\\
\leq&\frac{4L^{4}_{\sigma}(\log{M})^{2}}{M^{2}}+\frac{256R^{4}(1-\gamma^{H})^{4}}{(1-\gamma)^{4}M^{2}}.\notag
\end{align}
\end{proof}

 discussion of smooth of function
\subsection{Discussion of the rationality
Assumption~\ref{theassumptionofgradientofobjectivefunction}}\label{DiscussionoftherationalityAssumptiontheassumptionofgradientofobjectivefunction}
Recalling the definition of the softmax policy $\pi_{i}(a_{i}|s_{i},\theta_{i})$ in (\ref{policyexpress}), there exist positive constants $L^{s}_{b}$, $L^{s}_{g}$, and $L^{s}_{p}$ such that the joint policy $\bm{\pi}_{\bm{\theta}} = \prod_{i=1}^{N} \pi_{i}(a_{i}|s_{i},\theta_{i})$ satisfies the following properties:
\par
(i) The gradient of the joint log-policy is bounded, i.e., $\|\nabla_{\bm{\theta}} \log\bm{\pi}_{\bm{\theta}}(\bm{a}|\bm{s})\|\leq L^{s}_{b}$;
\par
(ii) The gradient of the joint log-policy is Lipschitz, i.e., $\|\nabla_{\bm{\theta}}\log\bm{\pi}_{\bm{\theta}}(\bm{a}|\bm{s})- \nabla_{\bm{\theta}} \log \bm{\pi}_{\bm{\theta}'}(\bm{a}|\bm{s})\| \leq L^{s}_{g}\|\bm{\theta}-\bm{\theta}'\|$;
\par
(iii) The joint policy is Lipschitz in total variation, i.e., $\|\bm{\pi}_{\bm{\theta}}(\cdot|\bm{s}) - \bm{\pi}_{\bm{\theta}'}(\cdot|\bm{s})\|_{\rm TV} \leq L^{s}_{p}\|\bm{\theta}-\bm{\theta}'\|$ for all $\bm{s}\in\bm{\mathcal{S}}$.
\par
These results follow directly from the properties of individual softmax policies.
Due to space limitations, the detailed derivations are omitted.
\par
Based on the above properties, $\widehat{J}_{i}(\bm{\theta})$ in (\ref{thetruncatedobjectivefunction}) has the following property.
\begin{proposition}
Suppose Assumption~\ref{theassumptionofreward} holds. For joint softmax policy $\bm{\pi_{\theta}}$,   $\nabla_{\bm{\theta}}\widehat{J}_{i}(\bm{\theta})$ is Lipschitz continuity.
\end{proposition}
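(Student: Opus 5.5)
The plan is to write $\widehat{J}_{i}(\bm{\theta})$ as a finite sum over trajectories and differentiate term by term. Let $\tau=\{(\bm{s}_{h},\bm{a}_{h})\}_{h=0}^{H-1}$ denote a global $H$-horizon trajectory, with probability
\begin{align}
p_{\bm{\theta}}(\tau)=\bm{\rho}(\bm{s}_{0})\prod_{h=0}^{H-1}\bm{\pi}_{\bm{\theta}}(\bm{a}_{h}|\bm{s}_{h})\prod_{h=0}^{H-2}\bm{\mathcal{P}}(\bm{s}_{h+1}|\bm{s}_{h},\bm{a}_{h}).\notag
\end{align}
Then $\widehat{J}_{i}(\bm{\theta})=\sum_{\tau}p_{\bm{\theta}}(\tau)\hat{r}_{i}(\tau_{\mathcal{N}^{\kappa}_{i}})$. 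The state-action spaces are finite and $H<\infty$, so the sum is finite. Applying the likelihood-ratio (REINFORCE) identity with causality, I get
\begin{align}
\nabla_{\bm{\theta}}\widehat{J}_{i}(\bm{\theta})=\mathbb{E}_{\tau\sim p_{\bm{\theta}}}\Big[\sum_{h=0}^{H-1}\nabla_{\bm{\theta}}\log\bm{\pi}_{\bm{\theta}}(\bm{a}_{h}|\bm{s}_{h})\,G_{h}(\tau)\Big],\notag
\end{align}
where $G_{h}(\tau)=\frac{1}{N}\sum_{h'=h}^{H-1}\gamma^{h'}\sum_{j\in\mathcal{N}^{\kappa}_{i}}r_{j,h'}$. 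By Assumption~\ref{theassumptionofreward}, $|G_{h}|\le \frac{|\mathcal{N}^{\kappa}_{i}|R\gamma^{h}}{N(1-\gamma)}\le \frac{R\gamma^{h}}{1-\gamma}$. Write the integrand as $\Phi_{\bm{\theta}}(\tau)$.

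For $\bm{\theta},\bm{\theta}'$, I split the gradient difference as
\begin{align}
\nabla\widehat{J}_{i}(\bm{\theta})-\nabla\widehat{J}_{i}(\bm{\theta}')=\sum_{\tau}p_{\bm{\theta}}(\tau)\big(\Phi_{\bm{\theta}}(\tau)-\Phi_{\bm{\theta}'}(\tau)\big)+\sum_{\tau}\big(p_{\bm{\theta}}(\tau)-p_{\bm{\theta}'}(\tau)\big)\Phi_{\bm{\theta}'}(\tau).\notag
\end{align}
\textbf{First term.} The rewards $G_{h}$ do not depend on $\bm{\theta}$, so property (ii) of the softmax policy gives
\begin{align}
\|\Phi_{\bm{\theta}}-\Phi_{\bm{\theta}'}\|\le \sum_{h}\frac{R\gamma^{h}}{1-\gamma}L^{s}_{g}\|\bm{\theta}-\bm{\theta}'\|\le \frac{RL^{s}_{g}}{(1-\gamma)^{2}}\|\bm{\theta}-\bm{\theta}'\|.\notag
\end{align}
\textbf{Second term.} Property (i) gives $\|\Phi_{\bm{\theta}'}\|\le \frac{RL^{s}_{b}}{(1-\gamma)^{2}}$, so this term is at most $\frac{RL^{s}_{b}}{(1-\gamma)^{2}}\cdot 2\|p_{\bm{\theta}}-p_{\bm{\theta}'}\|_{\mathrm{TV}}$.

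The remaining work, and the main obstacle, is controlling the total-variation distance between the two trajectory distributions. The transition kernel and $\bm{\rho}$ are shared, so only the policy factors differ. I would telescope over the horizon using a coupling/hybrid argument: replace $\bm{\pi}_{\bm{\theta}}$ by $\bm{\pi}_{\bm{\theta}'}$ one time step at a time. Each swap costs at most $\max_{\bm{s}}\|\bm{\pi}_{\bm{\theta}}(\cdot|\bm{s})-\bm{\pi}_{\bm{\theta}'}(\cdot|\bm{s})\|_{\mathrm{TV}}\le L^{s}_{p}\|\bm{\theta}-\bm{\theta}'\|$ by property (iii), giving $\|p_{\bm{\theta}}-p_{\bm{\theta}'}\|_{\mathrm{TV}}\le HL^{s}_{p}\|\bm{\theta}-\bm{\theta}'\|$.

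This bound carries an $H$ factor. To keep the constant $H$-independent, I would apply the telescoping per time index $h$ rather than to the whole trajectory. The $h$-th summand of $\Phi$ depends only on the prefix up to time $H-1$, but is weighted by $\gamma^{h}$. Moreover, $G_{h}$ carries $\gamma^{h}$ and the TV distance of the prefix up to $h'$ grows like $(h'+1)L^{s}_{p}$. Summing $\sum_{h'}(h'+1)\gamma^{h'}\le (1-\gamma)^{-2}$ then gives a bound of order $\frac{RL^{s}_{b}L^{s}_{p}}{(1-\gamma)^{3}}\|\bm{\theta}-\bm{\theta}'\|$.

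Combining the two terms yields
\begin{align}
\|\nabla\widehat{J}_{i}(\bm{\theta})-\nabla\widehat{J}_{i}(\bm{\theta}')\|\le L\|\bm{\theta}-\bm{\theta}'\|,\notag
\end{align}
with $L=\frac{RL^{s}_{g}}{(1-\gamma)^{2}}+\frac{2RL^{s}_{b}L^{s}_{p}}{(1-\gamma)^{3}}$, which is independent of $H$, $\kappa$ and $i$. This establishes the proposition.
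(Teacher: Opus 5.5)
Your argument is correct and proves the proposition, but by a different route from the paper's. The paper writes $\nabla_{\bm{\theta}}\widehat{J}_{i}(\bm{\theta})=\mathbb{E}_{\bm{\tau}\sim\bm{\pi_{\theta}}}\big[\sum_{t}\gamma^{t}\widehat{Q}^{\bm{\pi_{\theta}}}_{i,t}(\bm{s}_{t},\bm{a}_{t})\nabla_{\bm{\theta}}\log\bm{\pi_{\theta}}(\bm{a}_{t}|\bm{s}_{t})\big]$ using the truncated $Q$-function, so its integrand depends on $\bm{\theta}$ through $\widehat{Q}^{\bm{\pi_{\theta}}}_{i,t}$ as well as through the score. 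Its same-measure term therefore needs a Bellman-type recursion showing that $\widehat{Q}^{\bm{\pi_{\theta}}}_{i,t}$ is Lipschitz in $\bm{\theta}$. Its same-integrand term is bounded by the TV distance of the time-$t$ state-action marginals, which grows like $tL^{s}_{p}$.

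You instead use the sampled reward-to-go $G_{h}(\tau)$, which does not depend on $\bm{\theta}$. Your first term then needs only property (ii) and no $Q$-recursion. The whole influence of $\bm{\theta}$ on future rewards goes into one trajectory-level hybrid bound, which is shorter and more elementary. Your basic constant $\frac{RL^{s}_{g}}{(1-\gamma)^{2}}+\frac{2HRL^{s}_{b}L^{s}_{p}}{(1-\gamma)^{2}}$ is linear in $H$. The paper's constant carries a factor $H(H-1)$, because it drops the $\gamma^{t}$ when bounding $G^{\bm{\pi}_{\bm{\theta}'}}_{i,t}$. Conversely, if that $\gamma^{t}$ were kept, the paper's $Q$-based split would directly give an $H$-free constant of order $(1-\gamma)^{-3}$, since its recursion is naturally discounted.

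Your $H$-independent refinement is a worthwhile goal, since the constant $L$ of Assumption~\ref{theassumptionofgradientofobjectivefunction} enters the rates while $H$ grows like $\log\frac{1}{\epsilon}$. However, the justification as written does not hold. The $h$-th summand $\nabla_{\bm{\theta}}\log\bm{\pi}_{\bm{\theta}'}(\bm{a}_{h}|\bm{s}_{h})G_{h}(\tau)$ depends on the entire suffix, so it cannot be charged only to the TV of the prefix up to time $h$. Conditioning on that prefix gives $\gamma^{h}\widehat{Q}^{\bm{\pi_{\theta}}}_{i,h}$, which itself moves with $\bm{\theta}$; this is exactly the paper's extra term.

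The fix is to expand $G_{h}$ into per-time rewards. For $h\le h'$, the $(h,h')$ term is a function of the prefix up to time $h'$ and has norm at most $L^{s}_{b}R\gamma^{h'}$. That prefix has TV at most $(h'+1)L^{s}_{p}\|\bm{\theta}-\bm{\theta}'\|$. Summing over $0\le h\le h'\le H-1$ gives the bound $2RL^{s}_{b}L^{s}_{p}\sum_{h'}(h'+1)^{2}\gamma^{h'}\,\|\bm{\theta}-\bm{\theta}'\|\le\frac{2(1+\gamma)RL^{s}_{b}L^{s}_{p}}{(1-\gamma)^{3}}\|\bm{\theta}-\bm{\theta}'\|$. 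So your order $(1-\gamma)^{-3}$ is right, but your explicit $L$ needs the extra factor $(1+\gamma)$: the sum is $\sum(h'+1)^{2}\gamma^{h'}$, not $\sum(h'+1)\gamma^{h'}$. Since the proposition only asserts Lipschitz continuity, this does not affect its validity.
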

\begin{proof}
Following the definition of $\widehat{J}_{i}(\bm{\theta})$ in (\ref{thetruncatedobjectivefunction}),
we define the associated $Q$-function accordingly
\begin{align}\label{thetruncatedobjectivefunctionQfunction}
\widehat{Q}^{\bm{\pi_{\theta}}}_{i,t}(\bm{s}_{t},\bm{a}_{t}) =&\mathbb{E}_{\bm{\pi_{\theta}}}\Big[\frac{1}{N}\sum_{h=t}^{H-1}\sum_{j\in\mathcal{N}^{\kappa}_{i}}\!\!\gamma^{h-t}r_{j,h}\Big|\bm{s}_{h}=\bm{s}_{t},\notag\\
&\bm{a}_{h}=\bm{a}_{t}\Big].
\end{align}
For a trajectory $\bm{\tau} = (\bm{s}_0, \bm{a}_0, \dots, \bm{s}_{H-1}, \bm{a}_{H-1})$, we define
\begin{align}\label{thefunctionofG}
G^{\bm{\pi_{\theta}}}_{i}(\bm{\tau})=\sum_{t=0}^{H-1}\gamma^t \widehat{Q}_{i,t}^{\bm{\pi_\theta}}(\bm{s}_t,\bm{a}_t)\nabla_{\bm{\theta}}\log\bm{\pi_\theta}(\bm{a}_t|\bm{s}_t).
\end{align}
The policy gradient $\nabla_{\bm{\theta}}\widehat{J}_{i}(\bm{\theta})$ is then expressed as
\begin{align}\label{thetruncatedobjectivefunctiongradient}
\nabla_{\bm{\theta}}\widehat{J}_{i}(\bm{\theta})=\mathbb{E}_{\bm{\tau}\sim\bm{\pi_{\theta}}}\Big[G^{\bm{\pi_{\theta}}}_{i}(\bm{\tau})\Big].
\end{align}
For any policy parameters $\bm{\theta},\bm{\theta}'$, we have
\begin{align}
&\nabla_{\bm{\theta}}\widehat{J}_{i}(\bm{\theta})-\nabla_{\bm{\theta}}\widehat{J}_{i}(\bm{\theta}')\notag\\
=&\underbrace{\mathbb{E}_{\bm{\tau}\sim\bm{\pi_\theta}}\big[G^{\bm{\pi_{\theta}}}_{i}(\bm{\tau})-G^{\bm{\pi}_{\bm{\theta}'}}_{i}(\bm{\tau})\big]}_{\mathrm{(i)}}\notag\\
&+\underbrace{\mathbb{E}_{\bm{\tau}\sim\bm{\pi}_{\bm{\theta}}}\big[G^{\bm{\pi}_{\bm{\theta}'}}_{i}(\bm{\tau})\big]-\mathbb{E}_{\bm{\tau}\sim\bm{\pi}_{\bm{\theta}'}}\big[G^{\bm{\pi}_{\bm{\theta}'}}_{i}(\bm{\tau})\big]}_{\mathrm{(ii)}}.\label{thetruncatedobjectivefunction1}
\end{align}
\par
In the $\mathrm{(i)}$-term of (\ref{thetruncatedobjectivefunction1}), we use the definition of $G^{\bm{\pi_{\theta}}}_{i}(\bm{\tau})$ in (\ref{thefunctionofG}) and have
\begin{align}
&G^{\bm{\pi_{\theta}}}_{i}(\bm{\tau})-G^{\bm{\pi}_{\bm{\theta}'}}_{i}(\bm{\tau})\notag\\
=&\sum_{t=0}^{H-1}\gamma^t\Big[
\big(\widehat{Q}^{\bm{\pi_{\theta}}}_{i}(\bm{s}_{t},\bm{a}_{t})-\widehat{Q}^{\bm{\pi}_{\bm{\theta}'}}_{i}(\bm{s}_{t},\bm{a}_{t})\big)\nabla_{\bm{\theta}}\log\bm{\pi}_{\bm{\theta}'}(\bm{a}_t|\bm{s}_t)\notag\\
&+\!\widehat{Q}^{\bm{\pi_{\theta}}}_{i}(\bm{s}_{t},\bm{a}_{t})\big(\nabla_{\bm{\theta}}\log\bm{\pi}_{\bm{\theta}}(\bm{a}_{t}|\bm{s}_{t})\!-\!\nabla_{\bm{\theta}}\log\bm{\pi}_{\bm{\theta}'}(\bm{a}_{t}|\bm{s}_{t})\big)\Big].\label{thetruncatedobjectivefunction1-1-1}
\end{align}
For any joint policy $\bm{\pi_{\theta}}$, let $P^{\bm{\pi}_{\bm{\theta}}}(\bm{s}',\bm{a}'|\bm{s},\bm{a})=\bm{\mathcal{P}}(\bm{s}'|\bm{s},\bm{a})$ $\bm{\pi}_{\bm{\theta}}(\bm{a}'|\bm{s}')$.
By the definition of $\widehat{Q}^{\bm{\pi_{\theta}}}_{i,t}(\bm{s}_{t},\bm{a}_{t})$ in (\ref{thetruncatedobjectivefunctionQfunction}), we have
\begin{align}
&\big|\widehat{Q}^{\bm{\pi_{\theta}}}_{i}(\bm{s}_{t},\bm{a}_{t})-\widehat{Q}^{\bm{\pi}_{\bm{\theta}'}}_{i}(\bm{s}_{t},\bm{a}_{t})\big|\notag\\
\leq&\gamma\sum_{\bm{s}_{t+1},\bm{a}_{t+1}}\Big|P^{\bm{\pi}_{\bm{\theta}}}(\bm{s}_{t+1},\bm{a}_{t+1}|\bm{s}_{t},\bm{a}_{t}) \widehat{Q}^{\bm{\pi_{\theta}}}_{i}(\bm{s}_{t+1},\bm{a}_{t+1}) \notag\\
&-P^{\bm{\pi}_{\bm{\theta}'}}(\bm{s}_{t+1},\bm{a}_{t+1}|\bm{s}_{t},\bm{a}_{t}) \widehat{Q}^{\bm{\pi}_{\bm{\theta}'}}_{i}(\bm{s}_{t+1},\bm{a}_{t+1})\Big|\notag\\
\leq&\gamma\sum_{\bm{s}_{t+1},\bm{a}_{t+1}}\Big|P^{\bm{\pi}_{\bm{\theta}}}(\bm{s}_{t+1},\bm{a}_{t+1}|\bm{s}_{t},\bm{a}_{t}) \Big(\widehat{Q}^{\bm{\pi_{\theta}}}_{i}(\bm{s}_{t+1},\bm{a}_{t+1})\notag\\
&-\widehat{Q}^{\bm{\pi}_{\bm{\theta}'}}_{i}(\bm{s}_{t+1},\bm{a}_{t+1})\Big) \notag\\
&+\Big(P^{\bm{\pi}_{\bm{\theta}}}(\bm{s}_{t+1},\bm{a}_{t+1}|\bm{s}_{t},\bm{a}_{t})-P^{\bm{\pi}_{\bm{\theta}'}}(\bm{s}_{t+1},\bm{a}_{t+1}|\bm{s}_{t},\bm{a}_{t}) \Big)\notag\\
&\times\widehat{Q}^{\bm{\pi}_{\bm{\theta}'}}_{i}(\bm{s}_{t+1},\bm{a}_{t+1})\Big|\notag\\
\leq&\gamma\max_{\bm{s}_{t+1},\bm{a}_{t+1}}\big|\widehat{Q}^{\bm{\pi_{\theta}}}_{i}(\bm{s}_{t+1},\bm{a}_{t+1})-\widehat{Q}^{\bm{\pi}_{\bm{\theta}'}}_{i}(\bm{s}_{t+1},\bm{a}_{t+1})\big|\notag\\
&+\gamma\sum_{\bm{s}_{t+1},\bm{a}_{t+1}}\bm{\mathcal{P}}(\bm{s}_{t+1}|\bm{s}_{t},\bm{a}_{t})\Big|\bm{\pi}_{\bm{\theta}}(\bm{a}_{t+1}|\bm{s}_{t+1})\notag\\
&-\bm{\pi}_{\bm{\theta}'}(\bm{a}_{t+1}|\bm{s}_{t+1})\Big|\Big|\widehat{Q}^{\bm{\pi}_{\bm{\theta}'}}_{i}(\bm{s}_{t+1},\bm{a}_{t+1})\Big|\notag\\
\leq&\gamma\max_{\bm{s}_{t+1},\bm{a}_{t+1}}\big|\widehat{Q}^{\bm{\pi_{\theta}}}_{i}(\bm{s}_{t+1},\bm{a}_{t+1})-\widehat{Q}^{\bm{\pi}_{\bm{\theta}'}}_{i}(\bm{s}_{t+1},\bm{a}_{t+1})\big|\notag\\
&+\sum_{\bm{s}_{t+1}}\bm{\mathcal{P}}(\bm{s}_{t+1}|\bm{s}_{t},\bm{a}_{t})\|\bm{\pi}_{\bm{\theta}}(\cdot|\bm{s}_{t+1})-\bm{\pi}_{\bm{\theta}'}(\cdot|\bm{s}_{t+1})\|_{1}\notag\\
&\times\sum_{h=t+1}^{H-1}\gamma^{h-t}R\label{thetruncatedobjectivefunction1-1-2}\\
\leq&\gamma\max_{\bm{s}_{t+1},\bm{a}_{t+1}}\big|\widehat{Q}^{\bm{\pi_{\theta}}}_{i}(\bm{s}_{t+1},\bm{a}_{t+1})-\widehat{Q}^{\bm{\pi}_{\bm{\theta}'}}_{i}(\bm{s}_{t+1},\bm{a}_{t+1})\big|\notag\\
&+\frac{2\gamma L^{s}_{p}R}{1-\gamma}\|\bm{\theta}-\bm{\theta}'\|,\label{thetruncatedobjectivefunction1-1-3}
\end{align}
where (\ref{thetruncatedobjectivefunction1-1-2}) uses the fact that $|\widehat{Q}^{\bm{\pi}_{\bm{\theta}'}}_{i}(\bm{s}_{t+1},\bm{a}_{t+1})|\leq\sum_{h=t+1}^{H-1}\gamma^{h-t-1}R$ and (\ref{thetruncatedobjectivefunction1-1-3}) use $\|\bm{\pi}_{\bm{\theta}}(\cdot|\bm{s}_{t+1})-\bm{\pi}_{\bm{\theta}'}(\cdot|\bm{s}_{t+1})\|_{1}$ $=2\|\bm{\pi}_{\bm{\theta}}(\cdot|\bm{s}_{t+1})-\bm{\pi}_{\bm{\theta}'}(\cdot|\bm{s}_{t+1})\|_{\mathrm{TV}}\leq2L^{s}_{p}\|\bm{\theta}-\bm{\theta}'\|$.
Unfolding the recursion (\ref{thetruncatedobjectivefunction1-1-3}) until horizon $H-1$, we obtain
\begin{align}\label{thetruncatedobjectivefunction1-1-4}
\big|\widehat{Q}^{\bm{\pi_{\theta}}}_{i}(\bm{s}_{t},\bm{a}_{t})-\widehat{Q}^{\bm{\pi}_{\bm{\theta}'}}_{i}(\bm{s}_{t},\bm{a}_{t})\big|\leq\frac{2L^{s}_{p}R}{(1-\gamma)^{2}}\|\bm{\theta}-\bm{\theta}'\|.
\end{align}
Substituting (\ref{thetruncatedobjectivefunction1-1-4}) into (\ref{thetruncatedobjectivefunction1-1-1}) and using properties (i)-(iii) of $\bm{\pi}_{\bm{\theta}}$, we have
\begin{align}\label{thetruncatedobjectivefunction1-1-5}
G^{\bm{\pi_{\theta}}}_{i}(\bm{\tau})-G^{\bm{\pi}_{\bm{\theta}'}}_{i}(\bm{\tau})\leq&\Big(\frac{L^{s}_{g}R}{(1-\gamma)^{2}}+\frac{2L^{s}_{b}L^{s}_{p}R}{(1-\gamma)^{3}}\Big)\Big\|\bm{\theta}-\bm{\theta}'\Big\|.
\end{align}
\par
For the $\mathrm{(ii)}$-term of (\ref{thetruncatedobjectivefunction1}), we define
$G^{\bm{\pi_{\theta}}}_{i,t}(\bm{s}_{t},\bm{a}_{t})=\gamma^{t}\widehat{Q}_{i,t}^{\bm{\pi_\theta}}(\bm{s}_t,\bm{a}_t)\nabla_{\bm{\theta}}\log\bm{\pi_\theta}(\bm{a}_t|\bm{s}_t)$ and $P^{\bm{\pi}_{\bm{\theta}}}_{t}(\bm{s}_{t},\bm{a}_{t})$ as the probability of occurrence of $(\bm{s}_{t},\bm{a}_{t})$ at time $t$ under joint policy $\bm{\pi_{\theta}}$ and initial state distribution $\bm{\rho}$.
Then, the $\mathrm{(ii)}$-term of (\ref{thetruncatedobjectivefunction1}) can be rewritten as
\begin{align}
&\mathbb{E}_{\bm{\tau}\sim\bm{\pi}_{\bm{\theta}}}\big[G^{\bm{\pi}_{\bm{\theta}'}}_{i}(\bm{\tau})\big]-\mathbb{E}_{\bm{\tau}\sim\bm{\pi}_{\bm{\theta}'}}\big[G^{\bm{\pi}_{\bm{\theta}'}}_{i}(\bm{\tau})\big]\notag\\
=&\sum_{t=0}^{H-1}\big(\mathbb{E}_{\bm{\pi}_{\bm{\theta}}}[G^{\bm{\pi}_{\bm{\theta}'}}_{i,t}(\bm{s}_{t},\bm{a}_{t})]-\mathbb{E}_{\bm{\pi}_{\bm{\theta}'}}[G^{\bm{\pi}_{\bm{\theta}'}}_{i,t}(\bm{s}_{t},\bm{a}_{t})]\big)\notag\\
=&\sum_{t=0}^{H-1}\sum_{\bm{s}_{t},\bm{a}_{t}}\big(P^{\bm{\pi}_{\bm{\theta}}}_{t}(\bm{s}_{t},\bm{a}_{t})-P^{\bm{\pi}_{\bm{\theta}'}}_{t}(\bm{s}_{t},\bm{a}_{t})\big)G^{\bm{\pi}_{\bm{\theta}'}}_{i,t}(\bm{s}_{t},\bm{a}_{t})\Big)\notag\\
\leq&\sum_{t=0}^{H-1}\frac{2L^{s}_{b}R}{1-\gamma}\|P^{\bm{\pi}_{\bm{\theta}}}_{t}-P^{\bm{\pi}_{\bm{\theta}'}}_{t}\|_{\mathrm{TV}},\label{thetruncatedobjectivefunction1-2-1}
\end{align}
where $P^{\bm{\pi}_{\bm{\theta}}}_{t}$ and $P^{\bm{\pi}_{\bm{\theta}'}}_{t}$ are vectors composed of elements $\{P^{\bm{\pi}_{\bm{\theta}}}_{t}(\bm{s}_{t},\bm{a}_{t})\}_{\bm{s}_{t},\bm{a}_{t}}$ and $\{P^{\bm{\pi}_{\bm{\theta}'}}_{t}(\bm{s}_{t},\bm{a}_{t})\}_{\bm{s}_{t},\bm{a}_{t}}$, respectively, and the last inequality follows from the fact that $|G^{\bm{\pi}_{\bm{\theta}'}}_{i,t}(\bm{s}_{t},\bm{a}_{t})|\leq\frac{L^{s}_{b}R}{1-\gamma}$. Since $P^{\bm{\pi}_{\bm{\theta}}}_{t}(\bm{s}_{t},\bm{a}_{t})=\sum_{\bm{s}_{t-1},\bm{a}_{t-1}}P^{\bm{\pi}_{\bm{\theta}}}_{t-1}(\bm{s}_{t-1},\bm{a}_{t-1})P^{\bm{\pi}_{\bm{\theta}}}(\bm{s}_{t},\bm{a}_{t}|\bm{s}_{t-1},\bm{a}_{t-1})$, we can derive that
\begin{align}
&\|P^{\bm{\pi}_{\bm{\theta}}}_{t}-P^{\bm{\pi}_{\bm{\theta}'}}_{t}\|_{\mathrm{TV}}\notag\\
\leq&\|P^{\bm{\pi}_{\bm{\theta}}}_{t-1}-P^{\bm{\pi}_{\bm{\theta}'}}_{t-1}\|_{\mathrm{TV}}+\max_{\bm{s}_{t-1},\bm{a}_{t-1}}\big\|P^{\bm{\pi}_{\bm{\theta}}}(\cdot,\cdot|\bm{s}_{t-1},\bm{a}_{t-1})\notag\\
&-P^{\bm{\pi}_{\bm{\theta}'}}(\cdot,\cdot|\bm{s}_{t-1},\bm{a}_{t-1})\big\|_{\mathrm{TV}}\notag\\
\leq&\|P^{\bm{\pi}_{\bm{\theta}}}_{t-1}-P^{\bm{\pi}_{\bm{\theta}'}}_{t-1}\|_{\mathrm{TV}}+\max_{\bm{s}_{t}}\|\bm{\pi}_{\bm{\theta}}(\cdot|\bm{s}_{t})-\bm{\pi}_{\bm{\theta}'}(\cdot|\bm{s}_{t})\|_{\mathrm{TV}}\label{thetruncatedobjectivefunction1-2-3}\\
\leq&\|P^{\bm{\pi}_{\bm{\theta}}}_{t-1}-P^{\bm{\pi}_{\bm{\theta}'}}_{t-1}\|_{\mathrm{TV}}+L^{s}_{p}\|\bm{\theta}-\bm{\theta}'\|\label{thetruncatedobjectivefunction1-2-4}\\
\vdots\notag\\
\leq& L^{s}_{p}t\|\bm{\theta}-\bm{\theta}'\|,\label{thetruncatedobjectivefunction1-2-5}
\end{align}
where (\ref{thetruncatedobjectivefunction1-2-3}) uses the fact that $P^{\bm{\pi}_{\bm{\theta}}}(\bm{s}_{t},\bm{a}_{t}|\bm{s}_{t-1},\bm{a}_{t-1})=\bm{\mathcal{P}}(\bm{s}_{t}|\bm{s}_{t-1},\bm{a}_{t-1})\bm{\pi}_{\bm{\theta}}(\bm{a}_{t}|\bm{s}_{t})$ and (\ref{thetruncatedobjectivefunction1-2-4}) uses $\|\bm{\pi}_{\bm{\theta}}(\cdot|\bm{s}) - \bm{\pi}_{\bm{\theta}'}(\cdot|\bm{s})\|_{\rm TV} \leq L^{s}_{p}\|\bm{\theta}-\bm{\theta}'\|$ for all $\bm{s}\in\bm{\mathcal{S}}$.\\
Substituting (\ref{thetruncatedobjectivefunction1-2-5}) into (\ref{thetruncatedobjectivefunction1-2-1}), we further have
\begin{align}\label{thetruncatedobjectivefunction1-2-6}
&\mathbb{E}_{\bm{\tau}\sim\bm{\pi}_{\bm{\theta}}}\big[G^{\bm{\pi}_{\bm{\theta}'}}_{i}(\bm{\tau})\big]-\mathbb{E}_{\bm{\tau}\sim\bm{\pi}_{\bm{\theta}'}}\big[G^{\bm{\pi}_{\bm{\theta}'}}_{i}(\bm{\tau})\big]\notag\\
\leq&\frac{L^{s}_{b}L^{s}_{p}(H-1)HR}{1-\gamma}\Big\|\bm{\theta}-\bm{\theta}'\Big\|.
\end{align}
\par
Combining (\ref{thetruncatedobjectivefunction1}), (\ref{thetruncatedobjectivefunction1-1-5}), and (\ref{thetruncatedobjectivefunction1-2-6}), we can obtain that $\nabla_{\bm{\theta}}\widehat{J}_{i}(\bm{\theta})$ is Lipschitz continuity.
\end{proof}

\subsection{Proof of Theorem~\ref{lemmaoftruncatederror}}\label{ProofofLemmalemmaoftruncatederror}
\begin{proof}
Before formally proving this theorem, let's first introduce some auxiliary definitions.
\par
Different to the objective function $J(\bm{\theta})$ in (\ref{theobjectivefunction}), we define $\widetilde{J}_{i}(\bm{\theta})$ as the spatially truncated objective function associated with agent $i$, which {\color{blue}is} expressed as
\begin{align}\label{thespatiallytruncatedobjectivefunction}
\widetilde{J}_{i}(\bm{\theta})=\mathbb{E}_{\bm{s}\sim\bm{\rho}}\Big[\frac{1}{N}\sum_{t=0}^{\infty}\sum_{j\in\mathcal{N}^{\kappa}_{i}}\gamma^{t}r_{j,t}\Big|\bm{s}_{0}=\bm{s},\bm{a}_{t}\sim\bm{\pi_{\theta}}(\cdot|\bm{s}_{t})\Big].
\end{align}
By using $\nabla_{\theta_{i}}\widetilde{J}_{i}(\bm{\theta})$ as the intermediate term for analyzing the errors {\color{blue}between} $\nabla_{\theta_{i}}\widehat{J}_{i}(\bm{\theta})$ and $\nabla_{\theta_{i}}J(\bm{\theta})$, we have
\begin{align}
&\|\nabla_{\theta_{i}}\widehat{J}_{i}(\bm{\theta})-\nabla_{\theta_{i}}J(\bm{\theta})\|\notag\\
\leq&\underbrace{\|\nabla_{\theta_{i}}\widehat{J}_{i}(\bm{\theta})-\nabla_{\theta_{i}}\widetilde{J}_{i}(\bm{\theta})\|}_{\mathrm{(i)}}+\underbrace{\|\nabla_{\theta_{i}}\widetilde{J}_{i}(\bm{\theta})-\nabla_{\theta_{i}}J(\bm{\theta})\|}_{\mathrm{(ii)}}.\label{theorem1bu1}
\end{align}
\par
For $\mathrm{(i)}$-term in (\ref{theorem1bu1}), we revisit the Monte Carlo policy gradient formulation presented in~\citep{Sutton1998} and obtain $\nabla_{\theta_{i}}\widetilde{J}_{i}(\bm{\theta})=\mathbb{E}_{\bm{s}_t,\bm{a}_t\sim\bm{\pi}_{\bm{\theta}}}\big[\frac{1}{N}\sum_{t=0}^{\infty}$ $\nabla_{\theta_i}\log\pi_{i}(a_{i,t}|s_{i,t},\theta_{i})\sum_{j\in\mathcal{N}^{\kappa}_{i}}\sum_{l=0}^{\infty} \gamma^{t+l}$
$r_{j,t+l}\big]$ and $\nabla_{\theta_{i}}\widehat{J}_{i}(\bm{\theta})=\mathbb{E}_{\bm{s}_t,\bm{a}_t\sim\bm{\pi}_{\bm{\theta}}}\big[\frac{1}{N}\sum_{t=0}^{H-1}$
$\nabla_{\theta_i}\log\pi_{i}(a_{i,t}|s_{i,t},\theta_{i})$
$\sum_{j\in\mathcal{N}^{\kappa}_{i}}\sum_{l=0}^{H-1-t} \gamma^{t+l}r_{j,t+l}\big]$.
Building upon these expressions, we have
\begin{align}
&\|\nabla_{\theta_{i}}\widehat{J}_{i}(\bm{\theta})-\nabla_{\theta_{i}}\widetilde{J}_{i}(\bm{\theta})\|\notag\\
=&\Big\|\mathbb{E}_{\bm{s}_t,\bm{a}_t\sim\bm{\pi}_{\bm{\theta}}}\Big[\frac{1}{N}\sum_{t=0}^{H-1} \nabla_{\theta_i}\log\pi_{i}(a_{i,t}|s_{i,t},\theta_{i})\sum_{j\in\mathcal{N}^{\kappa}_{i}}\notag\\
&\sum_{l=H-t}^{\infty}\gamma^{t+l}r_{j,t+l}\Big]+\mathbb{E}_{\bm{s}_t,\bm{a}_t\sim\bm{\pi}_{\bm{\theta}}}\Big[\frac{1}{N}\sum_{t=H}^{\infty}\notag\\
&\nabla_{\theta_i}\log\pi_{i}(a_{i,t}|s_{i,t},\theta_{i})\sum_{j\in\mathcal{N}^{\kappa}_{i}}\sum_{l=0}^{\infty}\gamma^{t+l}r_{j,t+l}\Big]\Big\|\notag\\
\leq&\sum_{t=0}^{H-1}\frac{BR\gamma^{H}}{1-\gamma} +\sum_{t=H}^{\infty}\frac{BR\gamma^{t}}{1-\gamma}\label{theorem1bu1-1}\\
\leq&\frac{BR(H+1)\gamma^{H}}{(1-\gamma)^{2}},\label{theorem1bu1-2}
\end{align}
where (\ref{theorem1bu1-1}) can be achieved by Assumption~\ref{theassumptionofreward} and Assumption~\ref{theassumptionofpolicy}.
\par
For $\mathrm{(ii)}$-term in (\ref{theorem1bu1}),
we introduce several definitions to facilitate the analysis.
By using $\xi^{\bm{\pi_{\theta}}}_{\bm{\rho}}(\bm{s},\bm{a})$ in {\color{blue}(\ref{thestationarydistributionofsa})}, we further design a class of  truncated $Q$-function as
\begin{align}\label{thetruncatedQfunctions}
&Q^{\bm{\pi_{\theta}}}_{tru,i}(s_{\mathcal{N}^{\kappa}_{i}},a_{\mathcal{N}^{\kappa}_{i}})\notag\\
=&\sum_{s_{\mathcal{N}^{\kappa}_{-i}},a_{\mathcal{N}^{\kappa}_{-i}}}\xi^{\bm{\pi_{\theta}}}_{\bm{\rho}}(s_{\mathcal{N}^{\kappa}_{-i}},a_{\mathcal{N}^{\kappa}_{-i}}|s_{\mathcal{N}^{\kappa}_{i}},a_{\mathcal{N}^{\kappa}_{i}})\notag\\
&{\color{blue}\times} Q^{\bm{\pi_{\theta}}}_{i}(s_{\mathcal{N}^{\kappa}_{i}},s_{\mathcal{N}^{\kappa}_{-i}},a_{\mathcal{N}^{\kappa}_{i}},a_{\mathcal{N}^{\kappa}_{-i}}),
\end{align}
where  $\xi^{\bm{\pi_{\theta}}}_{\bm{\rho}}(s_{\mathcal{N}^{\kappa}_{-i}},a_{\mathcal{N}^{\kappa}_{-i}}|s_{\mathcal{N}^{\kappa}_{i}},a_{\mathcal{N}^{\kappa}_{i}})$ is the weight coefficient and satisfies
\begin{align}\label{thestationarydistributionof-s-a}
&\xi^{\bm{\pi_{\theta}}}_{\bm{\rho}}(s_{\mathcal{N}^{\kappa}_{-i}},a_{\mathcal{N}^{\kappa}_{-i}}|s_{\mathcal{N}^{\kappa}_{i}},a_{\mathcal{N}^{\kappa}_{i}})\notag\\
=&\frac{\xi^{\bm{\pi_{\theta}}}_{\bm{\rho}}(s_{\mathcal{N}^{\kappa}_{i}},s_{\mathcal{N}^{\kappa}_{-i}},a_{\mathcal{N}^{\kappa}_{i}},a_{\mathcal{N}^{\kappa}_{-i}})}{\sum_{s'_{\mathcal{N}^{\kappa}_{-i}},a'_{\mathcal{N}^{\kappa}_{-i}}}\xi^{\bm{\pi_{\theta}}}_{\bm{\rho}}(s_{\mathcal{N}^{\kappa}_{i}},s'_{\mathcal{N}^{\kappa}_{-i}},a_{\mathcal{N}^{\kappa}_{i}},a'_{\mathcal{N}^{\kappa}_{-i}})}.
\end{align}
{\color{blue}By} the definition of $\xi^{\bm{\pi_{\theta}}}_{\bm{\rho}}(s_{\mathcal{N}^{\kappa}_{-i}},a_{\mathcal{N}^{\kappa}_{-i}}|s_{\mathcal{N}^{\kappa}_{i}},a_{\mathcal{N}^{\kappa}_{i}})$ in (\ref{thestationarydistributionof-s-a}) and Assumption~\ref{theassumptionofdistributionofs}, it is obvious that the weight coefficients in (\ref{thetruncatedQfunctions}) are non-negative and satisfy
\begin{align}\label{thesumofweight}
\sum_{s'_{\mathcal{N}^{\kappa}_{-i}},a'_{\mathcal{N}^{\kappa}_{-i}}}\!\!\!\!\xi^{\bm{\pi_{\theta}}}_{\bm{\rho}}(s'_{\mathcal{N}^{\kappa}_{-i}},a'_{\mathcal{N}^{\kappa}_{-i}}|s_{\mathcal{N}^{\kappa}_{i}},a_{\mathcal{N}^{\kappa}_{i}})=1.
\end{align}
Inspired by~\citep{QuCLDC2020}, we define {\color{blue}a} truncated policy gradient of agent $i$ as
\begin{align}
g^{\bm{\pi_{\theta}}}_{tru,i}=&\frac{1}{1-\gamma}\mathbb{E}_{\bm{s}\sim d^{\bm{\pi_{\theta}}}_{\bm{\rho}},\bm{a}\sim\bm{\pi_{\theta}}}\Big[\frac{1}{N}\sum_{j\in\mathcal{N}^{\kappa}_{i}}Q^{\bm{\pi_{\theta}}}_{tru,j}(s_{\mathcal{N}^{\kappa}_{j}},a_{\mathcal{N}^{\kappa}_{j}})\notag\\
&{\color{blue}\times}\nabla_{\theta_{i}}\log\pi_{i}(a_{i}|s_{i},\theta_{i})\Big],\label{thetruncatedpolicygradient}
\end{align}
and have that
\begin{align}\label{thetruncatedpolicygradienterror}
\|g^{\bm{\pi_{\theta}}}_{tru,i}-\nabla_{\theta_{i}}J(\bm{\theta})\|\leq\frac{2BR}{(1-\gamma)^{2}}\gamma^{\kappa+1}.
\end{align}
The result in (\ref{thetruncatedpolicygradienterror}) characterizes the approximation error between the truncated policy gradient {\color{blue}$g^{\bm{\pi_{\theta}}}_{tru,i}$} and the exact policy gradient $\nabla_{\theta_{i}}J(\bm{\theta})$, which can be directly obtained from Lemma~3 and Lemma~4 in~\citep{QuCLDC2020}.
Note that unlike~\citep{QuCLDC2020}, where the reward is assumed to lie in $[0, R]$, {\color{blue}the reward range in Assumption~\ref{theassumptionofreward} of this work} is $[-R, R]$.
Consequently, the upper bound of (\ref{thetruncatedpolicygradienterror}) contains an additional factor of $2$.
\par
For any agent $i\in\mathcal{N}$, by the definition of $g^{\bm{\pi_{\theta}}}_{tru,i}$ in (\ref{thetruncatedpolicygradient}), we have
\begin{align}
g^{\bm{\pi}_{\bm{\theta}}}_{tru,i}
=&\frac{1}{1-\gamma}\mathbb{E}_{\bm{s}\sim d^{\bm{\pi}_{\bm{\theta}}}_{\bm{\rho}},\bm{a}\sim\bm{\pi}_{\bm{\theta}}}\Big[\frac{1}{N}\sum_{j\in\mathcal{N}^{\kappa}_{i}}Q^{\bm{\pi}_{\bm{\theta}}}_{tru,j}(s_{\mathcal{N}^{\kappa}_{j}},a_{\mathcal{N}^{\kappa}_{j}})\notag\\
&{\color{blue}\times}\nabla_{\theta_{i}}\log\pi_{i}(a_{i}|s_{i},\theta_{i})\Big]\notag\\
=&\frac{1}{1-\gamma}\mathbb{E}_{\bm{s}\sim d^{\bm{\pi}_{\bm{\theta}}}_{\bm{\rho}},\bm{a}\sim\bm{\pi}_{\bm{\theta}}}\Big[\frac{1}{N}\sum_{j\in\mathcal{N}^{\kappa}_{i}}\sum_{\tilde{s}_{\mathcal{N}^{\kappa}_{-j}},\tilde{a}_{\mathcal{N}^{\kappa}_{-j}}}\notag\\
&\xi^{\bm{\pi}_{\bm{\theta}}}_{\bm{\rho}}(\tilde{s}_{\mathcal{N}^{\kappa}_{-j}},\tilde{a}_{\mathcal{N}^{\kappa}_{-j}}|s_{\mathcal{N}^{\kappa}_{j}},a_{\mathcal{N}^{\kappa}_{j}})\notag\\
&{\color{blue}\times}Q^{\bm{\pi}_{\bm{\theta}}}_{j}(s_{\mathcal{N}^{\kappa}_{j}},\tilde{s}_{\mathcal{N}^{\kappa}_{-j}},a_{\mathcal{N}^{\kappa}_{j}},\tilde{a}_{\mathcal{N}^{\kappa}_{-j}})\notag\\
&{\color{blue}\times}\nabla_{\theta_{i}}\log\pi_{i}(a_{i}|s_{i},\theta_{i,t})\Big]\notag\\
=&\frac{1}{1-\gamma}\mathbb{E}_{\bm{s}\sim d^{\bm{\pi}_{\bm{\theta}}}_{\bm{\rho}},\bm{a}\sim\bm{\pi}_{\bm{\theta}}}\Big[\frac{1}{N}\notag\\
&{\color{blue}\times}\sum_{j\in\mathcal{N}^{\kappa}_{i}}Q^{\bm{\pi}_{\bm{\theta}}}_{j}(s_{\mathcal{N}^{\kappa}_{j}},s_{\mathcal{N}^{\kappa}_{-j}},a_{\mathcal{N}^{\kappa}_{j}},a_{\mathcal{N}^{\kappa}_{-j}})\notag\\
&{\color{blue}\times}\nabla_{\theta_{i}}\log\pi_{i}(a_{i}|s_{i},\theta_{i})\Big]\label{theequationoftherelationshipbetweentwofunctionskey}\\
=&\nabla_{\theta_{i}}\widetilde{J}_{i}(\bm{\theta}),\label{theequationoftherelationshipbetweentwofunctions2}
\end{align}
where the equality (\ref{theequationoftherelationshipbetweentwofunctionskey}) is obtained from the definition of $\xi^{\bm{\pi_{\theta}}}_{\bm{\rho}}(s_{\mathcal{N}^{\kappa}_{-i}},a_{\mathcal{N}^{\kappa}_{-i}}|s_{\mathcal{N}^{\kappa}_{i}},a_{\mathcal{N}^{\kappa}_{i}})$ in (\ref{thestationarydistributionof-s-a})
and the equality (\ref{theequationoftherelationshipbetweentwofunctions2}) uses another expression of $\nabla_{\theta_{i}}\widetilde{J}_{i}(\bm{\theta})$, similar to the form of $\nabla_{\theta_{i}}J(\bm{\theta})$ in (\ref{thepolicygradienttheorem}).
By combining (\ref{thetruncatedpolicygradienterror}) and (\ref{theequationoftherelationshipbetweentwofunctions2}), we obtain
\begin{align}\label{midelresult}
\|\nabla_{\theta_{i}}\widetilde{J}_{i}(\bm{\theta})-\nabla_{\theta_{i}}J(\bm{\theta})\|\leq\frac{2BR}{(1-\gamma)^{2}}\gamma^{\kappa+1}
\end{align}
for all agent $i\in\mathcal{N}$.
\par
Substituting (\ref{theorem1bu1-2}) and (\ref{midelresult}) into (\ref{theorem1bu1}), we can complete the proof.
\end{proof}

\subsection{Proof of Lemma~\ref{thelemmaofperturbedobjectivefunction}}\label{ProofofLemmathelemmaofperturbedobjectivefunction}
\begin{proof}
(i) By the definition of $\widehat{J}^{\mu}_{i}(\bm{\theta})$ in (\ref{perturbedvaluefunction}), for any $\bm{\theta},\bm{\theta}'\in\mathbb{R}^{d_{\mathrm{tot}}}$, we can have
\begin{align}
&\|\nabla_{\bm{\theta}}\widehat{J}^{\mu}_{i}(\bm{\theta})-\nabla_{\bm{\theta}}\widehat{J}^{\mu}_{i}(\bm{\theta}')\|\notag\\
=&\big\|\nabla_{\bm{\theta}}\mathbb{E}_{\bm{v}}[\widehat{J}_{i}((\bm{\theta}+\mu\bm{v})]-\nabla_{\bm{\theta}}\mathbb{E}_{\bm{v}}[\widehat{J}_{i}(\bm{\theta}'+\mu\bm{v})]\big\|\notag\\
=&\big\|\mathbb{E}_{\bm{v}}\big[\nabla_{\bm{\theta}}\widehat{J}_{i}(\bm{\theta}+\mu\bm{v})-\nabla_{\bm{\theta}}\widehat{J}_{i}(\bm{\theta}'+\mu\bm{v})\big]\big\|\notag\\
\leq&\mathbb{E}_{\bm{v}}\big[\big\|\nabla_{\bm{\theta}}\widehat{J}_{i}(\bm{\theta}+\mu\bm{v})-\nabla_{\bm{\theta}}\widehat{J}_{i}(\bm{\theta}'+\mu\bm{v})\big\|\big]\label{thegradientofperturbedobjectivefunction1-1}\\
\leq&L\|\bm{\theta}-\bm{\theta}'\|,\label{thegradientofperturbedobjectivefunction1-2}
\end{align}
where (\ref{thegradientofperturbedobjectivefunction1-1}) uses the triangle inequality of norm and (\ref{thegradientofperturbedobjectivefunction1-2}) can be obtained by Assumption~\ref{theassumptionofgradientofobjectivefunction}.
\par
Recalling the expression for $\widehat{J}^{\mu}_{i}(\bm{\theta})$ in (\ref{perturbedvaluefunction}), we can have
\begin{align}
\nabla_{\bm{\theta}}\widehat{J}^{\mu}_{i}(\bm{\theta})
=&\nabla_{\bm{\theta}}\mathbb{E}_{\bm{v}}[\widehat{J}_{i}(\bm{\theta}+\mu\bm{v})]\notag\\
=&\mathbb{E}_{\bm{v}}[\nabla_{\bm{\theta}}\widehat{J}_{i}(\bm{\theta}+\mu\bm{v})]\notag\\
=&\mathbb{E}_{\bm{v}}\Big[\frac{1}{\mu}\widehat{J}_{i}(\bm{\theta}+\mu\bm{v})\bm{v}\Big]\label{thegradientofperturbedobjectivefunction2-1}\\
=&\mathbb{E}_{\bm{v}}\Big[\frac{1}{\mu}\big(\widehat{J}_{i}(\bm{\theta}+\mu\bm{v})-\widehat{J}_{i}(\bm{\theta})\big)\bm{v}\Big]\label{thegradientofperturbedobjectivefunction2-2},
\end{align}
where (\ref{thegradientofperturbedobjectivefunction2-1}) uses the Stein's identity and (\ref{thegradientofperturbedobjectivefunction2-2}) comes follows the fact that $\mathbb{E}_{\bm{v}}[\widehat{J}_{i}(\bm{\theta})\bm{v}]=\mathbf{0}_{d_{\mathrm{tot}}}$.
\par
(ii) By applying the $L$-smooth of $\widehat{J}_{i}(\bm{\theta})$, we can derive
\begin{align}\label{thegradientofperturbedobjectivefunction(ii)1-1}
\big|\widehat{J}_{i}(\bm{\theta}+\mu\bm{v})-\widehat{J}_{i}(\bm{\theta})-\mu\nabla_{\bm{\theta}}\widehat{J}_{i}(\bm{\theta})^{\top}\bm{v}\big|\leq\frac{L\mu^{2}}{2}\bm{v}^{\top}\bm{v}.
\end{align}
{\color{blue}By using} the definition on $\widehat{J}^{\mu}_{i}(\bm{\theta})$ in (\ref{perturbedvaluefunction}),
we further have
\begin{align}
&|\widehat{J}^{\mu}_{i}(\bm{\theta})-\widehat{J}_{i}(\bm{\theta})|\notag\\
=&\big|\mathbb{E}_{\bm{v}}[\widehat{J}_{i}(\bm{\theta}+\mu\bm{v})]-\widehat{J}_{i}(\bm{\theta})\big|\notag\\
=&\big|\mathbb{E}_{\bm{v}}[\widehat{J}_{i}(\bm{\theta}+\mu\bm{v})]-\mathbb{E}_{\bm{v}}[\widehat{J}_{i}(\bm{\theta})]-\mathbb{E}_{\bm{v}}[\mu\nabla_{\bm{\theta}}\widehat{J}_{i}(\bm{\theta})^{\top}\bm{v}]\big|\label{thegradientofperturbedobjectivefunction(ii)3-1}\\
\leq&\mathbb{E}_{\bm{v}}\big[\big|\widehat{J}_{i}(\bm{\theta}+\mu\bm{v})-\widehat{J}_{i}(\bm{\theta})-\mu\nabla_{\bm{\theta}}\widehat{J}_{i}(\bm{\theta})^{\top}\bm{v}\big|\big]\notag\\
\leq&\frac{L\mu^{2}}{2}\mathbb{E}_{\bm{v}}[\bm{v}^{\top}\bm{v}]\label{thegradientofperturbedobjectivefunction(ii)3-3}\\
=&\frac{L\mu^{2}d_{\mathrm{tot}}}{2},\label{thegradientofperturbedobjectivefunction(ii)3-4}
\end{align}
where (\ref{thegradientofperturbedobjectivefunction(ii)3-1}) uses the fact that $\mathbb{E}_{\bm{v}}[\nabla_{\bm{\theta}}\widehat{J}_{i}(\bm{\theta})^{\top}\bm{v}]=0$, (\ref{thegradientofperturbedobjectivefunction(ii)3-3}) can be obtained by (\ref{thegradientofperturbedobjectivefunction(ii)1-1}), and (\ref{thegradientofperturbedobjectivefunction(ii)3-4}) can be achieved by the fact that $\mathbb{E}_{\bm{v}}[\bm{v}^{\top}\bm{v}]=d_{\mathrm{tot}}$.
\par
(iii) Recalling the definition of $\widehat{J}^{\mu}_{i}(\bm{\theta})$ in (\ref{perturbedvaluefunction}) again, we have
\begin{align}
&\|\nabla_{\bm{\theta}}\widehat{J}^{\mu}_{i}(\bm{\theta})-\nabla_{\bm{\theta}}\widehat{J}_{i}(\bm{\theta})\|\notag\\
=&\big\|\mathbb{E}_{\bm{v}}[\nabla_{\bm{\theta}}\widehat{J}_{i}(\bm{\theta}+\mu\bm{v})-\nabla_{\bm{\theta}}\widehat{J}_{i}(\bm{\theta})]\big\|\notag\\
\leq&\mathbb{E}_{\bm{v}}\big[\big\|\nabla_{\bm{\theta}}\widehat{J}_{i}(\bm{\theta}+\mu\bm{v})-\nabla_{\bm{\theta}}\widehat{J}_{i}(\bm{\theta})\big\|\big]\notag\\
\leq&L\mu\mathbb{E}_{\bm{v}}[\|\bm{v}\|]\label{thegradientofperturbedobjectivefunction(iii)1-1}\\
\leq&L\mu\sqrt{d_{\mathrm{tot}}},\label{thegradientofperturbedobjectivefunction(iii)1-2}
\end{align}
where (\ref{thegradientofperturbedobjectivefunction(iii)1-1}) uses the $L$-smoothness of $\widehat{J}_{i}(\bm{\theta})$ and (\ref{thegradientofperturbedobjectivefunction(iii)1-2}) can be achieved by the fact that $\mathbb{E}_{\bm{v}}[\|\bm{v}\|]\leq\sqrt{d_{\mathrm{tot}}}$.
\par
(iv) For $\Big\|\frac{1}{\mu}\big(\widehat{J}_{i}(\bm{\theta}+\mu\bm{v})-\widehat{J}_{i}(\bm{\theta})\big)\bm{v}\Big\|^{2}$, we have
\begin{align}
&\Big\|\frac{1}{\mu}\big(\widehat{J}_{i}(\bm{\theta}+\mu\bm{v})-\widehat{J}_{i}(\bm{\theta})\big)\bm{v}\Big\|^{2}\notag\\
=&\frac{1}{\mu^{2}}\big(\widehat{J}_{i}(\bm{\theta}+\mu\bm{v})-\widehat{J}_{i}(\bm{\theta})\big)^{2}\|\bm{v}\|^{2}\notag\\
\leq&\frac{1}{\mu^{2}}\big(\mu\|\nabla_{\bm{\theta}}\widehat{J}_{i}(\bm{\theta})\|\|\bm{v}\|+\frac{\mu^{2}L}{2}\|\bm{v}\|^{2}\big)^{2}\|\bm{v}\|^{2}\label{thegradientofperturbedobjectivefunction(iv)1-1}\\
\leq&\frac{1}{\mu^{2}}\big(2\mu^{2}\|\nabla_{\bm{\theta}}\widehat{J}_{i}(\bm{\theta})\|^{2}\|\bm{v}\|^{2}+\frac{\mu^{4}L^{2}}{2}\|\bm{v}\|^{4}\big)\|\bm{v}\|^{2}\label{thegradientofperturbedobjectivefunction(iv)1-2}\\
=&2\|\nabla_{\bm{\theta}}\widehat{J}_{i}(\bm{\theta})\|^{2}\|\bm{v}\|^{4}+\frac{\mu^{2}L^{2}}{2}\|\bm{v}\|^{6},\label{thegradientofperturbedobjectivefunction(iv)1-3}
\end{align}
where (\ref{thegradientofperturbedobjectivefunction(iv)1-1}) uses the property of $L$-smooth of $\widehat{J}_{i}(\bm{\theta})$ and (\ref{thegradientofperturbedobjectivefunction(iv)1-2}) comes from the fact that $(x+y)^{2}\leq2x^{2}+2y^{2}$ for all $x,y\in\mathbb{R}$.
Taking the expectation of both sides of (\ref{thegradientofperturbedobjectivefunction(iv)1-3}), we have
\begin{align}
&\mathbb{E}_{\bm{v}}\Big[\Big\|\frac{1}{\mu}\big(\widehat{J}_{i}(\bm{\theta}+\mu\bm{v})-\widehat{J}_{i}(\bm{\theta})\big)\bm{v}\Big\|^{2}\Big]\notag\\
\leq&\mathbb{E}_{\bm{v}}\Big[2\|\nabla_{\bm{\theta}}\widehat{J}_{i}(\bm{\theta})\|^{2}\|\bm{v}\|^{4}+\frac{\mu^{2}L^{2}}{2}\|\bm{v}\|^{6}\Big]\notag\\
=&2d_{\mathrm{tot}}(d_{\mathrm{tot}}\!+\!2)\|\nabla_{\bm{\theta}}\widehat{J}_{i}(\bm{\theta})\|^{2}\!+\!\frac{\mu^{2}L^{2}d_{\mathrm{tot}}(d_{\mathrm{tot}}\!+\!2)(d_{\mathrm{tot}}\!+\!4)}{2},\label{thegradientofperturbedobjectivefunction(iv)2-1}
\end{align}
where the last inequality can be achieved by the fact that $\mathbb{E}_{\bm{v}}[\|\bm{v}\|^{4}]=d_{\mathrm{tot}}(d_{\mathrm{tot}}+2)$ and $\mathbb{E}_{\bm{v}}[\|\bm{v}\|^{6}]=d_{\mathrm{tot}}(d_{\mathrm{tot}}+2)(d_{\mathrm{tot}}+4)$.
\end{proof}

\subsection{Proof of Theorem~\ref{thetheoremofpolicygradient}}\label{ProofofTheoremthetheoremofpolicygradient}
\begin{proof}
(i) Recalling the definition of $\mathrm{Bias}_{t}$ in (\ref{theequationofthefirst-ordergradientbias}), we directly have
\begin{align}\label{equationoftheorem1-2-1}
\mathrm{Bias}_{t}\leq\big\|\nabla_{\bm{\theta}}J(\bm{\theta}_{t})\big\|\big\|\mathbb{E}\big[\nabla_{\bm{\theta}}J(\bm{\theta}_{t})-\bm{\hat{g}}_{t}\big|\mathcal{F}_{t}\big]\big\|.
\end{align}
Since $\nabla_{\bm{\theta}}J(\bm{\theta}_{t})=\big(\nabla_{\theta_{1}}J(\bm{\theta}_{t})^{\top},\cdots,\nabla_{\theta_{N}}J(\bm{\theta}_{t})^{\top}\big)^{\top}$ and $\bm{\hat{g}}_{t}=\big(\hat{g}^{\top}_{1,t},\cdots,\hat{g}^{\top}_{N,t}\big)^{\top}$,
for any agent $i\in\mathcal{N}$,
we have
\begin{align}
&\big\|\mathbb{E}\big[\nabla_{\theta_{i}}J(\bm{\theta}_{t})-\hat{g}_{i,t}\big|\mathcal{F}_{t}\big]\big\|\notag\\
=&\big\|\mathbb{E}\big[\nabla_{\theta_{i}}J(\bm{\theta}_{t})-\nabla_{\theta_{i}}\widehat{J}_{i}(\bm{\theta}_{t})+\nabla_{\theta_{i}}\widehat{J}_{i}(\bm{\theta}_{t})-\nabla_{\theta_{i}}\widehat{J}^{\mu}_{i}(\bm{\theta}_{t})\notag\\
&+\nabla_{\theta_{i}}\widehat{J}^{\mu}_{i}(\bm{\theta}_{t})-\hat{g}_{i,t}\big|\mathcal{F}_{t}\big]\big\|\notag\\
\leq&\big\|\nabla_{\theta_{i}}J(\bm{\theta}_{t})-\nabla_{\theta_{i}}\widehat{J}_{i}(\bm{\theta}_{t})\big\|+\big\|\nabla_{\theta_{i}}\widehat{J}_{i}(\bm{\theta}_{t})-\nabla_{\theta_{i}}\widehat{J}^{\mu}_{i}(\bm{\theta}_{t})\big\|\notag\\
&+\big\|\mathbb{E}\big[\nabla_{\theta_{i}}\widehat{J}^{\mu}_{i}(\bm{\theta}_{t})-\hat{g}_{i,t}\big|\mathcal{F}_{t}\big]\big\|\notag\\
\leq&\frac{BR\big((H+1)\gamma^{H}+2\gamma^{\kappa+1}\big)}{(1-\gamma)^{2}}+L\mu\sqrt{d_{\mathrm{tot}}}\notag\\
&+\big\|\mathbb{E}\big[\nabla_{\theta_{i}}\widehat{J}^{\mu}_{i}(\bm{\theta}_{t})-\hat{g}_{i,t}\big|\mathcal{F}_{t}\big]\big\|,\label{theequationofcorollary1}
\end{align}
where the last inequality follows from
Theorem~\ref{lemmaoftruncatederror} and (iii) of Lemma~\ref{thelemmaofperturbedobjectivefunction}.
\par
For the last term $\big\|\mathbb{E}\big[\nabla_{\theta_{i}}\widehat{J}^{\mu}_{i}(\bm{\theta}_{t})-\hat{g}_{i,t}\big|\mathcal{F}_{t}\big]\big\|$ in the right side of (\ref{theequationofcorollary1}), we have
\begin{align}
&\big\|\mathbb{E}\big[\nabla_{\theta_{i}}\widehat{J}^{\mu}_{i}(\bm{\theta}_{t})-\hat{g}_{i,t}\big|\mathcal{F}_{t}\big]\big\|\notag\\
=&\Big\|\mathbb{E}\Big[\frac{1}{\mu}\big(\widehat{J}_{i}(\bm{\theta}_{t}+\mu\bm{v}_{t})-\widehat{J}_{i}(\bm{\theta}_{t})\big)v_{i,t}\Big|\mathcal{F}_{t}\Big]\notag\\
&-\mathbb{E}\Big[\frac{1}{K\mu}\sum_{k=1}^{K}\sigma^{-1}(\hat{p}^{t,k}_{i})v_{i,t}\Big|\mathcal{F}_{t}\Big]\Big\|,\label{theequationofcorollary2}
\end{align}
which use the definition of $\hat{g}_{i,t}$ in (\ref{policygradientestimation}) and (i) of Lemma~\ref{thelemmaofperturbedobjectivefunction}.
Since each trajectory pair $(\tau^{t,k}_{\mathcal{N}^{\kappa}_{i},0},\tau^{t,k}_{\mathcal{N}^{\kappa}_{i},1})$ is generated independently from other {\color{blue}trajectory pairs}, we have
\begin{align}
&\mathbb{E}\Big[\frac{1}{K\mu}\sum_{k=1}^{K}\sigma^{-1}(\hat{p}^{t,k}_{i})v_{i,t}\Big|\mathcal{F}_{t}\Big]\notag\\
=&\mathbb{E}\Bigg[\mathbb{E}\Big[\frac{1}{K\mu}\sum_{k=1}^{K}\sigma^{-1}(\hat{p}^{t,k}_{i})\Big|v_{i,t}\Big]v_{i,t}\Bigg|\mathcal{F}_{t}\Bigg]\label{theequationofcorollary3-1}\\
=&\frac{1}{\mu}\mathbb{E}\Big[\mathbb{E}\big[\sigma^{-1}(\hat{p}^{t,k}_{i})\big|v_{i,t}\big]v_{i,t}\Big|\mathcal{F}_{t}\Big]\label{theequationofcorollary3-2}\\
=&\frac{1}{\mu}\mathbb{E}\Big[\mathbb{E}\big[\hat{r}_{i}(\tau^{t,k}_{\mathcal{N}^{\kappa}_{i},1})-\hat{r}_{i}(\tau^{t,k}_{\mathcal{N}^{\kappa}_{i},0})\big|v_{i,t}\big]v_{i,t}\Big|\mathcal{F}_{t}\Big]\notag\\
&+\frac{1}{\mu}\mathbb{E}\Big[\mathbb{E}\big[\big(\sigma^{-1}(\hat{p}^{t,k}_{i})-(\hat{r}_{i}(\tau^{t,k}_{\mathcal{N}^{\kappa}_{i},1})-\hat{r}_{i}(\tau^{t,k}_{\mathcal{N}^{\kappa}_{i},0}))\big)\big|v_{i,t}\big]\notag\\
&\times v_{i,t}\Big|\mathcal{F}_{t}\Big],\label{theequationofcorollary3-3}
\end{align}
where (\ref{theequationofcorollary3-1}) uses the law of total expectation and (\ref{theequationofcorollary3-2}) uses the fact that the {\color{blue}trajectory pairs are generated by the same policy pair and mutually independent}.
Substituting (\ref{theequationofcorollary3-3}) into (\ref{theequationofcorollary2}), we have
\begin{align}
&\big\|\mathbb{E}\big[\nabla_{\theta_{i}}\widehat{J}^{\mu}_{i}(\bm{\theta}_{t})-\hat{g}_{i,t}\big|\mathcal{F}_{t}\big]\big\|\notag\\
\leq&\Big\|\frac{1}{\mu}\mathbb{E}\Big[\mathbb{E}\big[\big(\widehat{J}_{i}(\bm{\theta}_{t}+\mu\bm{v}_{t})-\widehat{J}_{i}(\bm{\theta}_{t})-\hat{r}_{i}(\tau^{t,k}_{\mathcal{N}^{\kappa}_{i},1})\notag\\
&+\hat{r}_{i}(\tau^{t,k}_{\mathcal{N}^{\kappa}_{i},0})\big)\big|v_{i,t}\big]v_{i,t}\Big|\mathcal{F}_{t}\Big]-\frac{1}{\mu}\mathbb{E}\Big[\mathbb{E}\big[\big(\sigma^{-1}(\hat{p}^{t,k}_{i})\notag\\
&-(\hat{r}_{i}(\tau^{t,k}_{\mathcal{N}^{\kappa}_{i},1})-\hat{r}_{i}(\tau^{t,k}_{\mathcal{N}^{\kappa}_{i},0}))\big)\big|v_{i,t}\big]v_{i,t}\Big|\mathcal{F}_{t}\Big]\Big\|\label{theequationofcorollary4-1}\\
\leq&\frac{\sqrt{d_{\mathrm{tot}}}}{\mu}\Big(L_{\sigma}\sqrt{\frac{2\log{M}}{M}}+\frac{4R(1-\gamma^{H})}{(1-\gamma)M^{2}}\Big)\notag\\
\leq&\frac{2L_{\sigma}\sqrt{d_{\mathrm{tot}}}}{\mu}\sqrt{\frac{2\log{M}}{M}},\label{theequationofcorollary4-3}
\end{align}
where the second inequality follows from the facts that  $\mathbb{E}\big[\widehat{J}_{i}(\bm{\theta}_{t}+\mu\bm{v}_{t})-\hat{r}_{i}(\tau^{t,k}_{\mathcal{N}^{\kappa}_{i},1})\big]=\mathbb{E}\big[\widehat{J}_{i}(\bm{\theta}_{t})-\hat{r}_{i}(\tau^{t,k}_{\mathcal{N}^{\kappa}_{i},0})\big]=0$, the norm bound $\mathbb{E}[\|v_{i,t}\|]\leq\sqrt{d_{\mathrm{tot}}}$, and the estimation error result in (\ref{thelemmaofpreferenceestimation(i)}), and the last inequality follows from the selected $M\geq\max\{e, 2(\frac{R^{2}(1-\gamma^{H})^{2}}{(1-\gamma)^{2}L^{2}_{\sigma}})^{\frac{1}{3}}\}$.
\par
By substituting (\ref{theequationofcorollary4-3}) into (\ref{theequationofcorollary1}) and combining it with (\ref{equationoftheorem1-2-1}), we can finish the proof of part (i) of Theorem~\ref{thetheoremofpolicygradient}.
\par
(ii) By the definition of $\hat{g}_{i,t}$ in (\ref{policygradientestimation}), we define
\begin{align}\label{thetheorem2-1}
\mathrm{Var}_{i,t}=\mathbb{E}[\|\hat{g}_{i,t}\|^{2}|\mathcal{F}_{t}]
\end{align}
and have $\mathrm{Var}_{t}=\sum_{i=1}^{N}\mathrm{Var}_{i,t}$.
Define $\mathrm{Var}^{(1)}_{i,t}$, $\mathrm{Var}^{(2)}_{i,t}$, and $\mathrm{Var}^{(3)}_{i,t}$ as follows:
\begin{align}\label{thetheorem2-2}
\left\{
\begin{array}{ll}
\!\mathrm{Var}^{(1)}_{i,t}\!=\!\mathbb{E}\Big[\Big\|\frac{1}{K\mu}\sum_{k=1}^{K}\Big(\sigma^{-1}(\hat{p}^{t,k}_{i})-\big(\hat{r}_{i}(\tau^{t,k}_{\mathcal{N}^{\kappa}_{i},1})\\
\;\;\;\;\;\;\;\;\;\;\;\;-\hat{r}_{i}(\tau^{t,k}_{\mathcal{N}^{\kappa}_{i},0})\big)\Big)v_{i,t}\Big\|^{2}\Big|\mathcal{F}_{t}\Big]\\
\!\mathrm{Var}^{(2)}_{i,t}\!=\!\mathbb{E}\Big[\Big\|\frac{1}{\mu}\Big(\frac{1}{K}\sum_{k=1}^{K}\big(\hat{r}_{i}(\tau^{t,k}_{\mathcal{N}^{\kappa}_{i},1})-\hat{r}_{i}(\tau^{t,k}_{\mathcal{N}^{\kappa}_{i},0})\big)\\
\;\;\;\;\;\;\;\;\;\;\;\;-\big(\widehat{J}_{i}(\bm{\theta}_{t}+\mu\bm{v}_{t})-\widehat{J}_{i}(\bm{\theta}_{t})\big)\Big)v_{i,t}\Big\|^{2}\Big|\mathcal{F}_{t}\Big]\\
\!\mathrm{Var}^{(3)}_{i,t}\!=\!\mathbb{E}\Big[\Big\|\frac{1}{\mu}\big(\widehat{J}_{i}(\bm{\theta}_{t}+\mu\bm{v}_{t})-\widehat{J}_{i}(\bm{\theta}_{t})\big)v_{i,t}\Big\|^{2}\Big|\mathcal{F}_{t}\Big].
\end{array}
\right.
\end{align}
For $\mathrm{Var}_{i,t}$ in (\ref{thetheorem2-1}), we have
\begin{align}\label{thetheorem2-3}
\mathrm{Var}_{i,t}=&\mathbb{E}\Big[\Big\|\frac{1}{K\mu}\sum_{k=1}^{K}\sigma^{-1}(\hat{p}^{t,k}_{i})v_{i,t}\Big\|^{2}\Big|\mathcal{F}_{t}\Big]\notag\\
\leq&3\big(\mathrm{Var}^{(1)}_{i,t}+\mathrm{Var}^{(2)}_{i,t}+\mathrm{Var}^{(3)}_{i,t}\big),
\end{align}
where the inequality can be obtained by the fact that $(a+b+c)^{2}\leq3(a^{2}+b^{2}+c^{2})$ for all $a,b,c\in\mathbb{R}$.
\par
For the first term, $\mathrm{Var}^{(1)}_{i,t}$, which arises from the estimation error in human preferences, we have
\begin{align}
\mathrm{Var}^{(1)}_{i,t}
=&\frac{1}{K^{2}\mu^{2}}\mathbb{E}\Big[\Big|\sum_{k=1}^{K}\Big(\sigma^{-1}(\hat{p}^{t,k}_{i})-\big(\hat{r}_{i}(\tau^{t,k}_{\mathcal{N}^{\kappa}_{i},1})\notag\\
&-\hat{r}_{i}(\tau^{t,k}_{\mathcal{N}^{\kappa}_{i},0})\big)\Big)\Big|^{2}\times\|v_{i,t}\|^{2}\Big|\mathcal{F}_{t}\Big]\notag\\
\leq&\frac{1}{K^{2}\mu^{2}}\mathbb{E}\Big[\Big|\sum_{k=1}^{K}\Big(\sigma^{-1}(\hat{p}^{t,k}_{i})-\big(\hat{r}_{i}(\tau^{t,k}_{\mathcal{N}^{\kappa}_{i},1})\notag\\
&-\hat{r}_{i}(\tau^{t,k}_{\mathcal{N}^{\kappa}_{i},0})\big)\Big)\Big|^{4}\Big|\mathcal{F}_{t}\Big]^{\frac{1}{2}}\times\mathbb{E}\Big[\|v_{i,t}\|^{4}\Big|\mathcal{F}_{t}\Big]^{\frac{1}{2}}\label{thetheorem3-1-1}\\
\leq&\frac{\sqrt{d_{i}(d_{i}+2)}}{K^{2}\mu^{2}}\mathbb{E}\Big[K^{3}\sum_{k=1}^{K}\Big|\sigma^{-1}(\hat{p}^{t,k}_{i})-\big(\hat{r}_{i}(\tau^{t,k}_{\mathcal{N}^{\kappa}_{i},1})\notag\\
&-\hat{r}_{i}(\tau^{t,k}_{\mathcal{N}^{\kappa}_{i},0})\big)\Big|^{4}\Big|\mathcal{F}_{t}\Big]^{\frac{1}{2}}\label{thetheorem3-1-2}\\
\leq&\frac{\sqrt{d_{i}(d_{i}+2)}}{\mu^{2}}\Big(\frac{4L^{4}_{\sigma}(\log{M})^{2}}{M^{2}}\notag\\
&+\frac{256R^{4}(1-\gamma^{H})^{4}}{(1-\gamma)^{4}M^{2}}\Big)^{\frac{1}{2}}\label{thetheorem3-1-3}\\
\leq&\frac{2\sqrt{2d_{i}(d_{i}+2)}L^{2}_{\sigma}\log{M}}{\mu^{2}M},\label{thetheorem3-1-4}
\end{align}
where (\ref{thetheorem3-1-1}) uses the Cauchy-Schwarz inequality, (\ref{thetheorem3-1-2}) uses the facts that $\mathbb{E}[\|v_{i,t}\|^{4}]=d_{i}(d_{i}+2)$ and $\big(\sum_{k=1}^{K}x_{k}\big)^{4}\leq K^{3}\sum_{k=1}^{K}x_{k}^{4}$ for all $x_{k}\in\mathbb{R}$,
(\ref{thetheorem3-1-3})
follows from (\ref{thelemmaofpreferenceestimation(iii)}), and the last inequality achieves by the choosing $M\geq\exp\big(\frac{8R^{2}(1-\gamma^{H})^{2}}{(1-\gamma)^{2}L^{2}_{\sigma}}\big)$.
\par
For the second term, $\mathrm{Var}^{(2)}_{i,t}$, which comes from using the empirical rewards of agent $i$'s $\kappa$-hop neighbors  to the truncated objective $\widehat{J}_{i}(\bm{\theta})$ in (\ref{thetruncatedobjectivefunction}), we have
\begin{align}
\mathrm{Var}^{(2)}_{i,t}
=&\frac{1}{\mu^{2}K^{2}}\mathbb{E}\Big[\Big|\sum_{k=1}^{K}\Big(\big(\hat{r}_{i}(\tau^{t,k}_{\mathcal{N}^{\kappa}_{i},1})-\hat{r}_{i}(\tau^{t,k}_{\mathcal{N}^{\kappa}_{i},0})\big)\notag\\
&-\big(\widehat{J}_{i}(\bm{\theta}_{t}+\mu\bm{v}_{t})-\widehat{J}_{i}(\bm{\theta}_{t})\big)\Big)\Big|^{2}\|v_{i,t}\|^{2}\Big|\mathcal{F}_{t}\Big]\notag\\
\leq&\frac{\sqrt{d_{i}(d_{i}+2)}}{\mu^{2}K^{2}}\mathbb{E}\Big[\Big|\sum_{k=1}^{K}\Big(\big(\hat{r}_{i}(\tau^{t,k}_{\mathcal{N}^{\kappa}_{i},1})-\hat{r}_{i}(\tau^{t,k}_{\mathcal{N}^{\kappa}_{i},0})\big)\notag\\
&-\big(\widehat{J}_{i}(\bm{\theta}_{t}+\mu\bm{v}_{t})-\widehat{J}_{i}(\bm{\theta}_{t})\big)\Big)\Big|^{4}\Big|\mathcal{F}_{t}\Big]^{\frac{1}{2}},\label{thetheorem3-2-1}
\end{align}
where the inequality follows from the Cauchy-Schwarz inequality and the fact that $\mathbb{E}[\|v_{i,t}\|^{4}]=d_{i}(d_{i}+2)$.
Let $\xi_{i}^{t,k}=\big(\hat{r}_{i}(\tau^{t,k}_{\mathcal{N}^{\kappa}_{i},1})-\hat{r}_{i}(\tau^{t,k}_{\mathcal{N}^{\kappa}_{i},0})\big)-\big(\widehat{J}_{i}(\bm{\theta}_{t}+\mu\bm{v}_{t})-\widehat{J}_{i}(\bm{\theta}_{t})\big)$, (\ref{thetheorem3-2-1}) can be further written as
\begin{align}
\mathrm{Var}^{(2)}_{i,t}\leq&\frac{\sqrt{d_{i}(d_{i}+2)}}{\mu^{2}K^{2}}\mathbb{E}\Big[\Big(\sum_{k=1}^{K}\xi_{i}^{t,k}\Big)^{4}\Big|\mathcal{F}_{t}\Big]^{\frac{1}{2}}.\label{thetheorem3-2-2}
\end{align}
Recalling the definitions of (\ref{therewardoftrajectory}) and (\ref{thetruncatedobjectivefunction}), we have that
$\mathbb{E}[\xi_{i}^{t,k}]=0$, $|\xi_{i}^{t,k}|\leq\frac{4R(1-\gamma^{H})}{1-\gamma}$, and
$\mathrm{var}(\xi_{i}^{t,k})\leq\frac{16R^{2}(1-\gamma^{H})^{2}}{(1-\gamma)^{2}}$.
By using Lemma~\ref{lemmaofFourthmomentRosenthal-type}, we can get
\begin{align}\label{thetheorem3-2-3}
\mathbb{E}\Big[\Big(\sum_{k=1}^{K}\xi_{i}^{t,k}\Big)^{4}\Big|\mathcal{F}_{t}\Big]^{\frac{1}{2}}\leq&\frac{16\sqrt{3}KR^{2}(1-\gamma^{H})^{2}}{(1-\gamma)^{2}}\notag\\
&+\frac{16\sqrt{K}R^{2}(1-\gamma^{H})^{2}}{(1-\gamma)^{2}}\notag\\
\leq&\frac{44KR^{2}}{(1-\gamma)^{2}},
\end{align}
where the last inequality follows from $K\geq1$ and $0<\gamma<1$.
Substituting (\ref{thetheorem3-2-3}) into (\ref{thetheorem3-2-2}), we can have
\begin{align}\label{thetheorem3-2-4}
\mathrm{Var}^{(2)}_{i,t}\leq&\frac{44\sqrt{d_{i}(d_{i}+2)}R^{2}}{\mu^{2}(1-\gamma)^{2}K}.
\end{align}
\par
For the last term, $\mathrm{Var}^{(3)}_{i,t}$, we use (iv) of Lemma~\ref{thelemmaofperturbedobjectivefunction} and can get
\begin{align}
\mathrm{Var}^{(3)}_{i,t}
\leq&2d_{\mathrm{tot}}(d_{\mathrm{tot}}+2)\|\nabla_{\bm{\theta}}\widehat{J}_{i}(\bm{\theta}_{t})-\nabla_{\theta_{i}}J(\bm{\theta}_{t})\notag\\
&+\nabla_{\theta_{i}}J(\bm{\theta}_{t})\|^{2}+\frac{\mu^{2}L^{2}d_{\mathrm{tot}}(d_{\mathrm{tot}}+2)(d_{\mathrm{tot}}+4)}{2}\notag\\
\leq&4d_{\mathrm{tot}}(d_{\mathrm{tot}}+2)\|\nabla_{\theta_{i}}J(\bm{\theta}_{t})\|^{2}\notag\\
&+\frac{4B^{2}R^{2}\big((H+1)\gamma^{H}+2\gamma^{\kappa+1}\big)^{2}d_{\mathrm{tot}}(d_{\mathrm{tot}}+2)}{(1-\gamma)^{4}}\notag\\
&+\frac{\mu^{2}L^{2}d_{\mathrm{tot}}(d_{\mathrm{tot}}+2)(d_{\mathrm{tot}}+4)}{2},\label{thetheorem3-3-1}
\end{align}
where the inequality can be achieved by Theorem~\ref{lemmaoftruncatederror}.
\par
By substituting (\ref{thetheorem3-1-4}), (\ref{thetheorem3-2-4}), and (\ref{thetheorem3-3-1}) into (\ref{thetheorem2-3}), and utilizing the relation $\mathrm{Var}_{t} = \sum_{i=1}^{N} \mathrm{Var}_{i,t}$, the proof of Part (ii) is thus completed.
\end{proof}

\subsection{Proof of Theorem~\ref{thetheoremofconvergence}}\label{ProofofTheoremthetheoremofconvergence}
\begin{proof}
By the property of $L$-Lipschitz continuous of $\nabla_{\bm{\theta}}J(\bm{\theta})$ in  Assumption~\ref{theassumptionofgradientofobjectivefunction}, we have
\begin{align}
&J(\bm{\theta}_{t})-J(\bm{\theta}_{t+1})\notag\\
\leq&\langle-\nabla_{\bm{\theta}}J(\bm{\theta}_{t}),\bm{\theta}_{t+1}-\bm{\theta}_{t}\rangle+\frac{L}{2}\|\bm{\theta}_{t+1}-\bm{\theta}_{t}\|^{2}\notag\\
=&-\alpha\langle\nabla_{\bm{\theta}}J(\bm{\theta}_{t}),\bm{\hat{g}}_{t}\rangle+\frac{L\alpha^{2}}{2}\|\bm{\hat{g}}_{t}\|^{2}\label{finalthetheorem2-1}\\
=&-\alpha\|\nabla_{\bm{\theta}}J(\bm{\theta}_{t})\|^{2}+\alpha\langle\nabla_{\bm{\theta}}J(\bm{\theta}_{t}),\nabla_{\bm{\theta}}J(\bm{\theta}_{t})-\bm{\hat{g}}_{t}\rangle\notag\\
&+\frac{L\alpha^{2}}{2}\|\bm{\hat{g}}_{t}\|^{2},\label{finalthetheorem2-2}
\end{align}
where (\ref{finalthetheorem2-1}) follows from (\ref{updateofpolicyparameter}).
We take a conditional expectation of both sides of (\ref{finalthetheorem2-2}) over the $\sigma$-algebra $\mathcal{F}_{t}$ of iteration $t$ and obtain
\begin{align}
&\mathbb{E}[J(\bm{\theta}_{t})-J(\bm{\theta}_{t+1})|\mathcal{F}_{t}]\notag\\
\leq&-\alpha\|\nabla_{\bm{\theta}}J(\bm{\theta}_{t})\|^{2}+\alpha\mathrm{Bias}_{t}+\frac{L\alpha^{2}}{2}\mathrm{Var}_{t}\notag\\
\leq&-\alpha\|\nabla_{\bm{\theta}}J(\bm{\theta}_{t})\|^{2}+\alpha\big\|\nabla_{\bm{\theta}}J(\bm{\theta}_{t})\big\|\big(\mathrm{Bias}^{\mathrm{(i)}}_{t}+\mathrm{Bias}^{\mathrm{(ii)}}_{t}\notag\\
&+\mathrm{Bias}^{\mathrm{(iii)}}_{t}\big)+\frac{L\alpha^{2}}{2}\Big(12Nd_{\mathrm{tot}}(d_{\mathrm{tot}}+2)\|\nabla_{\bm{\theta}}J(\bm{\theta}_{t})\|^{2}\notag\\
&+\mathrm{Var}^{\mathrm{(ii)}}_{t}+\mathrm{Var}^{\mathrm{(iii)}}_{t}+\mathrm{Var}^{\mathrm{(iv)}}_{t}+\mathrm{Var}^{\mathrm{(v)}}_{t}\Big),\label{finalthetheorem3-1}
\end{align}
where the last inequality can be obtained by Theorem~\ref{thetheoremofpolicygradient}.
Since $\alpha=\frac{1}{12LNd_{\mathrm{tot}}(d_{\mathrm{tot}}+2)}$, we can further simplify (\ref{finalthetheorem3-1}) as
\begin{align}
&\mathbb{E}[J(\bm{\theta}_{t})-J(\bm{\theta}_{t+1})|\mathcal{F}_{t}]\notag\\
\leq&-\frac{\alpha}{2}\|\nabla_{\bm{\theta}}J(\bm{\theta}_{t})\|^{2}+\alpha\big\|\nabla_{\bm{\theta}}J(\bm{\theta}_{t})\big\|\big(\mathrm{Bias}^{\mathrm{(i)}}_{t}+\mathrm{Bias}^{\mathrm{(ii)}}_{t}\notag\\
&+\mathrm{Bias}^{\mathrm{(iii)}}_{t}\big)+\frac{L\alpha^{2}}{2}\big(\mathrm{Var}^{\mathrm{(ii)}}_{t}+\mathrm{Var}^{\mathrm{(iii)}}_{t}+\mathrm{Var}^{\mathrm{(iv)}}_{t}\notag\\
&+\mathrm{Var}^{\mathrm{(v)}}_{t}\big).\label{finalthetheorem3-2}
\end{align}
\par
Next, we will consider two different cases.
\par
1) When $\|\nabla_{\bm{\theta}}J(\bm{\theta}_{t})\|$ is large and satisfies $\|\nabla_{\bm{\theta}}J(\bm{\theta}_{t})\|\geq4\big(\mathrm{Bias}^{\mathrm{(i)}}_{t}+\mathrm{Bias}^{\mathrm{(ii)}}_{t}+\mathrm{Bias}^{\mathrm{(iii)}}_{t}\big)$, (\ref{finalthetheorem3-2}) can be rewritten as
\begin{align}
&\mathbb{E}[J(\bm{\theta}_{t})-J(\bm{\theta}_{t+1})|\mathcal{F}_{t}]\notag\\
\leq&-\frac{\alpha}{4}\|\nabla_{\bm{\theta}}J(\bm{\theta}_{t})\|^{2}+\frac{L\alpha^{2}}{2}\big(\mathrm{Var}^{\mathrm{(ii)}}_{t}+\mathrm{Var}^{\mathrm{(iii)}}_{t}+\mathrm{Var}^{\mathrm{(iv)}}_{t}\notag\\
&+\mathrm{Var}^{\mathrm{(v)}}_{t}\big).\label{finalthetheorem3-3}
\end{align}
\par
2) On the other hand, the policy gradient $\|\nabla_{\bm{\theta}}J(\bm{\theta}_{t})\|\leq4\big(\mathrm{Bias}^{\mathrm{(i)}}_{t}+\mathrm{Bias}^{\mathrm{(ii)}}_{t}+\mathrm{Bias}^{\mathrm{(iii)}}_{t}\big)$, (\ref{finalthetheorem3-2}) can further write as
\begin{align}
&\mathbb{E}[J(\bm{\theta}_{t})-J(\bm{\theta}_{t+1})|\mathcal{F}_{t}]\notag\\
\leq&-\frac{\alpha}{2}\|\nabla_{\bm{\theta}}J(\bm{\theta}_{t})\|^{2}+4\alpha\big(\mathrm{Bias}^{\mathrm{(i)}}_{t}+\mathrm{Bias}^{\mathrm{(ii)}}_{t}+\mathrm{Bias}^{\mathrm{(iii)}}_{t}\big)^{2}\notag\\
&+\frac{L\alpha^{2}}{2}\big(\mathrm{Var}^{\mathrm{(ii)}}_{t}+\mathrm{Var}^{\mathrm{(iii)}}_{t}+\mathrm{Var}^{\mathrm{(iv)}}_{t}+\mathrm{Var}^{\mathrm{(v)}}_{t}\big).\label{finalthetheorem3-4}
\end{align}
\par
By taking the maximum of both bounds given in (\ref{finalthetheorem3-3}) and (\ref{finalthetheorem3-4}) and subsequently computing the expectation, we obtain
\begin{align}
&\mathbb{E}[J(\bm{\theta}_{t})-J(\bm{\theta}_{t+1})]\notag\\
\leq&-\frac{\alpha}{4}\mathbb{E}[\|\nabla_{\bm{\theta}}J(\bm{\theta}_{t})\|^{2}]+4\alpha\big(\mathrm{Bias}^{\mathrm{(i)}}_{t}+\mathrm{Bias}^{\mathrm{(ii)}}_{t}+\mathrm{Bias}^{\mathrm{(iii)}}_{t}\big)^{2}\notag\\
&+\frac{L\alpha^{2}}{2}\big(\mathrm{Var}^{\mathrm{(ii)}}_{t}+\mathrm{Var}^{\mathrm{(iii)}}_{t}+\mathrm{Var}^{\mathrm{(iv)}}_{t}+\mathrm{Var}^{\mathrm{(v)}}_{t}\big)\notag\\
\leq&-\frac{\alpha}{4}\mathbb{E}[\|\nabla_{\bm{\theta}}J(\bm{\theta}_{t})\|^{2}]\notag\\
&+\frac{13\alpha B^{2}R^{2}N\big((H+1)\gamma^{H}+2\gamma^{\kappa+1}\big)^{2}}{(1-\gamma)^{4}}\notag\\
&+\Big(12\alpha L^{2}Nd_{\mathrm{tot}}+\frac{\alpha L^{2}(d_{\mathrm{tot}}+4)}{16}\Big)\mu^{2}\notag\\
&\!+\!\Big(\frac{6\alpha R^{2}}{(1-\gamma)^{2}K}\!+\!\frac{96\alpha L^{2}_{\sigma}Nd_{\mathrm{tot}}\log{M}}{M}\!+\!\frac{\alpha L^{2}_{\sigma}\log{M}}{4M}\Big)\frac{1}{\mu^{2}}\label{finalthetheorem3-5}\\
\leq&-\frac{\alpha}{4}\mathbb{E}[\|\nabla_{\bm{\theta}}J(\bm{\theta}_{t})\|^{2}]\notag\\
&\!+\!\frac{13\alpha B^{2}R^{2}N\big((H\!+\!1)\gamma^{H}\!+\!2\gamma^{\kappa\!+\!1}\big)^{2}}{(1\!-\!\gamma)^{4}}\!+\!13\alpha L^{2}Nd_{\mathrm{tot}}\mu^{2}\notag\\
&+\Big(\frac{6\alpha R^{2}}{(1-\gamma)^{2}K}+\frac{97\alpha L^{2}_{\sigma}Nd_{\mathrm{tot}}\log{M}}{M}\Big)\frac{1}{\mu^{2}},\label{finalthetheorem3-6}
\end{align}
where (\ref{finalthetheorem3-5}) follows from Theorem~\ref{thetheoremofpolicygradient} and (\ref{finalthetheorem3-6}) comes from $d_{\mathrm{tot}}\geq1$.
\par
Since  $\mu^{2}=\max\big\{
\frac{R}{(1-\gamma)L\sqrt{KNd_{\mathrm{tot}}}},\frac{3L_{\sigma}}{L}\sqrt{\frac{\log{M}}{M}}\big\}$, we have $13\alpha L^{2}Nd_{\mathrm{tot}}\mu^{2}\geq\frac{6\alpha R^{2}}{K(1-\gamma)^{2}\mu^{2}}$
and $13\alpha L^{2}Nd_{\mathrm{tot}}\mu^{2}\geq\frac{97\alpha L^{2}_{\sigma}Nd_{\mathrm{tot}}\log{M}}{M\mu^{2}}$.
By using these facts, (\ref{finalthetheorem3-6}) can further simplified to
\begin{align}
&\mathbb{E}[J(\bm{\theta}_{t})-J(\bm{\theta}_{t+1})]\notag\\
\leq&-\frac{\alpha}{4}\mathbb{E}[\|\nabla_{\bm{\theta}}J(\bm{\theta}_{t})\|^{2}]\notag\\
&+\!\frac{13\alpha B^{2}R^{2}N\big((H\!+\!1)\gamma^{H}\!+\!2\gamma^{\kappa\!+\!1}\big)^{2}}{(1\!-\!\gamma)^{4}}\!+\!39\alpha L^{2}Nd_{\mathrm{tot}}\mu^{2}.\label{finalthetheorem3-7}
\end{align}
By performing a telescoping summation for (\ref{finalthetheorem3-7}) and rearranging the terms, we obtain the following result:
\begin{align}
&\frac{1}{T}\sum_{t=0}^{T-1}\mathbb{E}[\|\nabla_{\bm{\theta}}J(\bm{\theta}_{t})\|^{2}]\notag\\
\leq&\frac{4\big(\mathbb{E}[J(\bm{\theta}_{0})]-\mathbb{E}[J(\bm{\theta}_{T})]\big)}{T\alpha}\notag\\
&+\frac{52B^{2}R^{2}N\big((H+1)\gamma^{H}+2\gamma^{\kappa+1}\big)^{2}}{(1-\gamma)^{4}}+156L^{2}Nd_{\mathrm{tot}}\mu^{2}\notag\\
=&\frac{96LNRd_{\mathrm{tot}}(d_{\mathrm{tot}}+2)}{(1-\gamma)T}\notag\\
&+\frac{52B^{2}R^{2}N\big((H+1)\gamma^{H}+2\gamma^{\kappa+1}\big)^{2}}{(1-\gamma)^{4}}+156L^{2}Nd_{\mathrm{tot}}\mu^{2}\notag\\
=&\mathcal{O}\Big(\frac{LNRd_{\mathrm{tot}}(d_{\mathrm{tot}}+2)}{(1-\gamma)T}\notag\\
&+\frac{B^{2}R^{2}N\big((H+1)\gamma^{H}+2\gamma^{\kappa+1}\big)^{2}}{(1-\gamma)^{4}}\notag\\
&+\max\Big\{\frac{LR\sqrt{Nd_{\mathrm{tot}}}}{(1-\gamma)\sqrt{K}},LNd_{\mathrm{tot}}L_{\sigma}\sqrt{\frac{\log{M}}{M}}\Big\}\Big),\notag
\end{align}
where the first inequality can be obtained by $\frac{-R}{1-\gamma}\leq J(\bm{\theta})\leq\frac{R}{1-\gamma}$ and $\alpha=\frac{1}{12LNd_{\mathrm{tot}}(d_{\mathrm{tot}}+2)}$, and the last equality uses $\mu^{2}=\max\big\{\frac{R}{(1-\gamma)L\sqrt{KNd_{\mathrm{tot}}}},\frac{3L_{\sigma}}{L}\sqrt{\frac{\log{M}}{M}}\big\}$.
\end{proof}

%

\bibliographystyle{plain}        

\begin{thebibliography}{99}     





%
%
%
%
%

\bibitem[Dai \emph{et al.}, 2020]{DaiTII2020}
Dai, P., Yu, W., Wen, G., \& Baldi, S. (2020).
Distributed reinforcement learning algorithm for dynamic economic dispatch with unknown generation cost functions.
\emph{IEEE Transactions on Industrial Informatics}, 16(4), 2258-2267.

\bibitem[Li \emph{et al.}, 2020]{LiTCYB2019}
Li, F., Qin, J., \& Zheng, W. (2020).
Distributed $Q$-learning-based online optimization algorithm for unit commitment and dispatch in smart grid.
\emph{IEEE Transactions on Cybernetics}, 50(9), 4146-4156.

\bibitem[Dai \emph{et al.}, 2022]{DaiTCYB2022}
Dai, P., Yu, W., \& Chen, D. (2022).
Distributed Q-learning algorithm for dynamic resource allocation with unknown objective functions and application to microgrid.
\emph{IEEE Transactions on Cybernetics}, 52(11), 12340-12350.

\bibitem[Chu \emph{et al.}, 2020]{ChuTITS2020}
Chu, T., Wang, J., Codec\`{a}, L., \& Li, Z. (2020).
Multi-agent deep reinforcement learning for large-scale traffic signal control.
\emph{IEEE Transactions on Intelligent Transportation Systems},
21(3), 1086-1095.

\bibitem[Wang \emph{et al.}, 2021]{WangTCYB2021}
Wang, X., Ke, L., Qiao, Z., \& Chai, X. (2021).
Large-scale traffic signal control using a novel multiagent reinforcement learning.
\emph{IEEE Transactions on Cybernetics},
51(1), 174-187.

\bibitem[Dai \emph{et al.}, 2024]{DaiTII2024}
Dai, P., Yu, W., Wang, H., \& Jiang, J. (2024).
Applications in traffic signal control: a distributed policy gradient decomposition algorithm.
\emph{IEEE Transactions on Industrial Informatics},
20(2), 2762-2775.



\bibitem[Tan~\emph{et al.}, 2021]{TanTWC2021}
Tan, J., Liang, Y.-C., Zhang, L., \& Feng, G. (2021).
Deep reinforcement learning for joint channel selection and power control in D2D networks.
\emph{IEEE Transactions on Wireless Communications}, 20(2), 1363-1378.

\bibitem[Meng~\emph{et al.}, 2020]{MengTWC2020}
Meng, F., Chen, P., Wu, L., \& Cheng, J. (2020).
Power allocation in multiuser cellular networks: Deep reinforcement learning approaches.
\emph{IEEE Transactions on Wireless Communications}, 19(10), 6255-6267.





\bibitem[Afsar~\emph{et al.}, 2022]{AfsarSurvey2022}
Afsar, M. M., Crump, T., \& Far, B. (2022)
Reinforcement learning based recommender systems: a survey. \emph{ACM Computing Surveys}, 55(7), 1-38.


\bibitem[Lin~\emph{et al.}, 2024]{LinSurvey2022}
Lin, Y., Liu, Y., Lin, F., Zou, L., Wu, P., \& Zeng, W. (2024).
A survey on reinforcement learning for recommender systems.
\emph{IEEE Transactions on Neural Networks and Learning Systems},
35(10), 13164-13184.

\bibitem[Sutton \& Barto, 1998]{Sutton1998}
Sutton, R. S., \& Barto, A. G. (1998).
\emph{Reinforcement Learning: An Introduction}.
Cambridge, MA, USA: MIT Press.

\bibitem[Mnih~\emph{et al.}, 2015]{Minhnature2015}
Mnih, V., Kavukcuoglu, K., Silver, D., Rusu, A. A., Veness, J.,
Bellemare, M. G., Graves, A., Riedmiller, M., Fidjeland, A. K.,
Ostrovski, G., Petersen, S., Beattie, C., Sadik, A., Antonoglou, I., King, H., Kumaran, D., Wierstra, D., Legg, S., \& Hassabis, D. (2015).
Human-level control through deep reinforcement learning. \emph{Nature}, 518(7540), 529-533.


\bibitem[Sha \emph{et al.}, 2022]{Sha2022}
Sha, X., Zhang, J., You, K., Zhang, K., \& Ba\c{s}ar, T. (2022).
Fully asynchronous policy evaluation in distributed reinforcement
learning over networks.
\emph{Automatica}, 136, 110092.


%


\bibitem[Christiano~\emph{et al.}, 2017]{Christiano2017}
Christiano, P. F., Leike, J., Brown, T. B., Martic, M., Legg, S., \& Amodei, D. (2017).
Deep reinforcement learning from human preferences.
In \emph{Advances in Neural Information Processing Systems}, pages 4302-4310.

\bibitem[Wirth~\emph{et al.}, 2017]{Wirth2017}
{\color{blue}Wirth, C., Akrour, R., Neumann, G.,
\& F\"{u}rnkranz, J. (2017).
A survey of preference-based reinforcement learning methods. \emph{Journal of Machine Learning Research}, 18(136), 1-46.}




%
%

\bibitem[Bengs~\emph{et al.}, 2021]{Bengs2021}
Bengs, V., Busa-Fekete, R., Mesaoudi-Paul, A. El, \& H\"{u}llermeier, E. (2021).
Preference-based online learning with dueling bandits: a survey. \emph{Journal of Machine Learning Research}, 22(7), 1-108.

\bibitem[Schulman~\emph{et al.}, 2017]{Schulman2017}
Schulman, J., Wolski, F., Dhariwal, P., Radford, A., \& Klimov, O. (2017).
Proximal policy optimization algorithms. \emph{arXiv preprint arXiv:1707.06347}.

\bibitem[Rafailov~\emph{et al.}, 2023]{Rafailov2023}
Rafailov, R., Sharma, A., Mitchell, E., Manning, C. D., Ermon, S., \& Finn, C. (2023).
Direct preference optimization: your language model is secretly a reward model.
In \emph{Advances in Neural Information Processing Systems}, pages~53728-53741.



\bibitem[Zhang \& Ying, 2024]{Zhangarxiv2024}
Zhang, Q. \& Ying L. (2024).
Zeroth-order policy gradient for reinforcement learning from human feedback without reward inference. \emph{arXiv preprint arXiv:2409.17401}.



\bibitem[Qu \emph{et al.}, 2020a]{QuCLDC2020}
Qu, G., Wierman, A., \& Li, N. (2020).
Scalable reinforcement learning of localized policies for multi-agent networked systems.
In \emph{Proceedings of the Conference on Learning for Dynamics and Control}, pages 256-266.

\bibitem[Qu \emph{et al.}, 2020b]{QuNIPS2020}
Qu, G., Lin, Y., Wierman, A., \& Li, N. (2020).
Scalable multi-agent reinforcement learning for networked systems with average reward.
In \emph{Advances in Neural Information Processing Systems},
pages 2074-2086.


\bibitem[Huang~\emph{et al.}, 2024]{Huang2024}
{\color{blue}Huang. C., Zang, W., Pinciroli, C., Li, Z. J., Banerjee, T., Su, L., \& Liu, R. (2024).
Reactive multi-robot navigation in outdsoor environments through uncertainty-aware active learning of human preference landscape
\emph{arXiv preprint arXiv:2409.16577}.}




\bibitem[Zhao~\emph{et al.}, 2026]{Zhao2026}
{\color{blue}Zhao, C., Cahill, V., \& Dusparic, I. (2026).
Balancing multiple objectives in urban traffic control with reinforcement learning from AI feedback.
\emph{arXiv preprint arXiv:2602.20728}.}

\bibitem[Zhang \emph{et al.}, 2018]{Zhangkaiqing2018}
{\color{blue}Zhang, K., Yang, Z., Liu, H., Zhang, T., \& Ba\c{s}ar, T. (2018).
Fully decentralized multi-agent reinforcement learning with networked agents.
In \emph{Proceedings of the International Conference on Machine Learning}, pages 5872-5881.}


\bibitem[Dai \emph{et al.}, 2025]{DaiTAC2025}
{\color{blue}Dai, P., Mo, Y., Yu, W., \& Ren, W. (2025).
Distributed neural policy gradient algorithm for global convergence of networked multi-agent reinforcement learning.
\emph{IEEE Transactions on Automatic Control}, 70(11), 7109-7124.}




%
%


\bibitem[Sutton \emph{et al.}, 2000]{Sutton2000}
Sutton, R. S., McAllester, D. A., Singh, S. P., \& Mansour, Y. (2000).
Policy gradient methods for reinforcement learning with function approximation.
In \emph{Advances in Neural Information Processing Systems},
pages 1057-1063.





\bibitem[Train, 2009]{Train2009}
Train, K. E. (2009).
\emph{Discrete Choice Methods with Simulation}.
Cambridge, UK: Cambridge University Press.

\bibitem[Greene, 2010]{Greene2010}
Greene, W. H. (2010).
\emph{Modeling Ordered Choices: A Primer}.
New York, NY, USA: Cambridge University Press, 2010.



\bibitem[Du \emph{et al.}, 2024]{Duyihan2024}
{\color{blue}Du, Y., Winnicki, A., Dalal, G., Mannor, S., \& Srikant, R. (2024).
Exploration-driven policy optimization in rlhf: Theoretical insights on efficient data utilization.
\emph{arXiv preprint arXiv:2402.10342}.}







\bibitem[Lin \emph{et al.}, 2021]{LinNIPS2021}
Lin, Y., Qu, G., Huang, L., \& Wierman, A. (2021).
Multi-agent reinforcement learning in stochastic networked systems.
In \emph{Advances in Neural Information Processing Systems},
pages~7825-7837.

\bibitem[Zhou \emph{et al.}, 2023]{Zhou2023UAI}
Zhou, Z., Chen, Z., Lin, Y., \& Wierman, A. (2023).
Convergence rates for localized actor-critic in networked markov potential games.
In \emph{Proceedings of the Conference on Uncertainty in Artificial Intelligence}, pages~2563-2573.

\bibitem[Zhang \emph{et al.}, 2022]{Zhangrunyu2022NIPS}
Zhang, R., Mei, J., Dai, B., Schuurmans, D., \& Li, N. (2022).
On the global convergence rates of decentralized softmax gradient play in markov potential games.
In \emph{Advances in Neural Information Processing Systems},
pages~1923-1935.


\bibitem[Ying \emph{et al.}, 2023]{Ying2023NIPS}
Ying, D., Zhang, Y., Ding, Y., Koppel, A., \& Lavaei, J. (2023).
Scalable primal-dual actor-critic method for safe multi-agent rl with general utilities.
In \emph{Advances in Neural Information Processing Systems},
pages 36524-36539.


\bibitem[Li \emph{et al.}, 2025]{Li2025ICLR}
Li, W., Liu, J., \& Wei, K. (2025).
$\phi$-update: a class of policy update methods with policy convergence guarantee.
In \emph{Proceedings of the International Conference on Learning Representations}.


\bibitem[Kim \emph{et al.}, 2024]{Kim2024arxiv}
Kim, D., Lee, J., Park, J., \& Seo, M. (2024).
How language models extrapolate outside the training data: a case study in textualized gridworld.
\emph{arXiv preprint arXiv:2406.15275}.

\bibitem[Rosenthal, 1970]{Rosenthal1970}
Rosenthal, H. P. (1970).
On the subspaces of $L_{p}$ ($p>2$) spanned by sequences of independent random variables. \emph{Israel Journal of Mathematics}, 8(3), 273-303.













%
%





%






\end{thebibliography}

\end{document}